\newif\ifnotes
\newif\iffocs
  \crefname{step}{Step}{Steps}
  \crefname{claim}{Claim}{Claims}
  \setlist[itemize]{leftmargin=*}
  \setlist[enumerate]{leftmargin=*}
\newtheorem{theorem}{Theorem}[section]
\newtheorem{proposition}[theorem]{Proposition}
\newtheorem{claim}[theorem]{Claim}
\newtheorem{lemma}[theorem]{Lemma}
\newtheorem{corollary}[theorem]{Corollary}
\newtheorem{remark}[theorem]{Remark}
\theoremstyle{definition}
\newtheorem{definition}[theorem]{Definition}
\newtheorem{construction}[theorem]{Construction}
\theoremstyle{remark}
\newcommand{\doclearpage}{%
  \iffocs\else\clearpage\fi
}
\newcommand{\FormatAuthor}[3]{
  \begin{tabular}{c}
  #1 \\ {\small\texttt{#2}} \\ {\small #3}
  \end{tabular}
}
\newcommand{\keywords}[1]{\bigskip\par\noindent{\footnotesize\textbf{Keywords\/}: #1}}
\def\bbC{{\mathbb C}}
\def\bbN{{\mathbb N}}
\def\bbR{{\mathbb R}}
\newcommand{\RegA}{\mathcal{A}}
\newcommand{\RegB}{\mathcal{B}}
\newcommand{\RegH}{\mathcal{H}}
\newcommand{\RegI}{\mathcal{I}}
\newcommand{\RegK}{\mathcal{K}}
\newcommand{\RegL}{\mathcal{L}}
\newcommand{\RegO}{\mathcal{O}}
\newcommand{\RegR}{\mathcal{R}}
\newcommand{\RegS}{\mathcal{S}}
\newcommand{\RegV}{\mathcal{V}}
\newcommand{\RegW}{\mathcal{W}}
\newcommand{\RegX}{\mathcal{X}}
\newcommand{\RegY}{\mathcal{Y}}
\newcommand{\RegZ}{\mathcal{Z}}
\newcommand{\defemph}[1]{\textbf{#1}}
\newcommand{\secp}{\lambda}
\newcommand{\eqdef}{\coloneqq}
\newcommand{\Naturals}{\mathbb{N}}
\newcommand{\Bits}{\{0,1\}}
\DeclareMathOperator*{\Expectation}{\mathbb{E}}
\newcommand{\Bin}{\mathsf{Bin}}
\newcommand{\ChernoffN}{n}
\newcommand{\NP}{\mathsf{NP}}
\newcommand{\QMA}{\mathsf{QMA}}
\def\poly{{\rm poly}}
\def\negl{{\rm negl}}
\newcommand{\Malicious}[1]{\tilde{#1}}
\let\oldketbra\ketbra
\renewcommand{\ketbra}[1]{\oldketbra*{#1}}
\newcommand{\contra}[1]{#1^{\dagger}}
\newcommand{\CProj}{\mathsf{CProj}}
\newcommand{\ConProj}{\Game}
\newcommand{\ApproxEig}{\mathsf{ValEst}}
\newcommand{\Id}{\mathbf{I}}
\newcommand{\Meas}[1]{\mathsf{M}_{#1}}
\newcommand{\BProj}[1]{\Pi_{#1}}
\newcommand{\BMeas}[1]{\left( #1, \Id - #1 \right)}
\newcommand{\SProj}[2][]{\Pi_{#2}^{#1}}
\newcommand{\Hermitians}[1]{\mathbf{S}(#1)}
\newcommand{\Projector}{\Pi}
\newcommand{\Measurement}{\mathsf{M}}
\newcommand{\ProjMeasurement}{\mathsf{P}}
\newcommand{\MeasA}{\mathsf{A}}
\newcommand{\MeasB}{\mathsf{B}}
\newcommand{\MeasC}{\mathsf{P}}
\newcommand{\DMatrix}{\bm{\rho}}
\newcommand{\DMatrixB}{\bm{\sigma}}
\newcommand{\CPTPUnitary}{U_{T}}
\newcommand{\NumAlternations}{T}
\newcommand{\Subspace}{\RegS}
\newcommand{\Fractions}[1]{Q_{#1}}
\DeclareMathOperator{\spanset}{span}
\DeclareMathOperator{\image}{image}
\newcommand{\MeasUnitary}{U_{\Measurement}}
\newcommand{\UniState}[1]{\ket{+_{#1}}}
\newcommand{\UniProj}[1]{\ketbra{+_{#1}}}
\newcommand{\Jor}{\mathrm{Jor}}
\newcommand{\MeasJor}{\Meas{\Jor}}
\newcommand{\JorSymb}[3]{#1_{#2,#3}}
\newcommand{\JorKetA}[2]{\ket{\JorSymb{v}{#1}{#2}^{\MeasA}}}
\newcommand{\JorKetB}[2]{\ket{\JorSymb{v}{#1}{#2}^{\MeasB}}}
\newcommand{\JorBraA}[2]{\bra{\JorSymb{v}{#1}{#2}^{\MeasA}}}
\newcommand{\JorBraKetAB}[2]{\braket{\JorSymb{v}{#1}{#2}^{\MeasA} | \JorSymb{v}{#1}{#2}^{\MeasB}}}
\newcommand{\JorKetBraA}[2]{\ketbra{\JorSymb{v}{#1}{#2}^{\MeasA}}}
\newcommand{\JorKetBraB}[2]{\ketbra{\JorSymb{v}{#1}{#2}^{\MeasB}}}
\newcommand{\JorProjA}[2]{\JorKetBraA{#1}{#2}}
\newcommand{\JorProjB}[2]{\JorKetBraB{#1}{#2}}
\newcommand{\Hybrid}{\mathbf{H}}
\newcommand{\Adversary}{\mathsf{Adv}}
\newcommand{\ProverState}{\mathsf{state}}
\newcommand{\MixM}{\mathsf{MixM}}
\newcommand{\Test}{\mathsf{Equals}_{\ket{\psi}}}
\newcommand{\MWDist}{\mathsf{MWDist}}
\newcommand{\MWDistFull}[2]{\MWDist(#2,#1)}
\newcommand{\TestRep}{\mathsf{Test}_{\Advantage}}
\newcommand{\ApproxTest}{\mathsf{ApproxTest}_{\Advantage,t}}
\newcommand{\CRHF}{\mathsf{CRHF}}
\newcommand{\Merkle}{\mathsf{Merkle}}
\newcommand{\MerkleCommit}{\Merkle.\mathsf{Commit}}
\newcommand{\GetKilian}[2]{\mathsf{Kilian}[#1,#2]}
\newcommand{\Language}{\mathcal{L}}
\newcommand{\Relation}{\mathfrak{R}}
\newcommand{\Instance}{x}
\newcommand{\Witness}{w}
\newcommand{\PCPSystem}{\mathsf{PCP}}
\newcommand{\PCPProver}{\mathbf{P}_\PCPSystem}
\newcommand{\PCPVerifier}{\mathbf{V}_\PCPSystem}
\newcommand{\PCPExtractor}{\mathbf{E}}
\newcommand{\PCPTuple}{(\PCPProver,\PCPVerifier)}
\newcommand{\ProofLength}{\ell}
\newcommand{\PCPError}{\varepsilon_{\scriptscriptstyle\PCPSystem}}
\newcommand{\PCPQuery}{\mathsf{qc}}
\newcommand{\PCPRandComplexity}{\mathsf{rc}}
\newcommand{\PCPKnowledge}{\kappa_{\scriptscriptstyle\PCPSystem}}
\newcommand{\PCPProof}{\pi}
\newcommand{\Alphabet}{\Sigma}
\newcommand{\PCPVRandomness}{r}
\newcommand{\PCPQuerySet}{Q}
\newcommand{\ARGSystem}{\mathsf{ARG}}
\newcommand{\ARGProver}{P}
\newcommand{\ARGVerifier}{V}
\newcommand{\ARGExtractor}{E}
\newcommand{\SSExtractor}{\mathsf{Ext}}
\newcommand{\Interact}[2]{\langle #1, #2 \rangle}
\newcommand{\ProtocolCollapseExp}{\mathsf{CollapseExp}}
\newcommand{\NumRounds}{m}
\newcommand{\SoundnessError}{s}
\newcommand{\KnowledgeError}{\kappa}
\newcommand{\UPrvRound}[1]{U^{(#1)}}
\newcommand{\RegChal}{\RegR}
\newcommand{\RegInt}{\RegI}
\newcommand{\RegResp}{\RegZ}
\newcommand{\Response}{z}
\newcommand{\Transcript}{\tau}
\newcommand{\Kilian}{\mathsf{Kilian}}
\newcommand{\WIN}[3]{\mathsf{win}[#1,#2](#3)}
\newcommand{\LastMsg}{z}
\newcommand{\VCScheme}{\mathsf{VC}}
\newcommand{\VCTuple}{(\mathsf{Gen},\mathsf{Commit},\mathsf{Open},\mathsf{Verify})}
\newcommand{\VCGen}{\VCScheme.\mathsf{Gen}}
\newcommand{\VCCommit}{\VCScheme.\mathsf{Commit}}
\newcommand{\VCOpen}{\VCScheme.\mathsf{Open}}
\newcommand{\VCVerify}{\VCScheme.\mathsf{Verify}}
\newcommand{\ck}{\mathsf{ck}}
\newcommand{\veclen}{\ell}
\newcommand{\Message}{m}
\newcommand{\VCcm}{\mathsf{cm}}
\newcommand{\VCaux}{\mathsf{aux}}
\newcommand{\VCauth}{\mathsf{pf}}
\newcommand{\VCQuerySet}{Q}
\newcommand{\VCval}{v}
\newcommand{\VCCollapsingExp}[5]{\mathsf{VCCollapseExp}(#1,#2,#3,#4,#5)}
\newcommand{\GetMerkle}[1]{\mathsf{Merkle}[#1]}
\newcommand{\MRoot}{\mathsf{rt}}
\newcommand{\MTree}{\mathsf{tr}}
\newcommand{\MPath}{\mathsf{path}}
\newcommand{\MHeight}{d}
\newcommand{\MTUnitary}{U}
\newcommand{\HashFamily}{\mathcal{H}}
\newcommand{\HashDistribution}{H}
\newcommand{\HashFunction}{h}
\newcommand{\hinlen}{n}
\newcommand{\houtlen}{\ell}
\newcommand{\HCollapsingExp}[3]{\mathsf{HCollapseExp}(#1,#2,#3)}
\newcommand{\SuccProb}{\eta}
\newcommand{\RWLoss}{\eta_0}
\newcommand{\Repair}{\mathsf{Repair}}
\newcommand{\RepairProb}{\mathsf{ValRepair}}
\newcommand{\Advantage}{\varepsilon}
\newcommand{\AEigReps}{t}
\newcommand{\RSet}{R}
\newcommand{\ZSet}{Z}
\newcommand{\PEst}{\tilde{p}}
\newcommand{\Pred}{f}
\renewcommand{\Game}{\mathcal{G}}
\newcommand{\Value}[1]{\omega_{#1}}
\newcommand{\StrategyMap}{S}
\begin{document}
\title{Post-Quantum Succinct Arguments:\\
Breaking the Quantum Rewinding Barrier}
\author{
  \begin{tabular}[h!]{cc}
      \FormatAuthor{Alessandro Chiesa}{alexch@berkeley.edu}{UC Berkeley}
    & \FormatAuthor{Fermi Ma}{fermima@alum.mit.edu}{Princeton and NTT Research}
  \end{tabular}\\
  \begin{tabular}[h!]{cc}
    \\
      \FormatAuthor{Nicholas Spooner}{nspooner@bu.edu}{Boston University}
    & \FormatAuthor{Mark Zhandry}{mzhandry@gmail.com}{Princeton and NTT Research} 
  \end{tabular}
}
\date{\today}
\maketitle

\begin{abstract}

We prove that Kilian’s four-message succinct argument system is post-quantum secure in the standard model when instantiated with any probabilistically checkable proof and any collapsing hash function (which in turn exist based on the post-quantum hardness of Learning with Errors). This yields the first post-quantum succinct argument system from any falsifiable assumption.

At the heart of our proof is a new quantum rewinding procedure that enables a reduction to repeatedly query a quantum adversary for accepting transcripts as many times as desired. Prior techniques were limited to a \emph{constant} number of accepting transcripts.

\keywords{succinct arguments; post-quantum cryptography; quantum rewinding}

\end{abstract}

\iffocs\else
\clearpage
\setcounter{tocdepth}{2}
\tableofcontents
\clearpage
\fi

\doclearpage
\section{Introduction}
\label{sec:intro}

Quantum computers pose a growing threat to cryptography. Fully realized, quantum computers would enable an attacker to break the computational assumptions underlying many of today's public-key cryptosystems \cite{Shor94}. Fortunately, a number of plausibly \emph{quantum-secure} computational assumptions have emerged (e.g., lattice assumptions \cite{Regev05}) providing a foundation for secure cryptography in a post-quantum era. But post-quantum cryptography requires more than quantum-safe assumptions: it also needs \emph{security reductions} compatible with quantum attackers. While some classical security reductions directly translate to the quantum setting, many other security reductions do not translate because they are not compatible with quantum attackers.

Kilian's protocol \cite{Kilian92} is a fundamental result in cryptography for which no security reduction compatible with quantum attackers is known. Kilian's protocol is the canonical construction of a \emph{succinct argument}: it uses a collision-resistant hash function to transform any probabilistically checkable proof (PCP) into an interactive protocol that achieves an exponential improvement in communication complexity over just sending the PCP. This comes at the cost of \emph{computational} soundness, i.e., fooling the verification procedure of the protocol is intractable, not impossible. The security reduction against a classical attacker is via a \emph{rewinding argument}: the attacker's state is saved midway through the protocol execution, and the attacker is run from this state many times to obtain many (succinct) protocol executions, from which the (long) PCP string can be extracted.

Alarmingly, Kilian's security reduction completely falls apart if the attacker has a quantum computer! The reduction has access to only a single copy of the attacker's state, due to the \emph{no-cloning theorem}. Moreover, since quantum measurements are \emph{destructive}, any attempt to measure the attacker's response may irreversibly damage the attacker's state, potentially rendering it useless.

Translating rewinding-based security reductions to the quantum setting has proved difficult (see e.g.,~\cite{AmbainisRU14}). While there has been some progress on developing quantum techniques tailored to specific use cases \cite{Watrous06,Unruh12,Unruh16-eurocrypt}, these techniques are not broadly applicable. Importantly, existing quantum rewinding techniques are limited to recording a \emph{constant} number of attacker responses. This is particularly problematic for Kilian's protocol and beyond: all known techniques for reducing security of a succinct argument to an underlying (falsifiable) assumption require the reduction to record a super-constant (and typically polynomial) number of attacker responses.\footnote{Even if a classical security proof relies on an explicitly post-quantum assumption (e.g., \cite{BaumBCPGL18,BootleLNS20}) this does not translate to \emph{provable} post-quantum security as the rewinding security reduction is not quantum-compatible.}

One way to avoid rewinding security reductions for succinct arguments is to rely on strong cryptographic assumptions. Kilian's protocol can be proved secure via a straightline (non-rewinding) extractor when ported to the random oracle model, and its security in the quantum random oracle model \cite{BonehDFLSZ11} follows from prior work \cite{ChiesaMS19}. Beyond Kilian's protocol, there are constructions of succinct arguments that are proved secure directly from underlying post-quantum ``knowledge'' assumptions \cite{BonehISW17,BonehISW18,GennaroMNO18}, but these assumptions are not falsifiable.\footnote{See~\cite{Naor03,GentryW11} for further discussion on falsifiable assumptions.}

In sum, the following question remains open:
\begin{center}
\emph{Do post-quantum succinct arguments exist under standard assumptions?}
\end{center}

\subsection{Our results}
\label{subsec:results}

We answer the question affirmatively by proving that Kilian's protocol is post-quantum secure, provided the underlying hash function is \emph{collapsing}~\cite{Unruh16-eurocrypt}. 

\begin{theorem}[Kilian's protocol is post-quantum secure]
\label{thm:informal-kilian}
Kilian's protocol is a post-quantum succinct argument when instantiated with any PCP and any collapsing hash function. Moreover, if the underlying PCP is a proof of knowledge, Kilian's protocol is a post-quantum succinct argument of knowledge.
\end{theorem}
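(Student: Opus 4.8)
The plan is to reduce post-quantum soundness (and knowledge soundness) of Kilian's four-message protocol to two ingredients: the collapsing property of the hash function and the (knowledge) soundness of the underlying PCP. Fix a quantum prover $\ARGProver^*$ that makes the verifier accept with probability $\varepsilon$. One first samples the hash key $h$, measures the prover's first message to obtain a Merkle root $\rt$, and --- by averaging --- conditions on a ``good'' pair $(h,\rt)$ for which the residual prover state $\rho$ still produces an accepting transcript with probability $\Omega(\varepsilon)$ when fed a uniformly random PCP challenge $r$. The classical blueprint then runs $\ARGProver^*$ from $\rho$ on many fresh challenges, collects roughly $q\cdot\poly(\lambda)$ accepting transcripts (where $q$ is the PCP query complexity), and stitches the opened Merkle leaves into a single PCP string: collision resistance forces the leaves to be mutually consistent, and a counting argument shows the stitched string is accepted by the PCP verifier with probability $\Omega(\varepsilon)-\negl$. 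The obstacle --- the whole point of the paper --- is that this blueprint needs super-constantly many rewinds, beyond what prior quantum rewinding delivers.

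The core of the proof is therefore a new quantum rewinding procedure that, given a single copy of $\rho$ and black-box access to the prover's response unitary, outputs a sequence of (close to) $T=q\cdot\poly(\lambda)$ accepting transcripts whose challenges are distributed (close to) as $T$ independent uniform samples conditioned on acceptance, while ``repairing'' the state between extractions so that it stays usable. The key observation enabling this is that a Kilian response is just a Merkle opening relative to the already-fixed root $\rt$: by collapsing of the hash --- lifted to Merkle trees via the standard hybrid over tree levels --- measuring the prover's full response register in the computational basis is computationally indistinguishable from measuring only (i) whether the opening is valid and (ii) whether the PCP decision predicate accepts. Since $\rt$ is already determined, this amounts essentially to a single binary accept/reject measurement, and binary measurements admit approximate coherent implementations (analogous to the Marriott--Watrous amplitude-estimation technique). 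Thus each extraction disturbs $\rho$ only negligibly \emph{in an indistinguishable sense}, and --- crucially --- the per-round disturbances cannot conspire to accumulate, because the relevant ``accept'' and ``challenge-refresh'' measurement operators commute up to negligible error once collapsing is invoked. Making this precise via a Jordan-type invariant-subspace decomposition of the two projectors, and controlling the accumulated error over all $T$ rounds, is the technical heart of the argument and the step I expect to be hardest.

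With the rewinding procedure in hand, the remainder is assembly. Running it yields $T$ accepting transcripts; with all but negligible probability no two of them open a common Merkle leaf to distinct values, since such a pair would immediately yield a hash collision, and collapsing implies quantum collision resistance (with the entire reduction running in polynomial time). Define a PCP string $\PCPProof^*$ by setting each opened leaf to its consistent value and filling unopened positions arbitrarily; because the $T$ challenges behave like fresh uniform samples, a Chernoff/averaging argument shows the PCP verifier accepts $\PCPProof^*$ on a uniform challenge with probability at least $\Omega(\varepsilon)-\negl(\lambda)$, and PCP soundness then gives $\Instance\in\Language$, establishing post-quantum computational soundness. For the proof-of-knowledge statement, one additionally runs the PCP knowledge extractor $\PCPExtractor$ on $\PCPProof^*$ to recover $\Witness$ with $(\Instance,\Witness)\in\Relation$; since $\PCPExtractor$, the rewinding procedure, and $\ARGProver^*$ are all polynomial time, the composition is an efficient quantum knowledge extractor. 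Two points require care throughout: the averaging that fixes a good $(h,\rt)$ must preserve both efficiency and a polynomial success probability, and the ``spread'' guarantee on the extracted challenges must be strong enough to survive the union bound over the $q$ PCP queries --- both are exactly where the precise quantitative form of the rewinding lemma is invoked.
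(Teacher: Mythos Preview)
Your high-level blueprint is right and matches the paper: show Kilian is a collapsing protocol, use that to record many accepting transcripts via a new rewinding procedure, stitch them into a PCP string using position binding, then invoke PCP (knowledge) soundness. The assembly paragraph is essentially what the paper does in Section~6.

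The gap is in the mechanism you propose for the rewinding procedure itself. You write that ``each extraction disturbs $\rho$ only negligibly in an indistinguishable sense'' and that ``the per-round disturbances cannot conspire to accumulate, because the relevant `accept' and `challenge-refresh' measurement operators commute up to negligible error once collapsing is invoked.'' This is not what happens, and the paper explicitly rules out this kind of argument: Section~2.4 gives a concrete counterexample (adapted from Zhandry) showing that repeated applications of the binary accept/reject measurement $\MixM(\{\MeasA_r\}_r)$ can drive the success probability to zero exponentially fast, even though each individual measurement is binary. Collapsing does \emph{not} make the accept projector and the ``refresh to uniform challenge'' projector approximately commute; if it did, Unruh's lemma would already give polynomially many rewinds.

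What the paper actually does is accept that each extraction may significantly damage the state, and then \emph{repair} it. The repair is not a gentle-measurement argument but an explicit procedure: after each extraction one runs an alternating-measurement subroutine ($\ApproxEig$ followed by $\RepairProb$, Sections~4.2--4.3) that restores the state's \emph{success probability} --- not the state itself --- to approximately its pre-measurement value. The Jordan decomposition you allude to is used, but not to show near-commutation; it is used to analyze these nested alternating measurements (the repair routine is itself an alternating-measurement procedure applied inside another one), which is the genuinely hard step. Your proposal identifies the right obstacle and the right tool (Jordan/Marriott--Watrous), but the way you deploy it --- arguing small disturbance rather than active repair --- would not survive the counterexample and is the idea you are missing.
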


Since collapsing hash functions are implied by post-quantum lossy functions~\cite{Unruh16-asiacrypt}, which exist assuming the quantum hardness of Learning with Errors (QLWE), we obtain post-quantum succinct arguments for all of $\NP$ from the same assumption. This is the first construction of post-quantum succinct arguments from \emph{any} falsifiable assumption.

\begin{corollary}[Post-quantum succinct arguments from QLWE]
Assuming quantum hardness of LWE (QLWE), there exist post-quantum succinct arguments (of knowledge) for all of $\NP$.
\end{corollary}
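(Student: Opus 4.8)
The plan is to derive the corollary by instantiating Kilian's protocol with two off-the-shelf ingredients and then invoking \Cref{thm:informal-kilian}. First I would fix a probabilistically checkable proof for $\NP$ that is additionally a proof of knowledge; such PCPs are constructed unconditionally in the classical PCP literature (the soundness analysis of the PCP verifier yields an explicit witness extractor), so in particular they are post-quantum secure with no assumption. Second, I would assemble a collapsing hash function from QLWE by composing two known reductions: the quantum hardness of LWE~\cite{Regev05} yields post-quantum lossy functions, and post-quantum lossy functions imply collapsing hash functions~\cite{Unruh16-asiacrypt}.

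With these two ingredients in hand, Kilian's protocol meets every hypothesis of \Cref{thm:informal-kilian}, which immediately delivers a post-quantum succinct argument for $\NP$; and because the PCP we picked is a proof of knowledge, the ``moreover'' clause of the same theorem upgrades this to a post-quantum succinct argument of knowledge, establishing the corollary. Note that succinctness does not constrain the choice of PCP: the communication of Kilian's protocol is governed by the hash output length and the PCP query complexity, so any PCP for $\NP$ suffices.

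I do not expect a real obstacle here: the entire technical content sits in \Cref{thm:informal-kilian} (in particular the new quantum rewinding procedure that extracts a super-constant number of accepting transcripts), and the corollary is essentially a bookkeeping step. The only points warranting a sentence of care are (i) confirming that an unconditional PCP of knowledge for $\NP$ exists, and (ii) checking that the composed reductions from QLWE to collapsing hashing preserve the quantitative properties the theorem needs — namely a negligible collapsing error and an output length $\poly(\secp)$ independent of the input length — so that both security and succinctness are maintained.
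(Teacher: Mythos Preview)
Your proposal is correct and matches the paper's own argument essentially verbatim: the paper derives the corollary by citing that QLWE yields collapsing hash functions via lossy functions \cite{Unruh16-asiacrypt}, then plugs these together with an off-the-shelf PCP for $\NP$ into \Cref{thm:informal-kilian} (formally, \Cref{theorem:kilian} combined with \Cref{claim:merkle-trees-are-collapsing}). Your additional remarks about unconditional PCPs of knowledge and the succinctness check are appropriate and do not deviate from the paper's approach.
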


The core of our proof is a new quantum extraction procedure that enables a reduction to record the prover's responses for an \emph{arbitrary} number of random challenges. This significantly improves over prior work, which was limited to recording the prover's responses for a \emph{constant} number of random challenges \cite{Unruh12,Unruh16-eurocrypt,DonFMS19}. 

Our extraction procedure applies not only to Kilian's protocol, but any \emph{collapsing protocol} \cite{Unruh16-eurocrypt,LiuZ19,DonFMS19}. A collapsing protocol refers to any public-coin interactive argument with the guarantee, roughly, that any (unitary) prover that only gives accepting responses cannot detect if its last response is measured. We show Kilian's protocol has this guarantee if it is instantiated with a collapsing hash function.

\begin{theorem}[Quantum rewinding, informal] Given black-box access to any quantum adversary for a collapsing protocol, there is an efficient procedure to repeatedly query the adversary on random challenges and record an arbitrary number of accepting transcripts.
\end{theorem}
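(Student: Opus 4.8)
The plan is to first reduce a ``query'' to its combinatorial core. Fix the prover's first message $a$ and regard the rest of the adversary as a unitary $\UPrvRound{r}$ applied to an internal state $\ket{\psi}$ upon receiving challenge $r$, after which a response register is measured to produce $z$. Let $\BProj{a,r}$ be the projector onto responses $z$ with $(a,r,z)$ accepting. The collapsing property says exactly that, on any state the prover accepts with certainty under $\UPrvRound{r}$, measuring the whole response $z$ is computationally undetectable; hence the query operation --- \emph{apply $\UPrvRound{r}$, test acceptance, write down $z$ on the accept branch, apply $\contra{(\UPrvRound{r})}$} --- is indistinguishable from applying the binary \emph{projective} measurement $\Meas{a,r}=\BMeas{\contra{(\UPrvRound{r})}\BProj{a,r}\UPrvRound{r}}$ and, only on its accept branch, reading off a transcript. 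So it suffices to design a procedure that, holding a single copy of the prover (with only black-box access to its unitary and the inverse), repeatedly picks a uniform $r$, applies $\Meas{a,r}$, keeps landing in the accept branch often enough, and harvests transcripts there --- all without the acceptance probability collapsing to $0$.

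To track ``the acceptance probability of the current state'' I would pass to the coherent picture: adjoin a challenge register set to $\UniState{R}=\frac{1}{\sqrt{|R|}}\sum_r\ket{r}$, let $\ConProj$ be the projector that coherently runs $\UPrvRound{r}$ controlled on $r$, projects onto acceptance, and uncomputes, and let $\UniProj{R}$ be the projector onto $\UniState{R}$ (tensored with identity). On $\UniState{R}\otimes\ket{\psi}$ the overlap of $\ConProj$ and $\UniProj{R}$ is the adversary's average success probability. By Jordan's lemma the space decomposes into subspaces of dimension at most two jointly invariant under $\ConProj$ and $\UniProj{R}$, and a value-estimation subroutine $\ApproxEig$ acts as an approximate projective measurement of the Jordan angle --- i.e.\ of the ``local'' success probability $p_j=\cos^2\theta_j$ --- reporting, up to a precision $\delta$, how likely the current state is to yield an accepting transcript while approximately projecting onto the span of Jordan blocks of that value. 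The procedure then is: (i) run $\ApproxEig$ to get an estimate $\PEst$, and if $\PEst$ is below the soundness threshold, halt --- this case means $a$ was junk and no transcripts are owed; (ii) otherwise loop --- draw a uniform $r$, apply $\Meas{a,r}$ (reading off a transcript on the accept branch), then \emph{repair} by re-running $\ApproxEig$ and recording the new estimate --- until enough accepting transcripts are collected or a polynomial query budget is exhausted. Because each iteration draws a fresh uniform $r$, the recorded challenges come out statistically close to uniform conditioned on acceptance, which is the distribution the downstream PCP extractor expects.

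The crux --- and the step I expect to be the main obstacle --- is showing the loop is sustainable: that over polynomially many iterations the estimate stays bounded away from $0$ with overwhelming probability, so accepting transcripts keep arriving at a non-negligible rate, and that the mere \emph{approximate} projectivity of $\ApproxEig$ does not let errors pile up across iterations. I would couple the real execution to an idealized one in which $\ApproxEig$ is an exact projective measurement of the Jordan value. In the idealized execution the estimates $\PEst_0,\PEst_1,\dots$ form (up to the granularity $\delta$) a bounded process whose only systematic drift is a benign regression toward $1/2$ --- in particular never downward once below $1/2$ --- so that, by a potential/martingale argument, it is very unlikely to reach the soundness threshold over a polynomial horizon, and its total downward variation is controlled. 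Separately, taking $\delta$ small enough and the number of repetitions inside $\ApproxEig$ large enough keeps the real and idealized executions statistically close over the whole run at only polynomial cost; the delicate point is that the per-step approximation error interacts with the repair step (amplitude can slowly leak onto low-value Jordan blocks that $\ApproxEig$ cannot tell apart from high-value ones within precision $\delta$), so the coupling must be done carefully --- e.g.\ arguing that a ``bad'' $\ApproxEig$ outcome, one far from the previous estimate, occurs with probability bounded by the size of the jump, so a union bound over all iterations still goes through. Putting these pieces together yields an efficient, single-copy procedure that records as many accepting transcripts as desired.
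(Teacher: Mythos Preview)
Your setup is right: reducing a query to the binary projective measurement $\Meas{a,r}$ via collapsing, passing to the coherent picture with $\UniState{R}$ and $\CProj$, invoking Jordan's lemma, and using a Marriott--Watrous style $\ApproxEig$ to estimate the value --- all of this matches the paper. The gap is in the repair step. You propose to ``repair by re-running $\ApproxEig$,'' but $\ApproxEig$ is a \emph{passive} measurement of the current value: it (approximately) projects onto Jordan subspaces of $(\CProj,\UniProj{R})$ and reports the value there, but it cannot move amplitude from low-value Jordan blocks back to high-value ones. After $\Meas{a,r}$ perturbs the state --- and $\Meas{a,r}$ does \emph{not} respect the $(\CProj,\UniProj{R})$ Jordan decomposition --- mass can genuinely leak to low-value blocks, and re-measuring simply reports the degraded value. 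Your martingale picture is also not correct: the value $\Tr(\Pi_{\mathrm{avg}}\,\DMatrix)$ with $\Pi_{\mathrm{avg}}=\Expectation_r[\Pi_r]$ is \emph{not} preserved in expectation under the dephasing $\DMatrix\mapsto \Pi_r\DMatrix\Pi_r+(\Id-\Pi_r)\DMatrix(\Id-\Pi_r)$, because $\Pi_{\mathrm{avg}}$ does not commute with $\Pi_r$. In the Zhandry example from the paper ($\Pi_r=\ketbra{\phi_r}$ with $\ket{\phi_r}=\sqrt{\Advantage}\ket{0}+\sqrt{1-\Advantage}\ket{r}$), one step already multiplies the value by roughly $1-2\Advantage(1-\Advantage)$, so there is a genuine downward drift and no ``benign regression toward $1/2$.''

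What the paper does instead is an \emph{active} repair: after observing outcome $b$ from $\Meas{a,r}$, it alternates between a threshold projector $\BProj{\MeasA,p}=\sum_{q\in[p\pm\varepsilon]}\MeasUnitary^{\dagger}\BProj{\Measurement,q}\MeasUnitary$ built from the (purified) $\ApproxEig$ and the projector $\BProj{\MeasB,k}$ corresponding to the \emph{specific} perturbing outcome just seen, until $\MeasA_p\to 1$. The point is that $(\BProj{\MeasA,p},\BProj{\MeasB,k})$ have their own Jordan decomposition, and the state immediately after the perturbation is exactly $\BProj{\MeasB,k}$ applied to a state in $\image(\BProj{\MeasA,p})$; this is precisely the setting in which alternating measurements rotate you back into $\image(\BProj{\MeasA,p})$ in $O(1/\beta)$ expected steps (where $\beta$ was the probability of the observed outcome). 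Landing back in $\image(\BProj{\MeasA,p})$ is what guarantees the next $\ApproxEig$ returns $\approx p$ with high probability --- this is the content of the paper's \cref{lemma:state-repair} and \cref{claim:good-states}. Without this alternation against the perturbing measurement itself, there is no mechanism preventing the Zhandry-style decay, and the union-bound-over-iterations argument you sketch cannot close.
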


Beyond our primary application to Kilian's protocol, our quantum rewinding procedure also implies that any $k$-special sound collapsing protocol is a post-quantum argument of knowledge, for any polynomially-bounded $k$.

\parhead{Optimal knowledge error}
Our rewinding technique achieves asymptotically optimal knowledge error. As an immediate application, our technique improves a previous result due to \cite{Unruh12,Unruh16-eurocrypt}, who showed that if a quantum attacker in a $2$-special sound collapsing sigma protocol has success probability $\Advantage$, then there is an extractor that can output a witness with probability $\Advantage \cdot (\Advantage^2-1/C)$, where $C$ is the size of the challenge space. In particular, there was previously no guarantee for $1/C \leq \Advantage \leq 1/\sqrt{C}$. Our techniques yield an extractor running in time $\poly(\secp,1/\Advantage)$ that (given $\Advantage$ as input) outputs a witness with probability $\Omega(\Advantage)$ provided that $\Advantage \geq (1+\delta)/C$ for any constant $\delta > 0$.

\doclearpage
\section{Technical overview}
\label{sec:techniques}

\subsection{Kilian's protocol}
\label{subsec:kilian}

Kilian's protocol compiles any \emph{probabilistically checkable proof} (PCP) into an interactive protocol using a Merkle tree built from a collision-resistant hash function. Recall that a PCP is a type of $\NP$ proof $\PCPProof$ that can be verified by reading only a few random positions \cite{BabaiFLS91,FeigeGLSS91,AroraS98,AroraLMSS98}. The collision-resistant hash function enables the argument prover to send a \emph{succinct} Merkle tree commitment to the PCP $\PCPProof$ that it can later open on any subset of positions $\PCPQuerySet$ with a short opening proof.

\parhead{The protocol}
Let $(\PCPProver,\PCPVerifier)$ be a PCP proof system for an $\NP$ relation $\Relation$, and let $\{H_{\secp}\}_{\secp}$ be a family of collision-resistant hash functions. The argument prover $\ARGProver$ and argument verifier $\ARGVerifier$ both receive as input the security parameter $\secp$ and an instance $\Instance$, while the prover additionally receives a corresponding witness $\Witness$ (such that $(\Instance,\Witness) \in \Relation$). They interact as follows.
\begin{enumerate}[noitemsep]
\item $\ARGVerifier$ samples a collision-resistant hash function $h_{\CRHF} \gets H_{\secp}$ and sends it to $\ARGProver$.
\item $\ARGProver$ computes a PCP string $\PCPProof \leftarrow \PCPProver(\Instance,\Witness)$, uses $h_{\CRHF}$ to generate a Merkle tree commitment $\VCcm \gets \MerkleCommit(h_{\CRHF},\PCPProof)$ to $\PCPProof$, and sends $\VCcm$ to $\ARGVerifier$.
\item $\ARGVerifier$ samples random coins $r \leftarrow R$ for the PCP verifier $\PCPVerifier$ and sends them to $\ARGProver$.
\item $\ARGProver$ computes the PCP indices $\PCPQuerySet$ that $\PCPVerifier(\Instance;r)$ would query, generates a Merkle opening proof $\VCauth$ for $\PCPProof[\PCPQuerySet]$, and sends the response $z \eqdef (\PCPProof[\PCPQuerySet],\VCauth)$ to $\ARGVerifier$.\footnote{The Merkle opening for a PCP index $q$ consists of the hash values of every vertex adjacent to the path from $q$ to the root; the Merkle opening proof $\VCauth$ for a set of PCP indices $Q$ consists of the Merkle openings for each $q \in Q$.}
\end{enumerate}
Once the interaction is complete, $\ARGVerifier$ accepts if:
\begin{inparaenum}[(1)]
  \item $\VCauth$ is a valid Merkle opening of $\VCcm$ to $\PCPProof[\PCPQuerySet]$ on indices $\PCPQuerySet$; and
  \item $\PCPProof[\PCPQuerySet]$ is accepted by the PCP verifier $\PCPVerifier(\Instance;r)$. 
\end{inparaenum}
Kilian's protocol is \emph{publicly verifiable}: one can compute whether $V$ accepts given only the instance $x$ and the four-message transcript $(h_{\CRHF},\VCcm,r,z)$.

\parhead{The classical security reduction}
Kilian's protocol ensures that an efficient extractor, given a malicious classical prover $\Malicious{\ARGProver}$ that convinces $\ARGVerifier$ with success probability $2\Advantage$, can output with overwhelming probability a PCP $\PCPProof$ such that $\Pr[\PCPVerifier^{\PCPProof}(\Instance)] \geq \Advantage/2$; the particular constants here are chosen to simplify the presentation in the following steps.

The extractor works by running $\Malicious{\ARGProver}$ through the first round of the protocol, obtaining a transcript prefix $\Transcript = (h_{\CRHF},\VCcm)$ and $\Malicious{\ARGProver}$'s intermediate state $\ProverState_{\Transcript}$. Call $\ProverState_{\Transcript}$ ``$\Advantage$-good'' if
\begin{equation*}
\Pr\left[\ARGVerifier(\Transcript,r,z)=1 \, \middle\vert \begin{array}{r}
r \gets R \\
z \gets \Malicious{\ARGProver}(\ProverState_{\Transcript},r) \end{array} \right] \geq \Advantage \enspace.
\end{equation*}
By Markov's inequality, $\ProverState_{\Transcript}$ is $\Advantage$-good with probability at least $\Advantage$. If $\ProverState_{\Transcript}$ is $\Advantage$-good, the extractor constructs a PCP proof $\PCPProof$ as follows.
\begin{itemize}[noitemsep]
\item[] Start with $\PCPProof \eqdef 0^{\ProofLength}$ where $\ell$ is the PCP proof length. Repeat the loop:
\begin{enumerate}[nolistsep]
    \item Choose $r \gets R$ uniformly at random.
    \item Run $z \gets \Malicious\ARGProver(\ProverState_{\Transcript}, r)$.
    \item If $\ARGVerifier(\Transcript,r,z) = 1$, parse $z$ as $(\PCPProof'[\PCPQuerySet],\VCauth)$. Update $\PCPProof$ to match $\PCPProof'$ at the positions in $\PCPQuerySet$.
\end{enumerate}
\end{itemize}

If the PCP has alphabet $\Alphabet$ and proof length $\ProofLength$, one can show that if the extractor records $k = 6 \ProofLength \cdot \log(2|\Alphabet|)$ challenge-response pairs $(r_1,\LastMsg_1),\dots,(r_k,\LastMsg_k)$ for distinct challenges $r_i$, then with probability $1 - \negl(\secp)$ the PCP string $\PCPProof$ satisfies $\Pr[\PCPVerifier^{\PCPProof}(\Instance)] \geq \Advantage/2$.

This guarantee implies the \emph{classical} security of Kilian's protocol. For instance, if the PCP system has negligible soundness error then the interactive argument has negligible soundness error.

\subsection{Our approach to post-quantum security of Kilian's protocol}

In this work, we prove that if the collision-resistant hash function $h_{\CRHF}$ is a \emph{collapsing hash function}~\cite{Unruh16-eurocrypt}, then Kilian's protocol, without any additional modifications, is secure against malicious quantum provers. At a very high level, our security proof takes the following steps:
\begin{enumerate}
    \item\label[step]{step:tech-overview-collapsing} \textbf{Kilian's protocol is collapsing.} We prove that Kilian's protocol is a \emph{collapsing protocol} in the sense of~\cite{LiuZ19,DonFMS19} when the underlying hash function is collapsing; we elaborate on collapsing protocols in~\cref{subsec:tech-prior-quantum-techniques}.
    \item\label[step]{step:tech-overview-extraction} \textbf{Collapsing protocols admit quantum rewinding.} We devise a general-purpose quantum extraction procedure for collapsing protocols that enables efficiently recording any desired number of malicious prover responses. This step is our main technical contribution.
\end{enumerate}

\parhead{Organization} We discuss the importance of the collapsing notion in~\cref{subsec:tech-prior-quantum-techniques}, but will otherwise defer the details of~\cref{step:tech-overview-collapsing} to the body of the paper, since proving that Kilian's protocol is collapsing is a straightforward application of techniques from~\cite{Unruh16-eurocrypt}. 

\cref{step:tech-overview-extraction} is the primary focus of this technical overview. We summarize prior work on rewinding for collapsing protocols in~\cref{subsec:tech-prior-quantum-techniques} and explain in~\cref{subsec:tech-unruh} why existing techniques are insufficient for Kilian. We then describe our extraction procedure in~\cref{subsec:tech-state-repair,subsec:tech-state-recovery,{subsec:tech-approximate-jordan}}. %
\subsection{Prior quantum techniques}
\label{subsec:tech-prior-quantum-techniques}

We discuss prior techniques for recording responses of a malicious quantum prover in a classical interactive (public-coin) protocol. While prior works did not explicitly focus on Kilian's protocol, the abstract setting is the same. A reduction runs a malicious prover $\Malicious{\ARGProver}$ up to the final round of the protocol, obtaining a fixed transcript prefix $\Transcript$ and corresponding prover state $\ProverState_{\Transcript}$. Assuming that $\Malicious{\ARGProver}(\ProverState_{\Transcript},\cdot)$ answers a random challenge $r \gets R$ with \emph{success probability} $\Advantage$, the goal is to obtain some number $k$ of accepting transcripts $(\Transcript,r_1,\LastMsg_1),\dots,(\Transcript,r_k,\LastMsg_k)$ with the same prefix $\Transcript$.

In the classical setting, this is an elementary task. By repeatedly sampling random challenges $r \gets R$ and running $z \gets \Malicious{\ARGProver}(\ProverState_{\Transcript},r)$, we can record any desired number of independent and identically distributed transcripts where an $\Advantage$-fraction of them are accepting. Put another way:
\begin{quote}
Given $\Malicious{\ARGProver}$ and $\ProverState_{\Transcript}$, one can record $k$ \emph{accepting} transcripts for any desired $k$ with probability $1$ in expected time $k/\Advantage$.
\end{quote}

In the quantum setting, it is unlikely that such a statement holds: if $\ProverState_{\Transcript}$ is a quantum state $\ket{\psi}$, it is not possible in general to run $\Malicious{\ARGProver}(\ProverState_{\Transcript},\cdot)$ multiple times independently. This is because any measurement applied by $\Malicious{\ARGProver}$ may irreversibly alter the state. Indeed, Ambainis, Rosmanis, and Unruh~\cite{AmbainisRU14} show that this statement can be false relative to a (quantum) oracle, even if $(\ARGProver,\ARGVerifier)$ is classically secure.

\parhead{Collapsing protocols}
Nevertheless, there \emph{is} a class of protocols for which the statement holds in a limited sense. A public-coin interactive argument is a \emph{collapsing protocol} \cite{Unruh16-eurocrypt,DonFMS19,LiuZ19} if, given any last-round challenge $r$, an efficient prover which produces a superposition $\ket{\phi}$ of \emph{accepting responses} cannot distinguish between $\ket{\phi}$ and the state that results after measuring the response in the computational basis.\footnote{More precisely, $\ket{\phi} = \sum_{y,z} \alpha_{y,z} \ket{y,z}$ where each $z$ in the superposition satisfies $\ARGVerifier(\Transcript,r,z) = 1$ for some fixed partial transcript $\Transcript$, and $y$ is the state on other registers. \emph{Measuring the response} means measuring the register containing $z$.}
We remark that~\cite{DonFMS19,LiuZ19} defined collapsing protocols in the context of three-round sigma protocols, but the notion easily extends to public-coin interactive arguments.

For any collapsing protocol $(\ARGProver,\ARGVerifier)$, Unruh's lemma~\cite{Unruh12,DonFMS19} gives a weaker version of the above statement. Suppose a malicious $\Malicious{\ARGProver}$ with state $\ket{\psi}$ has initial success probability $\Advantage$, i.e., $\Malicious{\ARGProver}(\ket{\psi},r)$ outputs an accepting response $z$ on a random $r \gets R$ with probability $\Advantage$. Then Unruh's lemma gives the following guarantee:

\begin{quote}
Given $\Malicious{\ARGProver}$ and $\ket{\psi}$, one can record $k$ \emph{accepting} transcripts for any desired $k$ with probability $O(\Advantage^{2k-1})$.
\end{quote}

This $O(\Advantage^{2k-1})$ probability, which does not appear in the classical statement, is over the randomness of the challenges and any quantum measurements the malicious prover performs. Notice that for constant $k$, this probability is still large enough to obtain meaningful guarantees. However, security of Kilian's protocol needs, at a minimum, $k=\Omega(\ProofLength/|\PCPQuerySet|)$ where $\ProofLength$ is the PCP length and $|\PCPQuerySet|$ is the number of queries of the PCP verifier. Thus, Unruh's lemma is insufficient since the guarantee only holds with probability $\Advantage^{\Omega(\ProofLength/|\PCPQuerySet|)}$, which is negligible for any PCP with useful parameters.

\subsection{A closer look at Unruh's lemma}
\label{subsec:tech-unruh}

Unruh's lemma is a quantum information-theoretic statement about any collection of binary-outcome projective measurements $\{\MeasA_r\}_{r \in R}$. We write binary-outcome projective measurements as $\MeasA_r = \BMeas{\Pi_r}$ where $\Pi_r$ is associated with outcome $1$, and $\Id -\Pi_r$ with outcome $0$.

Let $\MixM(\{\MeasA_r\}_{r})$ be the corresponding \emph{mixture} of the projective measurements $\{ \MeasA_r\}_{r}$, i.e., the procedure that chooses $r \gets R$ uniformly at random, applies measurement $\MeasA_r$, and outputs the outcome $b \in \Bits$. Unruh's lemma~\cite{Unruh12,DonFMS19} concerns the measurement outcomes obtained from sequential applications of $\MixM(\{ \MeasA_r\}_{r})$.
\begin{quote}
    \textbf{Unruh's lemma}: For any state $\ket{\psi}$ and any collection of binary-outcome projective measurements $\{ \MeasA_r\}_{r \in R}$, if applying $\MixM(\{ \MeasA_r\}_{r})$ to $\ket{\psi}$ returns $1$ with probability $\Advantage$, then starting from $\ket{\psi}$ and applying $\MixM(\{ \MeasA_r\}_{r})$ for $k$ times in succession returns $1$ all $k$ times with probability $\Advantage^{2k-1}$.
\end{quote}

To use this lemma in the context of an interactive protocol, for each $r$ in the challenge space $R$ one defines $\MeasA_r = \BMeas{\Pi_r}$ as follows. Let $U_r$ be the unitary describing the (purified) operation of $\Malicious{\ARGProver}$ in the last round on verifier message $r$; let $\Pi_{V,r} \eqdef \sum_{z,\ARGVerifier(\Transcript,r,z) = 1} \ketbra{z}$ be the projection onto responses $z$ that the verifier $\ARGVerifier(\Transcript,r,\cdot)$ accepts; and finally set $\Pi_r \eqdef \contra{U_r} \Pi_{\ARGVerifier,r} U_r$.

Intuitively, $\MeasA_r$ measures \emph{whether} $\Malicious{\ARGProver}$ causes $\ARGVerifier$ to accept on challenge $r$. Therefore, the probability $\Advantage$ in Unruh's lemma (the probability $\MixM(\{\MeasA_r\}_r)$ applied to $\ket{\psi}$ returns $1$) is the probability that $\Malicious{\ARGProver}(\ket{\psi},\cdot)$ successfully answers a random challenge $r \gets R$ in the interactive protocol. We sometimes refer to $\Advantage$ as the \emph{success probability} of $\ket{\psi}$.

Thus Unruh's lemma shows that it is possible to ``observe'' $k$ accepting executions with probability $\Advantage^{2k-1}$, in the following sense: whenever $\MixM$ returns $1$, one can apply $U_{r}$ for the $r$ sampled by $\MixM$, and measure the adversary's response register to obtain $z$ such that $(\Transcript,r,z)$ is an accepting transcript. Importantly, because Unruh's lemma only concerns \emph{binary-outcome} projective measurements, we require an additional \emph{collapsing} property from the underlying protocol to (undetectably) \emph{record} any accepting responses. Thus, applied to a \emph{collapsing protocol}, Unruh's lemma implies an extractor can record $k$ accepting transcripts with probability $\Advantage^{2k-1} - \negl(\secp)$, since this additional measurement of the response register is (computationally) undetectable when $\MixM$ returns $1$.

\parhead{Consecutive measurements can destroy a state}
The $\Advantage^{2k-1}$ probability comes in part from the fact that Unruh's lemma only captures the probability that $k$ \emph{consecutive} trials succeed.\footnote{Technically, $\Advantage^{2k-1}$ only applies for random uncorrelated challenges, which may not be distinct. Unruh also gives a bound that applies for distinct random challenges.} This is a strong requirement: even in the classical setting, $k$ consecutive trials succeed with probability $\Advantage^{k}$. Classically this can be resolved by performing $N = k/\epsilon$ trials to obtain roughly $k$ successful trials. One might hope that this would also work in the quantum setting: perhaps repeatedly applying $\MixM(\{ \Meas{r} \}_{r})$ some $\poly(k,1/\Advantage)$ times suffices to obtain $k$ successful trials overall.

Unfortunately, this does not work. Adapting a counterexample of Zhandry \cite[Section 5]{Zhandry20}, suppose the initial state $\ket{\psi}$ is $\ket{0}$, and for any desired success probability $\Advantage$, define each $\MeasA_r = \BMeas{\Pi_r}$ so that $\Pi_r$ is the rank-one projection onto $\sqrt{\Advantage}\ket{0} +\sqrt{1-\Advantage}\ket{r}$. Clearly, $\MixM$ applied to $\ket{\psi}$ returns $1$ with probability $\Advantage$, but one can verify that if repeated applications of $\MixM$ use \emph{distinct} challenges $r$, then the expected number of $1$ outcomes is at most $1/(2 - 2\Advantage)$ regardless of the number of trials; for small $\Advantage$ this is close to $1/2$. This counterexample is a barrier if there are a super-polynomial number of challenges, as each trial will use a distinct $r$ with overwhelming probability. Note that, in this example, the bound $1/(2-2\Advantage)$ arises because the (expected) success probability of the state after $j$ trials is exponentially small in $j$. In other words, the repeated applications of $\MixM$ ``damage'' the state.

\subsection{State recovery}
\label{subsec:tech-state-recovery}

Given the above discussion, a natural approach is to try to recover the original state after the application of $\MixM(\{\Meas{r}\}_r)$. In particular, it would suffice to build a procedure that would allow recovering a state $\ket{\psi}$ after it has been perturbed by some binary projective measurement $\mathsf{B}$. In our setting, $\ket{\psi}$ corresponds to the malicious prover's intermediate state, and $\mathsf{B}$ is the measurement $\Meas{r}$ applied by $\MixM(\{\Meas{r}\}_r)$. Applying $\Meas{r}$ to $\ket{\psi}$ disturbs the state, leaving some post-measurement state $\ket{\phi}$, and our aim is to somehow return the state back to $\ket{\psi}$. If we could do this in general (for any efficient binary projective measurement $\mathsf{B}$) this would enable ``perfect'' quantum rewinding.

Unfortunately, this is impossible in general, but to build intuition for our eventual approach, we will show how to achieve this assuming we have access to a hypothetical additional power. In particular, suppose we can perform the binary projective measurement
\begin{equation*}
\Test = \BMeas{\ketbra{\psi}}
\end{equation*}
onto the one-dimensional subspace spanned by the initial state $\ket{\psi}$. %
If $\Test$ returns the outcome $1$, then the post-measurement state is $\ket{\psi}$. In the remainder of this section, we use $\Test$ to develop a procedure that recovers the state $\ket{\psi}$ with probability close to $1$.

\parhead{The qubit case}
First we consider the case where $\ket{\psi}$ is a single qubit: $\ket{\psi}$ lies in the \emph{two-dimensional} space $\bbC^2$. If $\mathsf{B} = \BMeas{\Pi}$ is nontrivial, then $\Pi = \ketbra{\phi}$ and $\Id-\Pi = \ketbra{\phi^\perp}$ for some pair of orthogonal states $\ket{\phi},\ket{\phi^\perp} \in \bbC^2$. This is shown in \cref{fig:two-dim}.

\begin{figure}[H]
\captionsetup{width=.9\linewidth}
\centering
\begin{tikzpicture}[
every edge quotes/.append style = {anchor=south, sloped}
                        ]
\draw[thick, ->] (0,0) -- (1.5,2.59807621135) coordinate[label=above right:$\ket{\psi}$] (v);
\draw[thick, ->] (0,0) -- (-2.59807621135,1.5) coordinate[label=above right:$\ket{\psi^\perp}$] (vp);
\draw[thick, ->] (0,0) -- (0,3) coordinate[label=above right:$\ket{\phi}$] (w);
\draw[thick, ->] (0,0) -- (3,0) coordinate[label=above right:$\ket{\phi^\perp}$] (wp);
\coordinate (O);

\draw[dashed] (0,2.59807621135) node[left] {$\sqrt{p}$} -- (1.5,2.59807621135);
\draw[dashed] (1.5,0) node[below] {$\sqrt{1-p}$} -- (1.5,2.59807621135);
\end{tikzpicture}
\caption{The quantum states $\ket{\psi}$ and $\ket{\psi^\perp}$ correspond to outcomes $1$ and $0$ of $\Test = \BMeas{\ketbra{\psi}}$, respectively. The quantum states $\ket{\phi}$ and $\ket{\phi^\perp}$ correspond to outcomes $1$ and $0$ of $\mathsf{B} = \BMeas{\Pi}$, respectively.}
\label{fig:two-dim}
\end{figure}

From \cref{fig:two-dim} we see that $|\braket{\phi|\psi}|^2 = \bra{\psi}\ketbra{\phi}\ket{\psi} = \norm{\Pi \ket{\psi}}^2 = p$. By making a suitable choice of phase, we can write
\begin{align*}
\ket{\phi} &= \sqrt{p} \ket{\psi} + \sqrt{1-p} \ket{\psi^\perp} \enspace,\\
\ket{\psi} &= \sqrt{p} \ket{\phi} + \sqrt{1-p} \ket{\phi^\perp} \enspace.
\end{align*}
Suppose that we have applied $\mathsf{B}$ to the state $\ket{\psi}$ and obtained the outcome $1$. (The case of outcome $0$ is symmetric.) The post-measurement state is then $\ket{\phi}$. A natural idea to recover the original state $\ket{\psi}$ is to apply $\Test$ to $\ket{\phi}$:
\begin{itemize}[noitemsep]
    \item With probability $p$, we obtain the outcome $1$ and the state is $\ket{\psi}$.
    \item With probability $1-p$ we obtain the outcome $0$ and the state is $\ket{\psi^{\perp}}$ (which only holds because the space is two-dimensional).
\end{itemize}
In the first case we are done. But even in the second case we are not ``stuck'': if we apply $\mathsf{B}$ \emph{again}, then with probability $1-p$ we return to the state $\ket{\phi}$, and with probability $p$ we move to the state $\ket{\phi^{\perp}}$. This leads to a ``state recovery'' procedure, which follows a technique first used by Marriott and Watrous for $\QMA$ amplification~\cite{MarriottW05}.\footnote{The goal of~\cite{MarriottW05} was not to reconstruct a particular quantum state, but to estimate the probability $p$.} After potentially disturbing the state $\ket{\psi}$ by applying $\mathsf{B}$, we can recover $\ket{\psi}$ by simply alternating the measurements 
\begin{equation*}
\Test, \mathsf{B}, \Test, \mathsf{B},\dots
\end{equation*}
until $\Test$ returns $1$, at which point the state must be $\ket{\psi}$. In fact, the state of the system and the measurement outcomes throughout the procedure are remarkably easy to characterize. For instance, the effect of each $\Test$ measurement can be deduced from~\cref{fig:two-dim}:
\begin{itemize}[noitemsep]
    \item Applying $\Test$ to $\ket{\phi}$ returns $1$ with probability $p$ resulting in $\ket{\psi}$, and returns $0$ with probability $1-p$ resulting in  $\ket{\psi^\perp}$.
    \item Applying $\Test$ to $\ket{\phi^\perp}$ returns $0$ with probability $p$ resulting in $\ket{\psi^\perp}$, and returns $1$ with probability $1-p$ resulting in $\ket{\psi}$.
\end{itemize}
The effect of $\mathsf{B}$ on $\ket{\psi}$ and $\ket{\psi^\perp}$ is analogous. Letting $b_i$ denote the outcome of the $i$-th measurement, starting from $\ket{\psi}$ and applying $\mathsf{B},\Test,\ldots$ in alternating fashion (now counting the initial $\mathsf{B}$ as part of the sequence), the outcome sequence $b_1,b_2,\ldots$ follows a classical distribution $\MWDist(p)$ (for ``Marriott--Watrous''): 
\begin{enumerate}[noitemsep]
    \item Initialize $b_0 = 1$ (the initial state $\ket{\psi}$ corresponds to the $1$ outcome of $\Test$).
    \item For each $i \in \Naturals$, set $b_i := b_{i-1}$ with probability $p$, and $b_i := 1-b_{i-1}$ otherwise.
\end{enumerate}
With this characterization, we can analyze the procedure's running time. The procedure fails to terminate at the first application of $\Test$, corresponding to $b_2 = 0$, with probability $2p(1-p)$. If this occurs, the next application of $\Test$ returns $0$ with probability $1 - 2p(1-p)$. Continuing with this argument, the probability the procedure fails to terminate after $2\NumAlternations$ total measurements is 
\begin{equation*}
2p(1-p)(1-2p(1-p))^{\NumAlternations-1} < 1/\NumAlternations \enspace,
\end{equation*}
where the inequality holds for \emph{any} probability $p$.

\parhead{Extending to more qubits}
The analysis above relies on the fact that, in two dimensions, the system throughout the alternating measurement procedure is easily seen to lie in one of the four states $\{\ket{\psi},\ket{\psi^\perp},\ket{\phi},\ket{\phi^\perp}\}$. In higher dimensions, the behavior of the system is potentially more complex.\footnote{In the current setting, since $\Test$ projects onto a rank-one subspace, it turns out that even in higher dimensions the behaviour of this particular system will be two-dimensional, moving between states $\ket{\psi},(\Pi-p\Id)\ket{\psi},\Pi\ket{\psi},(\Id-\Pi)\ket{\psi}$ (appropriately normalized). Our more general treatment will be useful later on when we replace $\Test$ with a projection onto a higher-dimensional subspace.} We can nevertheless prove that the procedure terminates after $2\NumAlternations$ measurements with probability at most $1/\NumAlternations$.

To analyze the multi-qubit case, we use Jordan's lemma, a tool in quantum information theory that extends two-dimensional analyses of a pair of projectors to higher dimensions. Specifically, \emph{any} two projectors $\BProj{\MeasA},\BProj{\MeasB}$ induce a decomposition of the ambient Hilbert space into two-dimensional subspaces $\Subspace_{j}$ such both $\BProj{\MeasA}$ and $\BProj{\MeasB}$ act as rank-one projectors within each subspace.\footnote{There are also one-dimensional subspaces, which we ignore here for the purpose of exposition; in any case, these can be treated as ``degenerate'' two-dimensional subspaces.}

More precisely, for each ``Jordan subspace'' $\Subspace_{j}$, there exist orthogonal vectors $\JorKetA{j}{1},\JorKetA{j}{0}$ that span $\Subspace_{j}$, such that $\BProj{\MeasA} \JorKetA{j}{1} = \JorKetA{j}{1}$ and $\BProj{\MeasA} \JorKetA{j}{0} = 0$; similarly, there exist orthogonal vectors $\JorKetB{j}{1},\JorKetB{j}{0}$ that span $\Subspace_{j}$ such that $\BProj{\MeasB} \JorKetB{j}{1} = \JorKetB{j}{1}$ and $\BProj{\MeasB} \JorKetB{j}{0} = 0$. Defining the \emph{eigenvalue} of $\Subspace_{j}$ as $p_j \coloneqq \abs{\braket{v_j | w_j}}^2$, within each subspace $\Subspace_{j}$ we recover a two-dimensional picture, as in \cref{fig:jordan}. We refer to $p_j$ as the ``eigenvalue'' of $\Subspace_{j}$ because $\JorKetA{j}{1}$ is an eigenvector of the Hermitian matrix $\BProj{\MeasA} \BProj{\MeasB} \BProj{\MeasA}$ with eigenvalue $p_j$ (and $\JorKetB{j}{1}$ is an eigenvector of $\BProj{\MeasB} \BProj{\MeasA} \BProj{\MeasB}$ with eigenvalue $p_j$).

\begin{figure}[H]
\captionsetup{width=.9\linewidth}
\centering
\begin{tikzpicture}[
every edge quotes/.append style = {anchor=south, sloped}
                        ]
\draw[thick, ->] (0,0) -- (1.5,2.59807621135) coordinate[label=above right:$\JorKetA{j}{1}$] (v);
\draw[thick, ->] (0,0) -- (-2.59807621135,1.5) coordinate[label=above right:$\JorKetA{j}{0}$] (vp);
\draw[thick, ->] (0,0) -- (0,3) coordinate[label=above right:$\JorKetB{j}{1}$] (w);
\draw[thick, ->] (0,0) -- (3,0) coordinate[label=above right:$\JorKetB{j}{0}$] (wp);
\coordinate (O);

\draw[dashed] (0,2.59807621135) node[left] {$\sqrt{p_j}$} -- (1.5,2.59807621135);
\draw[dashed] (1.5,0) node[below] {$\sqrt{1-p_j}$} -- (1.5,2.59807621135);

\end{tikzpicture}
\caption{The states $\JorKetA{j}{1}$ and $\JorKetA{j}{0}$ correspond to $1$ and $0$ outcomes of $\BMeas{\BProj{\MeasA}}$, respectively; $\JorKetB{j}{1}$ and $\JorKetB{j}{0}$ correspond to $1$ and $0$ outcomes of $\BMeas{\BProj{\MeasB}}$, respectively.}
\label{fig:jordan}
\end{figure}

By Jordan's lemma, a quantum state $\ket{\phi}$ satisfying $\BProj{\MeasB} \ket{\phi} = \ket{\phi}$ can be written as
\begin{equation*}
\ket{\phi} = {\textstyle\sum_j} \alpha_j \JorKetB{j}{1}
\enspace,
\end{equation*}
where $\alpha_j$ is the amplitude of the state on the Jordan subspace $\Subspace_{j}$. Starting from $\ket{\phi}$, if we alternate the binary projective measurements $\BMeas{\BProj{\MeasA}}$ and $\BMeas{\BProj{\MeasB}}$, then the distribution of the resulting measurement outcomes follows $\MWDist(p_j)$ with probability $\abs{\alpha_j}^2$.

To see why this distribution arises, consider the projective measurement $\Meas{\Jor} = (\Pi^\Jor_j)_j$ that projects onto the Jordan subspaces $\{\Subspace_{j}\}_j$ and returns $j$ as the outcome, i.e., each $\Pi^\Jor_j$ is a projection onto the $\Subspace_{j}$ subspace. Since $\Meas{\Jor}$ acts as the identity within every Jordan subspace $\Subspace_{j}$, a consequence of Jordan's lemma is that $\Meas{\Jor}$ commutes with both $\BMeas{\BProj{\MeasA}}$ and $\BMeas{\BProj{\MeasB}}$. Inserting the measurement $\Meas{\Jor}$ at any point in the sequence of alternating measurements cannot change the earlier measurement outcomes, and the distribution above arises from commuting $\Meas{\Jor}$ to the beginning of the procedure.

With Jordan's lemma in hand, our analysis of the ``state recovery'' procedure in the two-dimensional setting extends to higher dimensions by associating $\BMeas{\BProj{\MeasA}}$ with $\Test$ and $\BMeas{\BProj{\MeasB}}$ with $\mathsf{B}$. Since the procedure's running time is determined solely by the measurement outcomes, we recover the original state $\ket{\psi}$ after $2\NumAlternations$ alternating measurements except with probability
\begin{equation*}
\textstyle{\sum_{j}} |\alpha_j|^2 \cdot 2p_j(1-p_j)(1-2p_j(1-p_j))^{\NumAlternations-1} \leq \textstyle{\frac{1}{\NumAlternations} \sum_{j}} |\alpha_j|^2 = 1/\NumAlternations \enspace.
\end{equation*}

Summarizing, we obtain the following general lemma for binary projective measurements $\MeasA,\MeasB$:
\begin{quote}
    \textbf{Setup:} Fix measurements $\MeasA = \BMeas{\BProj{\MeasA}}$ and $\MeasB = \BMeas{\BProj{\MeasB}}$ and a state $\ket{\psi}$ in the span of $\BProj{\MeasA}$. Apply $\MeasB$ to $\ket{\psi}$ and let $\ket{\phi}$ be the post-measurement state.
    
    \textbf{Alternate:} Starting from $\ket{\phi}$, apply $\MeasA,\MeasB,\MeasA,\MeasB,\ldots$ until $\MeasA$ returns $1$. The procedure requires $O(1)$ measurements in expectation.
\end{quote}
In particular, if $\MeasA$ is our hypothetical $\Test$ measurement, then after the procedure terminates, we recover the state $\ket{\psi}$.

\subsection{State repair}
\label{subsec:tech-state-repair}

Perhaps unsurprisingly, we cannot efficiently implement the measurement $\Test$, and in general we cannot \emph{recover} the original state $\ket{\psi}$\enspace\footnote{One may notice that, for the setting of interactive arguments, $\ket{\psi}$ was generated by an efficient procedure. Nevertheless, there is no efficient procedure to re-generate the particular $\ket{\psi}$ that corresponds to the partial transcript seen so far. This is because $\ket{\psi}$ is the collapsed state leftover after measuring the prover's commitment message, and this may yield different outcomes every time.}. However, our goal is to efficiently extract successful attacker responses, which ``only'' requires that the probability  $\MeasA_r$ for a random $r \gets R$ returns $1$ (the ``success probability'') does not significantly decay with repeated applications. One of our key observations is that we can satisfy this requirement \emph{without} having to recover the original state.

\begin{quote}
    \textbf{Observation:} Restoring the state's \emph{success probability} suffices for extraction.
\end{quote}
We refer to the process of restoring the success probability as \emph{state repair}. Jumping ahead, the repaired state in our state repair procedure may be far in trace distance from the original state $\ket{\psi}$.

Below we explain how to modify the ``state recovery'' procedure from the previous subsection into a ``state repair'' procedure. Informally, we replace $\Test$ with a measurement $\TestRep$ having a relaxed guarantee on post-measurement states: when $\TestRep$ returns $1$, the post-measurement state has the same \emph{success probability} as $\ket{\psi}$.

\parhead{Defining $\TestRep$}
To define a projective measurement $\TestRep$ suitable for performing ``state repair'', it suffices to identify a linear space for which every $\ket{\psi}$ in the space has success probability at least $\Advantage$. We achieve this by identifying a particular operator $E$ with an extremely useful property: any eigenstate of $E$ with eigenvalue $p$ corresponds to a state $\ket{\psi}$ with success probability $p$. We then define $\TestRep$ to be the projection onto the direct sum of eigenspaces of $E$ with eigenvalue $p \geq \Advantage$. 

Our choice of $E$ must somehow capture the probability that a random $\MeasA_r$ for $r \gets R$ returns $1$ when applied to a state $\ket{\psi}$. Thus, a natural place to start is to consider the \emph{purification} of $\MixM(\{\MeasA_r\}_{r \in R})$, i.e., the procedure that applies $\Meas{r}$ for random $r \gets R$. For this, in addition to the original Hilbert space $\RegH$, we need an ancilla register $\RegR$. We initialize this register to a uniform superposition $\UniState{R}$ over the indices $r \in R$. We then define a binary projective measurement $\CProj$ (for ``controlled projection'') that applies $\{\Meas{r} = \BMeas{\SProj{r}}\}_{r}$ controlled on $\RegR$:

\begin{equation*}
\CProj \eqdef \BMeas{\BProj{\CProj}}
\text{ where }
\Pi^\CProj \eqdef {\textstyle\sum_{r \in R}} \ketbra{r}^{\RegR}  \otimes \SProj{r}
\enspace.
\end{equation*}
Letting $\MixM(\{\Meas{r}\}_{r};\ket{\psi})$ denote the application of $\MixM(\{\Meas{r}\}_{r})$ to $\ket{\psi}$, observe that applying $\CProj$ to $\UniState{R}^{\RegR} \otimes \ket{\psi}$ and tracing out $\RegR$ is equivalent to $\MixM(\{\Meas{r}\}_{r};\ket{\psi})$.

We remark that the measurement $\CProj$ represents a ``superposition query'' to the adversary $\Malicious{\ARGProver}(\ket{\psi}, \cdot)$. This is a qualitative departure from the techniques of \cite{Unruh12,DonFMS19}, which only make classical queries to the adversary. Superposition queries have been used in \cite{VidickZ21} in the context of proofs of \emph{quantum} knowledge. We find it interesting that superposition queries also arise in an essential way when extracting only classical knowledge.

We are now ready to define the operator $E$:
\begin{equation*}
E \eqdef \UniProj{R}^{\RegR} \cdot \BProj{\CProj} \cdot \UniProj{R}^{\RegR}
\text{ where $\UniProj{R}^{\RegR}$ denotes $\UniProj{R}^{\RegR}\otimes \Id^{\RegH}$}
\enspace.
\end{equation*}

As desired, any eigenstate of $E$ with positive eigenvalue $p$ is of the form $\UniState{R} \ket{\chi}$ where $\ket{\chi} \in \RegH$ has success probability $p$:
\begin{equation*}
  \Pr\Big[ \MixM(\{ \Meas{r} \};\ket{\chi}) = 1 \Big]
= \norm{\BProj{\CProj} \UniState{R}\ket{\chi}}^2
= (\bra{+_{R}} \otimes \bra{\chi}) E (\UniState{R} \otimes \ket{\chi})
= p
\enspace.
\end{equation*}
We stress that this implication only goes in one direction, as it is \emph{not true} that every state $\ket{\psi}$ with success probability $p$ corresponds to an eigenstate $\UniState{R}\ket{\psi}$ of $E$ with eigenvalue $p$. The precise relationship is summarized in the following observation:
\begin{quote} 
\textbf{Key fact:} For every state $\ket{\psi}$ with success probability $p$, $\UniState{R}\ket{\psi}$ can be written as a \emph{linear combination of eigenstates of $E$}
\begin{equation*}
\UniState{R}\ket{\psi} = {\textstyle\sum_{j} \alpha_{j} \UniState{R}\ket{\chi_j}}
\end{equation*}
where each $\UniState{R}\ket{\chi_j}$ has eigenvalue/success probability $p_j$, and $p = \sum_j \abs{\alpha_j}^2 p_j$.
\end{quote}

We now define $\Pi_{\Advantage}$ as the projector onto the span of eigenstates of $E$ with eigenvalue at least $\Advantage$. Let the corresponding binary-outcome measurement be $\TestRep \eqdef \BMeas{\Pi_{\Advantage}}$. Importantly, $\TestRep$ satisfies the following properties.
\begin{itemize}
    \item \textbf{Property 1: applied to any $2\Advantage$-successful state, $\TestRep$ returns $1$ with probability $\Advantage$.} 
    By the ``key fact'' above, any state $\UniState{R}\ket{\psi}$ where $\ket{\psi}$ has success probability $2\Advantage$ is a linear combination of eigenstates $\sum_j \alpha_j \UniState{R}\ket{\chi_j}$ where $2\Advantage = \sum_j \abs{\alpha_j}^2 p_j$. By Markov's inequality, there must be at least probability mass $\Advantage$ on eigenstates with eigenvalue/success probability at least $\Advantage$.

    \item \textbf{Property 2: when $\TestRep$ returns $1$, the post-measurement state is $\Advantage$-successful.} This follows from the definition of $\Pi_\Advantage$, since any state in the image of $\Pi_\Advantage$ is a linear combination of eigenstates $\UniState{R}\ket{\chi_j}$ where every $\ket{\chi_j}$ has success probability at least $\Advantage$. 

\end{itemize}

\parhead{A state repair procedure}
We now present a state prepare procedure using $\TestRep$. We stress that the following procedure is not yet sufficient to implement an efficient extraction procedure, since we have not specified how to implement $\TestRep$.

\begin{itemize}[noitemsep]
    \item[] Start with state $\UniState{R} \ket{\psi} \in (\RegR,\RegH)$ where $\ket{\psi}$ has success probability $2\Advantage$.
\begin{enumerate}[noitemsep]
    \item \textbf{Initialization.}
    Apply the measurement $\TestRep$ and abort if the outcome is $0$.
    \item \textbf{Measure-and-repair.}
    Repeat the following loop as many times as desired.
    \begin{enumerate}[nolistsep]
    \item (Measure step) Sample a random $r \gets R$ and apply $\MeasA_r$ to $\RegH$ to obtain an outcome $b$. Call this step ``successful'' if $b=1$.
    \item\label[step]{step:ideal-meas} (Repair step) 
   Repair the state by applying $\TestRep,\MeasA_{r},\TestRep,\MeasA_{r},\ldots$ until $\TestRep$ outputs $1$.
   \end{enumerate}
\end{enumerate}
\end{itemize}

Since the state $\ket{\psi}$ at the beginning of the procedure has success probability at least $2\Advantage$, the initialization step aborts with probability at most $1-\Advantage$.

We now analyze the execution of this procedure conditioned on the event that the initialization step \emph{does not abort}. We argue that the procedure can repeatedly iterate the measure-and-repair loop. By construction, the state after any (non-aborting) Initialization step or Repair step is in the span of $\Pi_\Advantage$. Thus, the state at the beginning of the Measure step is always in the span of $\Pi_\Advantage$. Since any state in the span of $\Pi_\Advantage$ is of the form $\UniState{R} \ket{\chi}$ where $\ket{\chi}$ has success probability $\Advantage$, the Measure step is equivalent to an application of $\MixM(\{\MeasA_r\}_r)$ that succeeds with at least $\Advantage$ probability.

\parhead{Recap}
We summarize what our state repair procedure implies for extraction. Suppose we are given a malicious prover $\Malicious{\ARGProver}(\ket{\psi},\cdot)$ for a collapsing interactive protocol who successfully answers a random challenge $r \gets R$ with success probability $\Advantage$. Moreover, assume that we can implement $\TestRep$. Then for any desired $c \in \mathbb{N}$, if the initialization step does not abort, then we can repeat the measure-and-repair iteration $c$ times and achieve the following:
\begin{itemize}[noitemsep]
    \item in each iteration we ask $\Malicious{\ARGProver}$ a random challenge $r \gets R$, and record an accepting transcript $(\Transcript,r,z)$ with probability at least $\Advantage$; and
    \item in expectation, the total number of measurements performed is $O(c)$.
\end{itemize}

While this is promising, we are far from done, because we do not know of a way to efficiently implement $\TestRep$. Hence, in~\cref{subsec:tech-approximate-jordan}, we show how to replace $\TestRep$ with an efficient measurement $\mathsf{ApproxTest}_{\Advantage}$ that \emph{approximates} the behavior of $\TestRep$. While the idea behind $\mathsf{ApproxTest}_{\Advantage}$ is natural, proving that $\mathsf{ApproxTest}_{\Advantage}$ suffices for extraction is the most technically challenging part of this work. 

\subsection{Approximate state repair}
\label{subsec:tech-approximate-jordan}

\parhead{Approximating $\TestRep$}
While we do not know how implement $\TestRep$, we have \emph{already} developed a way to \emph{approximate} $\TestRep$: the alternating measurements technique we used for state repair doubles as a way to \emph{estimate} the success probability! Note that estimating success probability (not repairing the state) was the motivation for alternating measurements in~\cite{MarriottW05,Zhandry20}.

Let $\UniState{R} \ket{\chi_j}$ be an eigenstate of $E = \UniProj{R}^{\RegR}\BProj{\CProj}\UniProj{R}^{\RegR}$ with eigenvalue $p_j$; recall from~\cref{subsec:tech-state-repair} that $\ket{\chi_j}$ has success probability $p_j$.

An important observation is that the eigenspectrum of $E$ corresponds to the decomposition of $(\RegR,\RegH)$ induced by Jordan's lemma for $\BProj{\CProj}$ and $\UniProj{R}^{\RegR}$: any state in the span of $\UniProj{R}^{\RegR}$ that is in the Jordan subspace $\Subspace_{j}$ must be an eigenstate $\UniState{R} \ket{\chi_j}$ of $E$ with eigenvalue $p_j$.

Then, by the analysis in~\cref{subsec:tech-state-recovery}, if we start from $\UniState{R} \ket{\chi_j}$ and apply the binary projective measurements $\CProj = \BMeas{\BProj{\CProj}}$ and $\Meas{\UniState{R}} = \BMeas{\UniProj{R}}$ in an alternating fashion:
\begin{equation*}
\CProj,\Meas{\UniState{R}},\CProj,\Meas{\UniState{R}},\ldots,
\end{equation*}
then the corresponding measurement outcomes $b_1,b_2,b_3,\ldots$ are distributed so that $\mathbf{1}_{b_i = b_{i+1}}$ (the indicator for the event $b_i = b_{i+1}$, where we define $b_0 \eqdef 1$) is an independent Bernoulli random variable with expectation $p_j$ for all $i \geq 0$.

Following~\cite{MarriottW05,Zhandry20}, this yields a simple, \emph{non-projective} procedure $\ApproxTest$: 
\begin{enumerate}[noitemsep]
    \item[] Initial state: $\UniState{R}\ket{\psi}$ for state $\ket{\psi}$ with success probability at least $2\Advantage$. \smallskip
    \item Apply $2t$ measurements $\CProj,\Meas{\UniState{R}},\ldots,\CProj,\Meas{\UniState{R}}$. Denote the binary outcome of the $i$-th measurement by $b_i$ and additionally set $b_0 \eqdef 1$.
    \item Compute $p \eqdef \frac{1}{2t} \cdot |\{ i \in \{1,\ldots,2t\} : b_{i-1} = b_{i} \}|$ and output $1$ if $p \geq \Advantage$.
\end{enumerate}

To analyze the distribution of outcomes from applying $\ApproxTest$ to an arbitrary state of the form $\UniState{R} \ket{\psi}$, we employ the method from~\cref{subsec:tech-state-recovery} of projecting onto the Jordan subspaces $\{\Subspace_{j}\}_j$ for the projectors $\BProj{\CProj}$ and $\UniProj{R}$. Since any state $\UniState{R} \ket{\psi}$ can be written as a linear combination $\sum_j \alpha_j \UniState{R} \ket{\chi_j}$ of eigenstates of $E$, the result of applying $\ApproxTest$ to $\UniState{R} \ket{\psi}$ can be described as follows, where $\TestRep$ is included for comparison:
\begin{itemize}
    \item $\TestRep$: Sample $j$ with probability $\abs{\alpha_j}^2$, and then return $1$ if $p_j \geq \Advantage$ and $0$ otherwise.
    \item $\ApproxTest$: Sample $j$ with probability $\abs{\alpha_j}^2$; flip $2t$ independent Bernoulli random variables with parameter $p_j$; let $p$ be the fraction of flips that return $1$; output $1$ if $p \geq \Advantage$ and $0$ otherwise.
\end{itemize}
Thus, we have from~\cref{subsec:tech-state-repair} a working extraction procedure based on $\TestRep$, and now a way to efficiently approximate $\TestRep$ to any desired precision using $\ApproxTest$. However, turning this intuition into a working extraction procedure requires overcoming a number of technical challenges, stemming from the fact that $\ApproxTest$ as defined above is \emph{not} a projective measurement.

\parhead{Challenge: $\ApproxTest$ is not projective}
In~\cref{subsec:tech-state-recovery} we claimed that if a state $\ket{\psi}$ initially in the span of some projector $\BProj{\MeasA}$ is disturbed by an binary-outcome measurement $\MeasB$, then by performing alternating measurements, we can return our state to the span of $\BProj{\MeasA}$ in $2T$ measurements except with probability $1/T$. It is not clear that such a statement holds if $\MeasA = \BMeas{\BProj{\MeasA}}$ is replaced by a \emph{non-projective} measurement.

Concretely, we need to analyze the behavior of the alternating measurement procedure
\begin{equation*}
\ApproxTest,\MeasA_{r},\ApproxTest,\MeasA_{r},\ldots
\end{equation*}
\emph{where $\ApproxTest$ itself is an alternating measurements procedure}, i.e., $\ApproxTest$ runs
\begin{equation*}
\CProj,\Meas{\UniState{R}},\CProj,\Meas{\UniState{R}},\ldots
\enspace.
\end{equation*}
The core technical challenge is to prove that the guarantees of alternating measurements used in~\cref{subsec:tech-state-repair} extend to ``nested'' alternating measurements.

\parhead{Can we appeal to trace distance?}
One might hope to show that for large $t$, the post-measurement states of $\ApproxTest$ and $\TestRep$ are close. If $\ApproxTest \ket{\psi}$ were sufficiently close in trace distance to $\TestRep \ket{\psi}$ for all $\ket{\psi}$, then we could show that any property of the procedure $\TestRep,\MeasA_{r},\TestRep,\MeasA_{r},\ldots$ still applies if we swap out $\TestRep$ for $\ApproxTest$, up to a small loss.

Unfortunately, a simple example illustrates why such a claim about the trace distance is false. Suppose we have an eigenstate $\UniState{R}\ket{\chi_j}$ of the operator $E$ with eigenvalue $p_j = \Advantage$. Then since $\TestRep$ projects onto eigenspaces of $E$ with eigenvalue $\geq \Advantage$, applying $\TestRep$ to this state returns $1$ with probability $1$. However, applying $\ApproxTest$ returns $1$ with essentially $1/2$ probability, since it performs $\Advantage$-weighted coin flips and only accepts if the fraction of $1$'s is at least $\Advantage$.

\parhead{Expanding the Hilbert space}
Since a trace distance argument is unlikely to work, the next idea is to simply force $\ApproxTest$ to be projective by expanding the Hilbert space. The hope is that by making the measurement projective, we regain our ability to apply Jordan's lemma. Specifically, we introduce $2t$-qubit ancilla registers $\RegL$ to store the $2t$ outcomes of $\CProj$ and $\Meas{\UniState{R}}$, which we perform \emph{coherently}, meaning that instead of actually performing the measurements, we apply corresponding unitaries to CNOT the measurement results onto the ancilla registers $\RegL$. To ensure the measurement is projective, we must also uncompute all the (coherent applications of) $\CProj$ and $\Meas{\UniState{R}}$ once we obtain the probability estimate $p$.

\parhead{Technical challenge: $\ApproxTest$ is only meaningful if $\RegL$ is $\ket{0^{2t}}$} Unfortunately, expanding the Hilbert space introduces a new problem. If $\ApproxTest$ computes its estimate of $p$ using a $2t$-qubit ancilla register $\RegL$, then we have to ensure the register $\RegL$ is set to $\ket{0^{2t}}$, or else the estimate of $p$, computed based on the contents of the $\RegL$ register, may be meaningless. A natural idea would be to ensure that, before any application of $\ApproxTest$, we trace out the potentially non-zero registers $\RegL$ and manually reset them to $\ket{0^{2t}}$. However, doing this is equivalent to performing the original non-projective version of $\ApproxTest$, and we would be back where we started.

\parhead{Resolution: project $\RegL$ onto $\ket{0^{2t}}$}
Instead we modify the measurement $\MeasA_{r}$ (which originally acts as identity on the $\RegL$ registers) to additionally project $\RegL$ onto $\ket{0^{2t}}$. This modified measurement $\MeasA_{r,b}$ returns $1$ if and only if $\MeasA_{r}$ returns $b$ \emph{and} the binary projective measurement of $\RegL$ onto $\ket{0^{2t}}$ returns $1$; in particular, $\MeasA_{r,b}$ is still a binary projective measurement. Proving that the state is repaired after the \emph{projective} version of $\ApproxTest$ returns $1$ requires a very careful analysis of the properties of the Jordan decomposition induced by (projective) $\ApproxTest$ and $\MeasA_{r,b}$. The analysis of this procedure is the most technical component of the paper; see \cref{sec:state-repair} for details.

\subsection{Quantum strategies for repeated games}

Our quantum rewinding techniques can be cast in the language of \emph{single-player games}, i.e., a referee asks a player a random question $r \gets R$, the player responds with some $z$, and wins if $f(r,z) = 1$ for some predicate $f$. Mapped onto this setting, the quantum rewinding task is to transform any efficient quantum strategy for winning the game once into an efficient strategy that can win in many rounds in an $n$-fold sequential repetition of this game, where in each repetition the referee \emph{only measures whether the player has won}. Importantly, we are only given one copy of the quantum state used by the one-time strategy.

In the context of rewinding, we set $f(r,z) \eqdef V(\Transcript,r,z)$ to be the verifier predicate with partial transcript $\Transcript$. The strategy of the prover in the last round of the protocol is then an efficient strategy for the one-time game. To obtain multiple accepting transcripts, a rewinding extractor plays the sequential repetition of the game. Note that by measuring $\RegZ$ in the computational basis if the player has won, the extractor obtains an accepting response $z$; collapsing ensures that this additional measurement is not detectable by an efficient strategy.

This gives a conceptually simple characterization of the quantum rewinding task, which may be of independent interest. In the body of the paper, we develop general techniques that apply to any single-player game (see \cref{sec:alt-proj-new}).

\subsection{Discussion: is collapsing necessary for Kilian's protocol?}

Since collision-resistant hash functions (CRHFs) suffice in the classical setting, a natural question is whether Kilian's protocol (in its original formulation using Merkle trees) is post-quantum secure when instantiated with any post-quantum CRHF. We do not know the answer, but believe that the existing evidence points to collision resistance being \emph{insufficient} for Kilian's protocol.

Ambainis et al.\ \cite{AmbainisRU14} give a counter-example showing that, in general, collision resistance alone is likely not enough for rewinding in interactive protocols. The counter-example works by giving a construction of an \emph{equivocal} hash function.\footnote{The terminology ``equivocal'' is due to~\cite{amos2020one}.} This is a hash function that is collision resistant, but where it is possible to break the security of the hash function as a commitment scheme. For example, it is possible to send a hash image $y$, and then upon receiving an arbitrary prefix $z$, ``open'' that image to a pre-image $x$ of $y$ with prefix $z$. Such equivocal hash functions do not exist classically, due to a rewinding argument, but Ambainis et al.~\cite{AmbainisRU14} show how to construct them relative to a quantum oracle. Amos et al.~\cite{amos2020one} later give a construction relative to a classical oracle.

While Ambainis et al.\ use equivocal hash functions to give unsound interactive proofs, the results do not immediately apply to the case of Kilian's protocol. This is because Merkle trees do not necessarily preserve equivocality of the component hash function. In particular, equivocating Merkle trees would seem to require equivocating the underlying hash function on either the left half or the right half of the input. On the other hand, only a very short prefix can be equivocated by the existing works.\footnote{Ambanis et al. allow for a richer class of equivocations than just prefixes, but they must still be short relative to the input length.}

Nevertheless, we observe that a slight variant of Merkle trees \emph{does} preserve the equivocality of the underlying hash function. Namely, if each node is obtained by hashing the children together with an arbitrarily long auxiliary string. By setting the length of the auxiliary strings sufficiently long, one can equivocate on a prefix long enough to arbitrarily choose the child nodes. This allows for full equivocality of Merkle trees, while still preserving collision resistance. More generally, it yields a \emph{vector commitment} that is collision resistant, but equivocal and therefore insufficient for the post-quantum security of Kilian's protocol.

We leave as an interesting open question whether Kilian's protocol instantiated with vanilla Merkle trees using a post-quantum CRHF is sufficient for post-quantum security. We note, however, that if Kilian's protocol instantiated with a CRHF is \emph{not} post-quantum secure, then it means the CRHF is not collapsing. As shown by Zhandry~\cite{Zhandry19-lightning}, such a CRHF would yield strong cryptographic objects, namely ``quantum lightning'', which have no known instantiations under well-studied assumptions.\footnote{More precisely, Zhandry~\cite{Zhandry19-lightning} shows that non-collapsing CRHFs imply \emph{infinitely-often secure} quantum lightning, a slightly weaker notion.}

\iffocs
\clearpage
\setcounter{tocdepth}{2}
\tableofcontents
\clearpage
\fi

\doclearpage
\section{Preliminaries}
\label{sec:preliminaries}

The security parameter is denoted by $\secp$. A function $f \colon \Naturals \rightarrow [0,1]$ is \emph{negligible}, denoted $f(\secp) = \negl(\secp)$, if it decreases faster than the inverse of any polynomial. A probability is \emph{overwhelming} if is at least $1- \negl(\secp)$ for a negligible function $\negl(\secp)$. For any positive integer $n$, let $[n] \coloneqq \{1,2,\dots,n\}$. For a set $R$, we write $r \gets R$ to denote a uniformly random sample $r$ drawn from $R$.

\subsection{Concentration inequalities}
\label{sec:concentration-inequalities}

We denote by $\Bin(n,p)$ the binomial distribution with $n$ trials and success probability $p$ (sum of $n$ independent Bernoullis with parameter $p$). We use the following Chernoff bounds.

\begin{proposition}[additive Chernoff bound]
\label{prop:chernoff}
For $\delta,\epsilon > 0$, define $\ChernoffN_{\epsilon,\delta} \eqdef \frac{\log(1/2\delta)}{2\epsilon^2}$. If $n \geq \ChernoffN_{\epsilon,\delta}$ then
\begin{equation*}
\Pr_{X \gets \Bin(n,p)}\left[\, p-\epsilon \leq \frac{X}{n} \leq p+\epsilon\, \right]
\geq 1-\delta
\enspace.
\end{equation*}
\end{proposition}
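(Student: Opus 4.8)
The plan is to read this off from Hoeffding's inequality together with a union bound over the two tails. Recall that $X \sim \Bin(n,p)$ means $X = \sum_{i=1}^{n} Y_i$ for independent $Y_i \in \{0,1\}$ with $\Expectation[Y_i] = p$; in particular each $Y_i$ takes values in $[0,1]$ and $\Expectation[X/n] = p$. Hoeffding's inequality for sums of independent $[0,1]$-valued random variables gives, for every $\epsilon > 0$,
\[
\Pr\!\left[\frac{X}{n} - p \ge \epsilon\right] \le e^{-2n\epsilon^2}
\qquad\text{and}\qquad
\Pr\!\left[p - \frac{X}{n} \ge \epsilon\right] \le e^{-2n\epsilon^2}.
\]
A union bound shows the event $\{\,|X/n - p| > \epsilon\,\}$ has probability at most $2e^{-2n\epsilon^2}$, so its complement $\{\,p - \epsilon \le X/n \le p + \epsilon\,\}$ has probability at least $1 - 2e^{-2n\epsilon^2}$.

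It then remains to check the arithmetic: solving $2e^{-2n\epsilon^2} \le \delta$ for $n$ yields $n \ge \frac{\log(2/\delta)}{2\epsilon^2}$, which is $\ChernoffN_{\epsilon,\delta}$ (with $\log$ the natural logarithm), and substituting $n \ge \ChernoffN_{\epsilon,\delta}$ back in gives the claimed bound $1 - 2e^{-2n\epsilon^2} \ge 1-\delta$. If one prefers not to invoke Hoeffding's inequality as a black box, the same tail bounds follow from the exponential Markov (Chernoff) method applied directly to $X$: for $s > 0$,
\[
\Pr\!\left[\frac{X}{n} - p \ge \epsilon\right] \le e^{-s n (p+\epsilon)} \,\Expectation\!\left[e^{sX}\right] = e^{-s n (p+\epsilon)}\,(1 - p + p e^{s})^{n},
\]
and bounding the single-coordinate moment generating function via Hoeffding's lemma ($\Expectation[e^{s Y_i}] \le e^{s p + s^2/8}$) and optimizing over $s$ recovers the $e^{-2n\epsilon^2}$ bound; the lower tail is handled symmetrically.

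I do not expect a real obstacle here — this is a standard concentration estimate — so the only step requiring any care is matching the constants in the stated threshold $\ChernoffN_{\epsilon,\delta}$: one should confirm that the factor $2$ contributed by the union bound and the base of the logarithm are consistent with the displayed form $\tfrac{\log(1/2\delta)}{2\epsilon^2}$, adjusting the write-up (or the statement) if they are not.
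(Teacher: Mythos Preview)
The paper states this proposition as a standard preliminary and does not supply a proof, so there is nothing to compare against; your Hoeffding-plus-union-bound argument is the standard derivation and is correct. Your closing caveat is on point: solving $2e^{-2n\epsilon^2}\le\delta$ gives $n\ge \frac{\log(2/\delta)}{2\epsilon^2}$, whereas the paper's displayed threshold $\frac{\log(1/2\delta)}{2\epsilon^2}$, read as $\frac{\log(1/(2\delta))}{2\epsilon^2}$, differs from this by an additive $\frac{\log 2}{\epsilon^2}$---this looks like a typo in the paper rather than a flaw in your argument, and in any case the paper only ever uses $n_{\epsilon,\delta}$ up to constant factors (inside a $\max$ and an $O(\cdot)$ in the definition of $t$).
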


\begin{proposition}[multiplicative Chernoff bound]
\label{prop:chernoff-2}
Let $x_1,\ldots,x_N \in \Bits$ and define $\mu \eqdef \frac{K}{N} \sum_{i=1}^{N} x_{i}$. Let $X_1,\ldots,X_K$ be independent uniformly random samples from $x_1,\ldots,x_N$. Then
\begin{equation*}
\Pr\left[\, \sum_{i=1}^{K} X_K \geq (1+\delta)\mu \,\right]
\leq e^{-\delta^2 \mu/3}
\enspace.
\end{equation*}
\end{proposition}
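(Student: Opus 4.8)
The plan is to reduce the statement to the textbook multiplicative Chernoff bound for a sum of i.i.d.\ Bernoullis and then run the standard exponential-moment (Chernoff--Cram\'er) argument. First I would observe that each $X_i$ is an independent sample drawn uniformly from the multiset $\{x_1,\dots,x_N\}$, hence an independent Bernoulli random variable with parameter $p \eqdef \frac{1}{N}\sum_{j=1}^N x_j$; consequently $S \eqdef \sum_{i=1}^K X_i$ is distributed as $\Bin(K,p)$ and $\Expectation[S] = Kp = \mu$. So it suffices to prove the abstract fact: if $S \sim \Bin(K,p)$ with $\Expectation[S]=\mu$, then $\Pr[S \geq (1+\delta)\mu] \leq e^{-\delta^2\mu/3}$.

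For the tail bound itself I would invoke the exponential Markov inequality: for every $t > 0$,
\begin{equation*}
\Pr[S \geq (1+\delta)\mu] = \Pr\!\left[e^{tS} \geq e^{t(1+\delta)\mu}\right] \leq e^{-t(1+\delta)\mu}\,\Expectation[e^{tS}] \enspace.
\end{equation*}
By independence of the $X_i$, $\Expectation[e^{tS}] = \prod_{i=1}^K \Expectation[e^{tX_i}] = \bigl(1 + p(e^t-1)\bigr)^K \leq e^{Kp(e^t-1)} = e^{\mu(e^t-1)}$, using $1+x \leq e^x$. Choosing the optimizing value $t = \ln(1+\delta) > 0$ gives $e^t - 1 = \delta$ and therefore
\begin{equation*}
\Pr[S \geq (1+\delta)\mu] \leq \left(\frac{e^{\delta}}{(1+\delta)^{1+\delta}}\right)^{\mu} \enspace.
\end{equation*}

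Finally I would convert this closed form into the stated bound via the elementary one-variable inequality $(1+\delta)\ln(1+\delta) - \delta \geq \delta^2/3$, which I would verify by noting that $g(\delta) \eqdef (1+\delta)\ln(1+\delta) - \delta - \delta^2/3$ satisfies $g(0)=0$ and $g'(\delta) = \ln(1+\delta) - \tfrac{2}{3}\delta \geq 0$ on the relevant range of $\delta$; exponentiating by $\mu$ then yields $\bigl(e^{\delta}/(1+\delta)^{1+\delta}\bigr)^{\mu} \leq e^{-\delta^2\mu/3}$. I do not expect a genuine obstacle here, since the whole argument is standard; the only point requiring mild care is this last analytic step, because the $e^{-\delta^2\mu/3}$ form of the bound is sharp only for bounded $\delta$ (for large $\delta$ one instead gets the weaker $e^{-\delta\mu/3}$), but the applications of this proposition in the paper only use it with small $\delta$, so the stated form is adequate.
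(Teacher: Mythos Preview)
The paper does not actually prove \cref{prop:chernoff-2}; it is stated in the preliminaries as a standard concentration inequality and used later without further justification. Your proposal supplies exactly the textbook Chernoff--Cram\'er argument one would expect here, and it is correct: the reduction to i.i.d.\ Bernoullis, the exponential Markov step, the mgf bound via $1+x \le e^x$, and the optimizing choice $t=\ln(1+\delta)$ are all standard and sound. Your caveat about the $e^{-\delta^2\mu/3}$ form holding only for bounded $\delta$ is appropriate; the one place the paper invokes this bound (\cref{claim:kilian-pcp-good}) uses $\delta=1$, which is safely within range.
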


\subsection{Quantum preliminaries and notation}
\label{sec:quantum-prelims}

\parhead{Quantum information}
A (pure) \emph{quantum state} is a vector $\ket{\psi}$ in a complex Hilbert space $\RegH$ with $\norm{\ket{\psi}} = 1$; in this work, $\RegH$ is always finite-dimensional. We denote by $\Hermitians{\RegH}$ the space of Hermitian operators on $\RegH$. A \emph{density matrix} is a Hermitian operator $\DMatrix \in \Hermitians{\RegH}$ with $\Tr(\DMatrix) = 1$. A density matrix represents a probabilistic mixture of pure states (a mixed state); the density matrix corresponding to the pure state $\ket{\psi}$ is $\ketbra{\psi}$. Typically we divide a Hilbert space into \emph{registers}, e.g. $\RegH = \RegH_1 \otimes \RegH_2$. We sometimes write, e.g., $\DMatrix^{\RegH_1}$ to specify that $\DMatrix \in \Hermitians{\RegH_1}$.

A unitary operation is represented by a complex matrix $U$ such that $U \contra{U} = \Id$. The operation $U$ transforms the pure state $\ket{\psi}$ to the pure state $U \ket{\psi}$, and the density matrix $\DMatrix$ to the density matrix $U \DMatrix \contra{U}$.

A \emph{projector} $\Projector$ is a Hermitian operator ($\contra{\Projector} = \Projector$) such that $\Projector^2 = \Projector$. A \emph{projective measurement} is a collection of projectors $\ProjMeasurement = (\Projector_i)_{i \in S}$ such that $\sum_{i \in S} \Projector_i = \Id$. This implies that $\Projector_i \Projector_j = 0$ for distinct $i$ and $j$ in $S$. The application of a projective measurement to a pure state $\ket{\psi}$ yields outcome $i \in S$ with probability $p_i = \norm{\Projector_i \ket{\psi}}^2$; in this case the post-measurement state is $\ket{\psi_i} = \Projector_i \ket{\psi}/\sqrt{p_i}$. We will sometimes refer to the post-measurement state $\Projector_i \ket{\psi}/\sqrt{p_i}$ as the result of applying $\ProjMeasurement = (\Projector_i)_{i \in S}$ to $\ket{\psi}$ and \emph{post-selecting} (i.e., conditioning) on outcome $i$. A state $\ket{\psi}$ is an \emph{eigenstate} of $\ProjMeasurement$ if it is an eigenstate of every $\Projector_i$.

A two-outcome projective measurement is called a \emph{binary projective measurement}, and is written as $\ProjMeasurement = \BMeas{\Projector}$, where $\Projector$ is associated with the outcome $1$, and $\Id - \Projector$ with the outcome $0$.

General (non-unitary) evolution of a quantum state can be represented via a \emph{completely-positive trace-preserving (CPTP)} map $T \colon \Hermitians{\RegH} \to \Hermitians{\RegH'}$. We omit the precise definition of these maps in this work; we will only use the facts that they are trace-preserving (for every $\DMatrix \in \Hermitians{\RegH}$ it holds that $\Tr(T(\DMatrix)) = \Tr(\DMatrix)$) and linear.

For every CPTP map $T \colon \Hermitians{\RegH} \to \Hermitians{\RegH}$ there exists a \emph{unitary dilation} $U$ that operates on an expanded Hilbert space $\RegH \otimes \RegK$, so that $T(\DMatrix) = \Tr_{\RegK}(U (\rho \otimes \ketbra{0}^{\RegK}) U^{\dagger})$. This is not necessarily unique; however, if $T$ is described as a circuit then there is a dilation $\CPTPUnitary$ represented by a circuit of size $O(|T|)$.

For Hilbert spaces $\RegA,\RegB$ the \emph{partial trace} over $\RegB$ is the unique CPTP map $\Tr_{\RegB} \colon \Hermitians{\RegA \otimes \RegB} \to \Hermitians{\RegA}$ such that $\Tr_{\RegB}(\DMatrix_A \otimes \DMatrix_B) = \Tr(\DMatrix_B) \DMatrix_A$ for every $\DMatrix_A \in \Hermitians{\RegA}$ and $\DMatrix_B \in \Hermitians{\RegB}$.

A \emph{general measurement} is a CPTP map $\Measurement \colon \Hermitians{\RegH} \to \Hermitians{\RegH \otimes \RegO}$, where $\RegO$ is an ancilla register holding a classical outcome. Specifically, given measurement operators $\{ M_{i} \}_{i=1}^{N}$ such that $\sum_{i=1}^{N} M_{i} M_{i}^{\dagger} = \Id$ and a basis $\{ \ket{i} \}_{i=1}^{N}$ for $\RegO$, $\Measurement(\DMatrix) \eqdef \sum_{i=1}^{N} (M_{i} \DMatrix M_{i}^{\dagger} \otimes \ketbra{i}^{\RegO})$. We will sometimes implicitly discard the outcome register. A projective measurement is simply a general measurement where the $M_{i}$ are projectors. A measurement induces a probability distribution over its outcomes given by $\Pr[i] = \Tr(\ketbra{i}^{\RegO} \Measurement(\DMatrix))$; we denote sampling from this distribution by $i \gets \Measurement(\DMatrix)$.

The \emph{trace distance} between states $\DMatrix,\DMatrixB$, denoted $d(\DMatrix,\DMatrixB)$, is defined as $\frac{1}{2}\Tr( \sqrt{(\DMatrix - \DMatrixB)^2})$. The trace distance is contractive under CPTP maps, i.e. for any CPTP map $T$, $d(T(\DMatrix),T(\DMatrixB)) \leq d(\DMatrix,\DMatrixB)$. It follows that for any measurement $\Measurement$, the statistical distance between the distributions $\Measurement(\DMatrix)$ and $\Measurement(\DMatrixB)$ is bounded by $d(\DMatrix,\DMatrixB)$. We have the following \emph{gentle measurement lemma}, which bounds how much a state is disturbed by applying a measurement whose outcome is almost certain.

\begin{lemma}[Gentle Measurement~\cite{Winter99}]
\label{lemma:gentle-measurement}
    Let $\DMatrix \in \Hermitians{\RegH}$ and $\ProjMeasurement = \BMeas{\Projector}$ be a binary projective measurement on $\RegH$ such that $\Tr(\Projector \DMatrix) \geq 1-\delta$. Let $\DMatrix'$ be the state after applying $\ProjMeasurement$ to $\DMatrix$ and post-selecting on obtaining outcome $1$. Then
    \[d(\DMatrix,\DMatrix') < 2\sqrt{\delta}.\]
\end{lemma}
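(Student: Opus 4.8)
The plan is to lift the statement to pure states via a purification, where the trace distance has a closed form in terms of the overlap, and then push the bound back down using contractivity of the trace distance under the partial trace (recorded just above in the preliminaries). We may assume $\delta < 1$, since otherwise $2\sqrt{\delta} \geq 2 > 1 \geq d(\DMatrix,\DMatrix')$ and there is nothing to prove; in particular $\Tr(\Projector\DMatrix) \geq 1-\delta > 0$, so $\DMatrix' = \Projector\DMatrix\Projector/\Tr(\Projector\DMatrix)$ is well defined.

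First I would fix a purification $\ket{\Psi} \in \RegH \otimes \RegH'$ of $\DMatrix$, so that $\Tr_{\RegH'}(\ketbra{\Psi}) = \DMatrix$ and, since $\Projector$ acts trivially on $\RegH'$, $\bra{\Psi}(\Projector \otimes \Id^{\RegH'})\ket{\Psi} = \Tr(\Projector\DMatrix) \geq 1-\delta$. Let $\ket{\Psi'} \eqdef (\Projector \otimes \Id^{\RegH'})\ket{\Psi}/\sqrt{\Tr(\Projector\DMatrix)}$ be the normalized state obtained by applying the binary projective measurement $\BMeas{\Projector \otimes \Id^{\RegH'}}$ to $\ket{\Psi}$ and post-selecting on outcome $1$. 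Two elementary computations then finish the job: using that conjugation by $\Projector \otimes \Id^{\RegH'}$ commutes with $\Tr_{\RegH'}$ (immediate from linearity and the fact that $\Projector$ acts only on $\RegH$), one gets $\Tr_{\RegH'}(\ketbra{\Psi'}) = \Projector\DMatrix\Projector/\Tr(\Projector\DMatrix) = \DMatrix'$; and $\braket{\Psi|\Psi'} = \bra{\Psi}(\Projector \otimes \Id^{\RegH'})\ket{\Psi}/\sqrt{\Tr(\Projector\DMatrix)} = \sqrt{\Tr(\Projector\DMatrix)}$. Plugging the latter into the standard pure-state identity $d(\ketbra{u},\ketbra{v}) = \sqrt{1-|\braket{u|v}|^2}$ gives $d(\ketbra{\Psi},\ketbra{\Psi'}) = \sqrt{1 - \Tr(\Projector\DMatrix)} \leq \sqrt{\delta}$. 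Finally, applying the CPTP map $\Tr_{\RegH'}$ to both states and invoking contractivity of trace distance, together with the first computation, yields $d(\DMatrix,\DMatrix') \leq d(\ketbra{\Psi},\ketbra{\Psi'}) \leq \sqrt{\delta}$, which for $\delta > 0$ is strictly smaller than $2\sqrt{\delta}$, as claimed (and in fact slightly stronger).

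I do not expect any real obstacle here; the only points that require a word of care are the partial-trace identity $\Tr_{\RegH'}((\Projector \otimes \Id^{\RegH'})\,\ketbra{\Psi}\,(\Projector \otimes \Id^{\RegH'})) = \Projector\DMatrix\Projector$ and the pure-state trace-distance formula, both routine. A purification-free alternative would be to write $\DMatrix - \DMatrix' = (\DMatrix - \Projector\DMatrix\Projector) + (\Projector\DMatrix\Projector - \DMatrix')$, expand the first bracket into blocks via $\Id = \Projector + (\Id - \Projector)$, bound the off-diagonal blocks by the Cauchy--Schwarz inequality $\|\Projector\DMatrix^{1/2} \cdot \DMatrix^{1/2}(\Id-\Projector)\|_1 \leq \|\Projector\DMatrix^{1/2}\|_2\,\|\DMatrix^{1/2}(\Id-\Projector)\|_2 \leq \sqrt{\delta}$, and bound $\|(\Id-\Projector)\DMatrix(\Id-\Projector)\|_1 = \Tr((\Id-\Projector)\DMatrix) \leq \delta$ together with $\|\Projector\DMatrix\Projector - \DMatrix'\|_1 = 1 - \Tr(\Projector\DMatrix) \leq \delta$; summing and halving gives $d(\DMatrix,\DMatrix') \leq \sqrt{\delta} + \delta < 2\sqrt{\delta}$. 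Since the purification route is cleaner and yields the tighter constant, that is the one I would write up.
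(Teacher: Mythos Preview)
The paper does not prove this lemma; it is stated in the preliminaries with a citation to \cite{Winter99} and used as a black box. Your purification argument is correct and is in fact the standard proof of the gentle measurement lemma, yielding the slightly sharper bound $d(\DMatrix,\DMatrix') \leq \sqrt{\delta}$. One tiny caveat: the strict inequality in the lemma statement fails at $\delta = 0$ (both sides are zero), so your parenthetical ``for $\delta > 0$'' is not just cosmetic---but this is an artifact of the lemma's phrasing, not a flaw in your argument.
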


\parhead{Quantum algorithms}
In this work, a \emph{quantum adversary} is a family of quantum circuits $\{ \Adversary_{\secp} \}_{\secp \in \Naturals}$ represented classically using some standard universal gate set. A quantum adversary is \emph{polynomial-size} if there exists a polynomial $p$ and $\secp_0 \in \Naturals$ such that for all $\secp > \secp_0$ it holds that $|\Adversary_{\secp}| \leq p(\secp)$ (i.e., quantum adversaries have classical non-uniform advice).

In this work we refer to the \emph{expected running time} of quantum algorithms. This means that there is a classical control algorithm that applies quantum circuits of a fixed size and decides whether to terminate based on the classical outputs of those circuits. The expected running time is then the expected number of unit operations, classical or quantum, applied during this execution.

\parhead{Black-box access}
A circuit $C$ with black-box access to a unitary $U$, denoted $C^{U}$, is a standard quantum circuit with special gates that act as $U$ and $U^{\dagger}$. We also use $C^{T}$ to denote black-box access to a map $T$, which we interpret as $C^{\CPTPUnitary}$ for a unitary dilation $\CPTPUnitary$ of $T$; all of our results will be independent of the choice of dilation. This allows, for example, the ``partial application'' of a projective measurement, and the implementation of a general measurement via a projective measurement on a larger space.

\subsection{Jordan's lemma}
\label{subsec:jordan}

We state Jordan's lemma and, for completeness, provide a proof that roughly follows \cite{Regev06-XXX}.

\begin{lemma}[\cite{Jordan75}]
\label{lemma:jordan}
For any two Hermitian projectors $\BProj{\MeasA}$ and $\BProj{\MeasB}$ on a Hilbert space $\RegH$, there exists an orthogonal decomposition of $\RegH = \bigoplus_j \Subspace_{j}$ into one-dimensional and two-dimensional subspaces $\{\Subspace_{j}\}_{j}$ (the \emph{Jordan subspaces}), where each $\Subspace_{j}$ is invariant under both $\BProj{\MeasA}$ and $\BProj{\MeasB}$. Moreover:
\begin{itemize}[noitemsep]
\item in each one-dimensional space, $\BProj{\MeasA}$ and $\BProj{\MeasB}$ act as identity or rank-zero projectors; and
\item in each two-dimensional subspace $\Subspace_{j}$, $\BProj{\MeasA}$ and $\BProj{\MeasB}$ are rank-one projectors. In particular, there exist distinct orthogonal bases $\{\JorKetA{j}{1},\JorKetA{j}{0}\}$ and $\{\JorKetB{j}{1},\JorKetB{j}{0}\}$ for $\Subspace_{j}$ such that $\BProj{\MeasA}$ projects onto $\JorKetA{j}{1}$ and $\BProj{\MeasB}$ projects onto $\JorKetB{j}{1}$.
\end{itemize}
\end{lemma}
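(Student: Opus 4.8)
The plan is to analyze the single Hermitian operator $M \eqdef \BProj{\MeasA} + \BProj{\MeasB}$ (or equivalently $\BProj{\MeasA}\BProj{\MeasB}\BProj{\MeasA}$) and use its spectral decomposition to produce the claimed subspace decomposition. First I would observe that $M$ is Hermitian, hence by the spectral theorem $\RegH$ decomposes into an orthogonal direct sum of eigenspaces $\RegH = \bigoplus_\lambda V_\lambda$, where $V_\lambda = \ker(M - \lambda \Id)$. The key structural claim is that each $V_\lambda$ is invariant under both $\BProj{\MeasA}$ and $\BProj{\MeasB}$: this follows because $\BProj{\MeasA}$ commutes with $M$. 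Indeed, $\BProj{\MeasA} M = \BProj{\MeasA}(\BProj{\MeasA} + \BProj{\MeasB}) = \BProj{\MeasA} + \BProj{\MeasA}\BProj{\MeasB}$ and $M \BProj{\MeasA} = \BProj{\MeasA} + \BProj{\MeasB}\BProj{\MeasA}$, so $[\BProj{\MeasA}, M] = \BProj{\MeasA}\BProj{\MeasB} - \BProj{\MeasB}\BProj{\MeasA} = [\BProj{\MeasA},\BProj{\MeasB}]$ — which is \emph{not} obviously zero. So that naive approach needs a fix: instead I would use $M' \eqdef \BProj{\MeasA}\BProj{\MeasB}\BProj{\MeasA}$. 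This operator is Hermitian and positive semidefinite, it kills $\ker \BProj{\MeasA}$, and on $\image \BProj{\MeasA}$ it is the natural operator to diagonalize. The cleaner route: decompose $\RegH = \image\BProj{\MeasA} \oplus \ker\BProj{\MeasA}$, and within each of those two pieces further decompose using $\image\BProj{\MeasB}$ versus $\ker\BProj{\MeasB}$; the ``generic'' part is handled by diagonalizing $\BProj{\MeasA}\BProj{\MeasB}\BProj{\MeasA}$ restricted to $\image\BProj{\MeasA}$.

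Concretely, the steps I would carry out are: (1) Handle the four ``degenerate'' subspaces first: $\image\BProj{\MeasA} \cap \image\BProj{\MeasB}$, $\image\BProj{\MeasA} \cap \ker\BProj{\MeasB}$, $\ker\BProj{\MeasA} \cap \image\BProj{\MeasB}$, $\ker\BProj{\MeasA} \cap \ker\BProj{\MeasB}$. Each of these is manifestly invariant under both projectors, which act there as identity or zero, so they split into one-dimensional Jordan subspaces. (2) Let $\RegH'$ be the orthogonal complement of the span of those four subspaces. On $\RegH'$, consider $M' = \BProj{\MeasA}\BProj{\MeasB}\BProj{\MeasA}$; I would show $M'$ restricted to $\image\BProj{\MeasA} \cap \RegH'$ is Hermitian with all eigenvalues strictly in the open interval $(0,1)$ (eigenvalue $1$ would force a vector in $\image\BProj{\MeasB}$, eigenvalue $0$ a vector in $\ker\BProj{\MeasB}$, both excluded in $\RegH'$). (3) For each eigenvalue $p \in (0,1)$ with unit eigenvector $\ket{v}$ of $M'$, set $\JorKetA{}{1} \eqdef \ket{v}$, and define $\JorKetB{}{1} \eqdef \BProj{\MeasB}\ket{v}/\norm{\BProj{\MeasB}\ket{v}}$ (well-defined since $\norm{\BProj{\MeasB}\ket v}^2 = \braket{v|\BProj{\MeasB}|v} = \braket{v|\BProj{\MeasA}\BProj{\MeasB}\BProj{\MeasA}|v} = p > 0$, using $\BProj{\MeasA}\ket v = \ket v$). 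Then span$\{\JorKetA{}{1},\JorKetB{}{1}\}$ is two-dimensional and I would check directly it is invariant under both projectors, by verifying $\BProj{\MeasA}\JorKetB{}{1}$ is proportional to $\JorKetA{}{1}$ (using that $\BProj{\MeasA}\BProj{\MeasB}\ket v = M'\ket v = p\ket v$) and similarly that $\BProj{\MeasB}$ acts as the rank-one projection onto $\JorKetB{}{1}$ within this subspace. (4) Finally, argue these two-dimensional subspaces are mutually orthogonal (eigenvectors of the Hermitian $M'$ for distinct eigenvalues are orthogonal; for a repeated eigenvalue, choose an orthonormal eigenbasis) and that together with the degenerate one-dimensional pieces they exhaust $\RegH$ (a dimension count, or: the orthogonal complement of their span would be invariant under both projectors and contain no eigenvector of $M'$, a contradiction).

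The main obstacle I anticipate is step (3)–(4): cleanly verifying that the two-dimensional spans are genuinely invariant under \emph{both} $\BProj{\MeasA}$ and $\BProj{\MeasB}$ simultaneously, and that distinct such spans are orthogonal including in the presence of degenerate eigenvalues of $M'$. The invariance under $\BProj{\MeasA}$ is immediate since one basis vector is in $\image\BProj{\MeasA}$ and I would show $\BProj{\MeasA}\JorKetB{}{1} = \sqrt{p}\,\JorKetA{}{1}$; invariance under $\BProj{\MeasB}$ requires showing $\BProj{\MeasB}\JorKetA{}{1} = \sqrt{p}\,\JorKetB{}{1}$, which follows from the definition of $\JorKetB{}{1}$. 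Orthogonality of subspaces for distinct eigenvalues is standard spectral theory; the only care needed is that $\JorKetB{}{1}$ vectors from different eigenvalues are also orthogonal, which follows since $\braket{\BProj{\MeasB} v | \BProj{\MeasB} v'} = \braket{v | \BProj{\MeasB} | v'} = \braket{v | \BProj{\MeasA}\BProj{\MeasB}\BProj{\MeasA} | v'} = \braket{v | M' | v'} = 0$ when $\ket v, \ket{v'}$ are $M'$-eigenvectors for distinct eigenvalues (and can be arranged for a common eigenvalue). Everything else is bookkeeping.
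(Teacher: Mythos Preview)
Your proposal via $M' = \BProj{\MeasA}\BProj{\MeasB}\BProj{\MeasA}$ is correct and is one of the standard routes to Jordan's lemma, but it differs from the paper's proof, which uses exactly the operator $M = \BProj{\MeasA} + \BProj{\MeasB}$ that you considered first and then abandoned. You were right that $\BProj{\MeasA}$ need not commute with $M$, so full eigenspaces of $M$ need not be $\BProj{\MeasA}$-invariant; but the paper does not try to use that. Instead, for a single eigenvector $\ket{\psi}$ of $M$ with eigenvalue $p$, it observes that $\BProj{\MeasB}\ket{\psi} = p\ket{\psi} - \BProj{\MeasA}\ket{\psi}$, hence $\spanset(\ket{\psi},\BProj{\MeasA}\ket{\psi}) = \spanset(\ket{\psi},\BProj{\MeasB}\ket{\psi})$, and this span (of dimension $1$ or $2$) is then trivially invariant under each projector separately. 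The one-dimensional degenerate cases fall out automatically (precisely when $\BProj{\MeasA}\ket{\psi} \in \spanset(\ket{\psi})$) rather than being carved off beforehand as in your step~(1). What your $M'$ route buys is that the eigenvalues of $M'$ on $\image\BProj{\MeasA}$ are directly the Jordan parameters $p_j = |\JorBraKetAB{j}{1}|^2$, which the paper only names after the decomposition is in hand; the cost is the separate handling of the four degenerate intersections and the dimension-count/orthogonality bookkeeping in your step~(4), which the paper's argument largely avoids by peeling off one invariant subspace at a time and recursing on its orthogonal complement.
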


\begin{proof}
Since $\BProj{\MeasA}$ and $\BProj{\MeasB}$ are both Hermitian, their sum $\BProj{\MeasA} + \BProj{\MeasB}$ is also Hermitian. By the spectral theorem for Hermitian matrices, it follows that the eigenvectors of $\BProj{\MeasA} + \BProj{\MeasB}$ span $\RegH$. Let $\ket{\psi}$ be an eigenvector with eigenvalue $p$ (i.e., $\BProj{\MeasA}\ket{\psi} + \BProj{\MeasB}\ket{\psi} = p\ket{\psi}$). There are two cases to consider.

If $\BProj{\MeasA}\ket{\psi}$ lies in $\spanset(\ket{\psi})$, then $\BProj{\MeasB}\ket{\psi}$ must also be in $\spanset(\ket{\psi})$, so $\spanset(\ket{\psi})$ is a one-dimensional subspace invariant under both $\BProj{\MeasA}$ and $\BProj{\MeasB}$. Since $\BProj{\MeasA}$ and $\BProj{\MeasB}$ are projectors, their eigenvalues are $0$ or $1$, so in $\spanset(\ket{\psi})$ they act as identity or rank-zero projectors.

If $\BProj{\MeasA}\ket{\psi}$ does not lie in $\spanset(\ket{\psi})$, then $\spanset(\ket{\psi},\BProj{\MeasA}\ket{\psi})$ is a two-dimensional subspace. This subspace is invariant under $\BProj{\MeasA}$, which acts as a projector onto $\JorKetA{j}{1} \coloneqq \BProj{\MeasA}\ket{\psi}$. Moreover, this subspace can be written as $\spanset(\ket{\psi},\BProj{\MeasB}\ket{\psi})$, and by an identical argument, $\BProj{\MeasB}$ projects this subspace onto $\JorKetB{j}{1} \coloneqq \BProj{\MeasB}\ket{\psi}$.

By setting $\JorKetA{j}{0} \coloneqq \Subspace_{j} \cap \ker(\BProj{\MeasA})$ (i.e. the state in $\Subspace_{j}$ orthogonal to $\JorKetA{j}{1})$) and $\JorKetB{j}{0} \coloneqq \Subspace_{j} \cap \ker(\BProj{\MeasB})$, we obtain two different orthogonal bases $\{\JorKetA{j}{1},\JorKetA{j}{0}\}$ and $\{\JorKetB{j}{1},\JorKetB{j}{0}\}$ for $\Subspace_{j}$ where $\BProj{\MeasA}$ projects onto $\JorKetA{j}{1}$ and $\BProj{\MeasB}$ projects onto $\JorKetB{j}{1}$.
\end{proof}

\subsection{Interactive arguments}
\label{sec:interactive-arguments}

For interactive classical algorithm $\ARGVerifier$ and interactive (potentially) quantum circuit $A$, we denote by $\Interact{A(\ket{\psi})}{\ARGVerifier}$ the random variable corresponding to the output of $\ARGVerifier$ when interacting with $A(\ket{\psi})$; note that since $\ARGVerifier$ is classical, the communication in this interaction is also classical. For a general formal treatment of interactive quantum circuits, see \cite{VidickW16}.

\begin{definition}
A (post-quantum) \emph{interactive argument} for a relation $\Relation$ with soundness $\SoundnessError$ is a pair of interactive classical polynomial-time algorithms $\ARGSystem = (\ARGProver,\ARGVerifier)$ such that the following holds.
\begin{itemize}

  \item \textbf{Completeness.}
  For every $\secp \in \Naturals$ and $(\Instance,\Witness) \in \Relation$, $\Pr[\Interact{\ARGProver(1^{\secp},\Instance,\Witness)}{\ARGVerifier(1^{\secp},\Instance)} = 1] = 1$.
	
  \item \textbf{Soundness.}
  For every $\secp \in \Naturals$, $\Instance \notin \Language(\Relation)$, and polynomial-size interactive quantum circuit $\Malicious{\ARGProver}$,
\begin{equation*}  
\Pr[\Interact{\Malicious{\ARGProver}}{\ARGVerifier(1^{\secp},\Instance)} = 1]
\leq \SoundnessError(\secp)
\enspace.
\end{equation*}
	
\end{itemize}
We say that $\ARGSystem$ is \textbf{succinct} if the total amount of communication between $\ARGProver$ and $\ARGVerifier$ is at most $c(\secp,\log |\Instance|)$ for some fixed polynomial $c$.

In this work a \emph{round} is a back-and-forth interaction consisting of a verifier message followed by a prover message.
\end{definition}

We also consider interactive arguments that satisfy the stronger property of \emph{knowledge soundness}. Below we write $\ARGExtractor^{\Malicious{\ARGProver}}$ for an extractor with ``black-box'' access to $\Malicious{\ARGProver}$; we will give this a precise meaning shortly. Our definition loosely follows that of~\cite{Unruh12}.

\begin{definition}
\label{def:post-quantum-knowledge}
$\ARGSystem = (\ARGProver,\ARGVerifier)$ has \emph{knowledge soundness} with knowledge error $\KnowledgeError$ if there exists an expected polynomial time quantum extractor $\ARGExtractor$ such that for every polynomial-size interactive quantum circuit $\Malicious{\ARGProver}$, quantum state $\ket{\psi}$, $\secp \in \Naturals$, instance $\Instance$, and parameter $\Advantage(\secp) \leq \Pr[\Interact{\Malicious{\ARGProver}(x,\ket{\psi})}{\ARGVerifier(1^{\secp},\Instance)} = 1]$ the following holds: 
\[\Pr\big[(\Instance,\Witness) \in \Relation \mid \Witness \gets \ARGExtractor^{\Malicious{\ARGProver}(x;\ket{\psi})}(1^{\secp},\Instance,1^{1/\Advantage})\big]
= \Omega(\Advantage(\secp) - \KnowledgeError)~.\]
\end{definition}

We describe the differences between our definition and the definition of quantum proofs of knowledge given in~\cite{Unruh12}.
\begin{itemize}
    \item Our definition asks that the extractor succeed with probability \emph{linear} in $(\Advantage(\secp)-\KnowledgeError)$, whereas Unruh's definition only requires the extractor's success probability be $(\Advantage(\secp)-\KnowledgeError)^d/p(\secp)$ for a constant $d \in \Naturals$ and polynomial $p$.
    \item Our definition is incomparable to Unruh's definition when $\ket{\psi}$ is a general quantum state, since we require that the extractor be given as input a lower bound $\Advantage$ on the success probability of the adversary. This arises due to a technical requirement in our security proof. 
    \item When $\ket{\psi}$ is a computational basis state (or any other efficiently-constructible state), our definition is stronger than Unruh's definition since in this case the extractor can compute for itself a lower bound on the success probability of the adversary by simply running the adversary many times (independently, from the \emph{beginning} of the protocol).
\end{itemize}

To define black-box access to $\Malicious{\ARGProver}$, we will need to consider in more detail how an interactive quantum circuit is specified.

\begin{definition}[Interactive quantum circuits]
	A $\NumRounds$-round interactive quantum circuit $A$ is a sequence of unitary quantum circuits $(\UPrvRound{1},\dots,\UPrvRound{\NumRounds})$ where $\UPrvRound{i}$ operates on registers $(\RegInt,\RegChal_{i},\RegResp_{i})$.
	
	The \emph{size} of an interactive quantum circuit is the sum of the sizes of the circuits implementing $\UPrvRound{1},\ldots,\UPrvRound{\NumRounds}$.
\end{definition}

Let $\Malicious{\ARGProver} \eqdef (\UPrvRound{1},\ldots,\UPrvRound{\NumRounds})$; then $\ARGExtractor^{\Malicious{\ARGProver}}$ is a quantum circuit with special gates corresponding to $\UPrvRound{i}(r)$ and $\contra{(\UPrvRound{i}(r))}$ for $i \in [\NumRounds]$.

The requirement that the $\UPrvRound{i}$ be unitary is without loss of generality, in the sense that any quantum circuit not of this form can be ``purified'' into a circuit of this form which is only a constant factor larger with the same observable behavior. Using this formulation, we can sample the random variable $\Interact{\Malicious{\ARGProver}}{\ARGVerifier}$ equivalently as:
\begin{enumerate}[noitemsep]
	\item Initialize the register $\RegInt$ to $\ket{\psi}$, and $\Transcript \eqdef ()$.
	\item For $i = 1,\ldots,\NumRounds$,
	\begin{enumerate}[noitemsep]
		\item Sample $r_i \gets R_i$. Initialize the $\RegChal_{i}$ register to $\ket{r_i}$.
		\item Apply unitary $\UPrvRound{i}$ to $(\RegInt,\RegChal_{i},\RegResp_{i})$.
		\item Measure $\RegResp_{i}$ in the computational basis to obtain response $\Response_i$. Append $(r_i,\Response_i)$ to $\Transcript$.
	\end{enumerate}
	\item Return the output of $\ARGVerifier(\Transcript)$.
\end{enumerate}
In particular, the interaction is \emph{public coin}. Note again that we restrict the operation of $A$ in each round to be unitary except for the measurement of $\RegResp_{i}$ in the computational basis.  

\subsection{Collapsing hash functions}
\label{sec:hash-functions}

Let $\HashFamily = \{\HashDistribution_{\secp}\}_{\secp \in \Naturals}$ be such that each $\HashDistribution_{\secp}$ is a distribution over functions $\HashFunction \colon \Bits^{\hinlen(\secp)} \to \Bits^{\houtlen(\secp)}$.

\begin{definition}
\label{def:crh}
$\HashFamily$ is \emph{post-quantum collision resistant} if for every polynomial-size quantum adversary $\Adversary$,
\begin{equation*}
\Pr
\left[
\begin{array}{c}
x \neq x' \; \wedge \\
\HashFunction(x) = \HashFunction(x')  
\end{array}
\middle\vert
\begin{array}{r}
\HashFunction \gets \HashDistribution_{\secp} \\
(x,x') \gets \Adversary(\HashFunction)
\end{array}
\right]
= \negl(\secp)
\enspace.
\end{equation*}
\end{definition}

\begin{definition}
\label{def:collapsing}
$\HashFamily$ is \emph{collapsing} \cite{Unruh16-eurocrypt} if for every security parameter $\secp$ and polynomial-size quantum adversary $\Adversary$,
\begin{equation*}
\Big|
\Pr[\HCollapsingExp{0}{\secp}{\Adversary} = 1]
-
\Pr[\HCollapsingExp{1}{\secp}{\Adversary} = 1]
\Big|
\leq \negl(\secp)
\enspace.
\end{equation*}
For $b \in \Bits$ the experiment $\HCollapsingExp{b}{\secp}{\Adversary}$ is defined as follows:
\begin{enumerate}[noitemsep]
  \item The challenger samples $\HashFunction \gets \HashDistribution_{\secp}$ and sends $\HashFunction$ to $\Adversary$.
  \item $\Adversary$ replies with a (classical) binary string $y \in \Bits^{\houtlen(\secp)}$ and a $\hinlen(\secp)$-qubit quantum state on registers $\RegX$. (The requirement that $y$ is classical can be enforced by having the challenger immediately measure these registers upon receiving them.)
  \item The challenger computes $\HashFunction$ in superposition on the $\hinlen(\secp)$-qubit quantum state, and measures the bit indicating whether the output of $\HashFunction$ equals $y$. If $\HashFunction$ does not equal $y$, the challenger aborts and outputs $\bot$.
  \item If $b = 0$, the challenger does nothing. If $b = 1$, the challenger measures the $\hinlen(\secp)$-qubit state in the standard basis.
  \item The challenger returns contents of the registers $\RegX$ to $\Adversary$.
  \item $\Adversary$ outputs a bit $b'$, which is the output of the experiment.
\end{enumerate}
\end{definition}

\begin{claim}[\cite{Unruh16-eurocrypt}]
\label{claim:collapsing-crhf}
If $\HashFamily$ is collapsing then $\HashFamily$ is collision resistant.
\end{claim}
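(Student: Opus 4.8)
The plan is to prove the contrapositive via a simple reduction, following Unruh~\cite{Unruh16-eurocrypt}: from any polynomial-size quantum adversary $\Adversary$ that outputs a collision $(x,x')$ (meaning $x \neq x'$ and $\HashFunction(x) = \HashFunction(x')$) with non-negligible probability $\epsilon(\secp)$, I construct a polynomial-size quantum adversary $\Adversary'$ for the collapsing experiment with distinguishing advantage $\epsilon(\secp)/2$. Since collapsing forces this advantage to be $\negl(\secp)$, no such $\Adversary$ can exist, so $\HashFamily$ is collision resistant.

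The reduction $\Adversary'$, on receiving $\HashFunction \gets \HashDistribution_{\secp}$ from the challenger, runs $\Adversary(\HashFunction)$ to obtain a candidate pair $(x,x')$ and checks (with one extra hash evaluation) whether $x \neq x'$ and $\HashFunction(x) = \HashFunction(x')$. If the check fails, $\Adversary'$ is in a degenerate case: it sends $y \eqdef \HashFunction(x)$ together with the single computational-basis state $\ket{x}$ on $\RegX$, and at the end outputs $b' \eqdef 0$. In this degenerate case the challenger's hash check passes (so it never aborts) and the rest of the challenger's behavior is independent of $b$, because measuring a computational-basis state in the computational basis does nothing; hence this branch contributes equally to $\Pr[\HCollapsingExp{0}{\secp}{\Adversary'}=1]$ and $\Pr[\HCollapsingExp{1}{\secp}{\Adversary'}=1]$ and cancels in the difference. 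Otherwise ($\Adversary$ produced a genuine collision), $\Adversary'$ prepares the two-element uniform superposition $\ket{\phi} \eqdef \tfrac{1}{\sqrt 2}\big(\ket{x} + \ket{x'}\big)$ on $\RegX$, sends $y \eqdef \HashFunction(x) = \HashFunction(x')$ and $\RegX$ to the challenger, and upon receiving $\RegX$ back applies the binary projective measurement $\big(\ketbra{\phi^\perp},\ \Id - \ketbra{\phi^\perp}\big)$ with $\ket{\phi^\perp} \eqdef \tfrac{1}{\sqrt 2}\big(\ket{x} - \ket{x'}\big)$, outputting $b' \eqdef 1$ iff the $\ketbra{\phi^\perp}$ outcome occurs.

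For the analysis, condition on $\Adversary$ producing a genuine collision (probability $\epsilon(\secp)$). Then $\ket{\phi}$ is supported entirely on preimages of $y$, so the challenger's consistency check in step~3 of the experiment succeeds with probability $1$ and leaves $\ket{\phi}$ undisturbed. If $b = 0$, the challenger does nothing more; $\Adversary'$ measures $\ket{\phi}$ and obtains the $\ketbra{\phi^\perp}$ outcome with probability $\abs{\braket{\phi^\perp|\phi}}^2 = 0$, so $b' = 1$ never occurs. If $b = 1$, the challenger measures $\RegX$ in the computational basis, collapsing $\ket{\phi}$ to $\ket{x}$ or $\ket{x'}$ with probability $\tfrac12$ each; either outcome has squared overlap $\tfrac12$ with $\ket{\phi^\perp}$, so $b' = 1$ with probability $\tfrac12$. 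Combining with the (bit-independent) degenerate branch, $\big|\Pr[\HCollapsingExp{0}{\secp}{\Adversary'}=1] - \Pr[\HCollapsingExp{1}{\secp}{\Adversary'}=1]\big| = \epsilon(\secp)/2$, which is non-negligible.

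There is no substantive obstacle: $\Adversary'$ runs $\Adversary$ once, does a single hash evaluation, prepares a state on two basis vectors, and performs one two-outcome measurement, so it is polynomial-size whenever $\Adversary$ is, and the advantage follows from the elementary calculation above. The only points requiring (minor) care are handling the event that $\Adversary$ fails to output a valid collision — so that it contributes no negative term to $\Adversary'$'s advantage — and observing that the challenger's consistency check is non-disturbing exactly because $\ket{\phi}$ lies in the preimage set of $y$.
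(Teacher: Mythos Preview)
Your proof is correct and follows essentially the same approach as the paper: both build the uniform superposition $\tfrac{1}{\sqrt{2}}(\ket{x}+\ket{x'})$ over a collision and distinguish the two experiments by a projective measurement in the $\{\ket{\phi},\ket{\phi^\perp}\}$ basis, yielding advantage $\epsilon(\secp)/2$. The only cosmetic difference is that the paper projects onto $\ket{\phi}$ and outputs that bit, whereas you project onto $\ket{\phi^\perp}$ --- these are the same binary measurement with the outcome labels swapped.
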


\begin{proof}
A proof can be found in \cite[Lemma 25]{Unruh16-eurocrypt}, but for convenience we include a proof here.

Let $\Adversary$ be an adversary that breaks collision resistance of $\HashFamily$ with probability at least $\Advantage(\secp)$. We construct an adversary $\Adversary'$ that breaks collapsing of $\HashFamily$ with probability at least $\Advantage(\secp)/2$.

The adversary $\Adversary'$ works as follows. First, given as input $\HashFunction \gets \HashDistribution_{\secp}$, $\Adversary'$ computes $(x,x') \gets \Adversary(\HashFunction)$. If $(x,x')$ is not a valid collision (they are equal or they map to different outputs under $\HashFunction$) then $\Adversary'$ sends to the challenger an arbitrary classical bitstring $y$ and an arbitrary quantum state on register $\RegX$, and then outputs $0$ at the conclusion of the experiment. If $(x,x')$ is a valid collision (they are distinct and they map to the same ouput under $\HashFunction$), then $\Adversary'$ sends $y \eqdef \HashFunction(x)$ and the quantum state $\ket{\psi} \coloneqq \frac{1}{\sqrt{2}}(\ket{x} + \ket{x'})$ on register $\RegX$; when the challenger returns the contents of $\RegX$, $\Adversary'$ applies the binary projective measurement $\ProjMeasurement = \BMeas{\ketbra{\psi}}$, and outputs the measurement outcome $b$. 

In $\HCollapsingExp{0}{\secp}{\Adversary'}$, the adversary $\Adversary'$ outputs $1$ with probability at least $\Advantage(\secp)$, since as long as $\Adversary$ outputs a valid collision $(x,x')$, the measurement $\ProjMeasurement$ is applied to $\frac{1}{\sqrt{2}}(\ket{x} + \ket{x'})$ and must return $1$. In $\HCollapsingExp{1}{\secp}{\Adversary'}$, the adversary $\Adversary'$ outputs $1$ with probability at most $\Advantage(\secp)/2$, since as long as $\Adversary$ outputs a valid collision $(x,x')$, the measurement $\ProjMeasurement$ is applied to either $\ket{x}$ or $\ket{x'}$, and thus returns $1$ with probability at most $1/2$. The overall difference in the two probabilities is $\Advantage(\secp)/2$. 
\end{proof}

\subsection{Collapsing protocols}

\begin{definition}[\cite{Unruh16-eurocrypt,LiuZ19,DonFMS19}]
\label{def:collapsing-protocol}
We say that a protocol is collapsing if for every polynomial-size interactive quantum adversary $\Malicious{\ARGProver}$ and polynomial-size quantum distinguisher $\Adversary$,
\begin{equation*}
\Big| \Pr[\ProtocolCollapseExp(0,\Malicious{\ARGProver},\Adversary) = 1]
- \Pr[\ProtocolCollapseExp(1,\Malicious{\ARGProver},\Adversary) = 1] \Big|
\leq \negl(\secp) \enspace.
\end{equation*}
For $b \in \Bits$, the experiment $\ProtocolCollapseExp(b,\Malicious{\ARGProver},\Adversary)$ is defined as follows:
\begin{enumerate}[noitemsep]
	\item The challenger simulates $\Interact{\Malicious{\ARGProver}}{\ARGVerifier}$, stopping just before the measurement of $\RegResp_{\NumRounds}$. Let $\Transcript' = (r_1,\LastMsg_1,\ldots,r_{\NumRounds-1},\LastMsg_{\NumRounds-1},r_{\NumRounds})$ be the transcript up to this point (i.e., excluding the final prover message).
	\item The challenger applies a unitary $U$ that computes the bit $\ARGVerifier(\Transcript',\RegResp_{\NumRounds})$ into a fresh ancilla, measures the ancilla, and applies $\contra{U}$. If the measurement outcome is $0$, the experiment aborts.
	\item If $b = 0$, the challenger does nothing. If $b = 1$, the challenger measures the $\RegResp_{\NumRounds}$ register in the computational basis and discards the result.
	\item The challenger sends all registers to $\Adversary$. $\Adversary$ outputs a bit $b'$, which is the output of the experiment.
\end{enumerate}
\end{definition}

\doclearpage
\section{Efficient quantum strategies for repeated games}
\label{sec:alt-proj-new}

We consider a classical single-player game $\Game$ played with quantum strategies. This section makes use of the notion of quantum interaction and interactive quantum algorithms; for details on how to model this formally, see \cite{VidickW16}.

\begin{definition}
A game $\Game = (\RSet,\ZSet,\Pred)$ consists of a question set $\RSet$, answer set $\ZSet$, and win predicate $\Pred \colon \RSet \times \ZSet \to \Bits$. An (efficient) \emph{quantum} strategy for $\Game$ is an interactive quantum algorithm $\StrategyMap$ with initial state $\DMatrix$.

The \emph{value} of a strategy $(\StrategyMap,\DMatrix)$, denoted $\Value{\Game}(\StrategyMap,\DMatrix)$, is the probability that a player using strategy $(\StrategyMap,\DMatrix)$ in the following game causes the referee to output $1$: the referee sends the player a question $r \gets R$, and the player answers with (classical) $z \in Z$; the referee outputs $f(r,z)$.
\end{definition}

We now define a \emph{quantum experiment} in which the player's answer can be an arbitrary quantum state on $\RegZ$, and the referee determines whether the player wins by computing $f(r,\RegZ)$ in superposition and measuring the output; it then uncomputes $f$ and returns $\RegZ$ to the player. The key difference between the classical and quantum experiments is that the only measurement performed in the quantum experiment is on the output of $f$, whereas a quantum player in a classical interaction must measure to send a classical $z$. While this does not affect the value of a game when played once, it is crucial when the game is repeated \emph{sequentially}.

In more detail, our quantum experiment consists of the following quantum interaction:

\begin{enumerate}[noitemsep]
    
    \item The referee samples a question $r \gets R$ and sends it to the  player.
    
    \item The player responds with a quantum state on register $\RegZ$.
    
    \item The referee computes $f(r,\RegZ)$ in superposition, measures the result to obtain an outcome $b \in \Bits$, and uncomputes $f$. The referee then returns $\RegZ$ to the player.
\end{enumerate}
It is easily verified that the probability a player following strategy $(\StrategyMap,\DMatrix)$ wins in the above experiment is $\Value{\Game}(S,\DMatrix)$, as in the classical experiment. Without loss of generality, we can assume that the strategy $S$ is implemented by a unitary $U_S$.

We now consider the $n$-fold \emph{sequential repetition} of the quantum experiment. Formally, the interaction consists of $n$ sequential rounds, where in the $i$th round:
\begin{enumerate}[noitemsep]
    \item The referee samples a question $r_i \gets R$ and sends it to the player.
    
    \item The player responds with a quantum state on $\RegZ$.
    
    \item The referee computes $f(r_i,\RegZ)$ in superposition, measures the result to obtain an outcome $b_i \in \Bits$, and uncomputes $f$. The referee then returns $\RegZ$ to the player, along with $b_i$.
\end{enumerate}

\begin{definition}[Value of a strategy in a repeated game]
    The \emph{value} of a strategy $(S,\DMatrix)$ in the above experiment is denoted $\Value{\Game}^{n}(\StrategyMap,\DMatrix)$, and is equal to $\Expectation[\sum_{i=1}^{n} b_i]$, the \emph{expected number of wins} across all trials. Note that $\Value{\Game}^{1}(S,\DMatrix) = \Value{\Game}(S,\DMatrix)$.
\end{definition}

When $\DMatrix$ is a classical state, the $n$-fold repetition $\StrategyMap^{n}$ of any strategy $\StrategyMap$ trivially achieves $\Value{\Game}^{n}(\StrategyMap^{n},\DMatrix) = n \cdot \Value{\Game}(\StrategyMap,\DMatrix)$. For quantum $\DMatrix$, this may not be true, since the state is in general disturbed by the referee's measurement. In this section we show that, given any \emph{quantum} strategy $(\StrategyMap,\DMatrix)$ for the one-round experiment, there is an efficient quantum algorithm $\StrategyMap'$ that makes black-box use of $U_{S}$ (and $U_{S}^\dagger$) such that $\Value{\Game}^{n}(\StrategyMap',\DMatrix) \approx n \cdot \Value{\Game}(\StrategyMap,\DMatrix)$.

\begin{theorem}
    \label{theorem:repeated-games}
    For any single-player quantum game $\Game = (\RSet,\ZSet,\Pred)$ with classically efficient predicate $\Pred$, $n \in \Naturals,\RWLoss \in [0,1]$, there is a quantum oracle algorithm $A_{\Game,n,\RWLoss}$ such that for all $(\StrategyMap,\DMatrix)$,
    \begin{equation*}
        \Value{\Game}^{n}(A_{\Game,n,\RWLoss}^{\StrategyMap},\DMatrix) \geq n \cdot (\Value{\Game}(\StrategyMap,\DMatrix) - \RWLoss)
    \end{equation*}
    and $A$ runs in expected time $\tilde{O}(|\Pred| \cdot n/\RWLoss)$ and makes an expected $\tilde{O}(n/\RWLoss)$ queries to $U_{\StrategyMap},U_{\StrategyMap}^{\dagger}$.
\end{theorem}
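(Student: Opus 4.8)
The plan is to instantiate the machinery from the technical overview—state repair via approximate alternating measurements—and package it as a black-box quantum algorithm. Fix a strategy $(\StrategyMap,\DMatrix)$ with value $\Advantage \eqdef \Value{\Game}(\StrategyMap,\DMatrix)$; by convexity it suffices to treat $\DMatrix = \ketbra{\psi}$ pure, and by purification we may assume $\StrategyMap$ is implemented by a unitary $U_{\StrategyMap}$ on registers $(\RegInt,\RegChal,\RegResp)$. For each question $r \in \RSet$, let $\Pi_{\ARGVerifier,r} \eqdef \sum_{z : \Pred(r,z) = 1} \ketbra{z}$ on $\RegResp$, and define the binary projective measurement $\MeasA_r \eqdef \BMeas{\contra{U_{\StrategyMap}}\, (\Id^{\RegInt} \otimes \ketbra{r}^{\RegChal}\!\otimes\!\Pi_{\ARGVerifier,r})\, U_{\StrategyMap}}$; this can be implemented black-box in time $\tilde O(|\Pred|)$ given one oracle call to $U_{\StrategyMap}$ and one to $\contra{U_{\StrategyMap}}$ plus a coherent evaluation of $\Pred$. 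Adjoin the ancilla register $\RegR$ initialized to $\UniState{R}$ and define $\CProj$, $\Meas{\UniState{R}}$, $E$, $\Pi_{\Advantage'}$, and $\TestRep \eqdef \BMeas{\Pi_{\Advantage'}}$ exactly as in Sections~\ref{subsec:tech-state-repair}–\ref{subsec:tech-approximate-jordan}, with threshold $\Advantage' \eqdef \Advantage - \RWLoss/2$ (if $\Advantage \le \RWLoss$ there is nothing to prove, since the claimed bound is vacuous). Set $\ApproxTest \eqdef \ApproxTest_{\Advantage',t}$ for a number of alternations $t = \tilde\Theta(1/\RWLoss^2)$ chosen via the additive Chernoff bound (\cref{prop:chernoff}) so that $\ApproxTest$ estimates each Jordan eigenvalue $p_j$ to additive error $\RWLoss/4$ except with probability $\negl(\secp)$; the number of Jordan subspaces touched by $\UniState{R}\ket{\psi}$ is at most $|\RSet|$, but a union bound is not needed since the estimate is per-run, not per-subspace.

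The algorithm $A_{\Game,n,\RWLoss}$ then runs the measure-and-repair loop of Section~\ref{subsec:tech-state-repair} with $\TestRep$ replaced by the projective implementation of $\ApproxTest$ (expanding the Hilbert space with the $2t$-qubit register $\RegL$, evaluating the alternating measurements coherently, and uncomputing) and with $\MeasA_r$ replaced by $\MeasA_{r,b}$—the measurement that additionally projects $\RegL$ onto $\ket{0^{2t}}$—so that all operators in sight are genuinely projective:
\begin{enumerate}[noitemsep]
\item Prepare $\UniState{R}^{\RegR}\otimes\ket{\psi}^{\RegInt}\otimes\ket{0^{2t}}^{\RegL}$, and apply $\ApproxTest$; if it outputs $0$, abort (output all $b_i = 0$).
\item Repeat $n$ times: sample $r \gets \RSet$, apply $\MeasA_r$ to get outcome $b_i$, measure $\RegResp$ in the computational basis when $b_i = 1$ to extract $z_i$ (harmless since in the quantum experiment the referee would measure the win bit anyway and, in the collapsing applications, this extra measurement is undetectable), and then restore by alternating $\ApproxTest, \MeasA_{r,b_i}, \ApproxTest, \MeasA_{r,b_i},\ldots$ until $\ApproxTest$ returns $1$.
\end{enumerate}
Because every round of the loop begins with the state in the image of the projector defining $\ApproxTest$, the argument of Section~\ref{subsec:tech-state-repair} shows that, conditioned on not aborting in Step~1, each Measure step is an application of $\MixM(\{\MeasA_r\}_r)$ to a state whose success probability is at least $\Advantage' - \RWLoss/4 \ge \Advantage - \RWLoss$, so $\Expectation[\sum_i b_i \mid \text{no abort}] \ge n(\Advantage-\RWLoss)$; and Step~1 fails to abort with probability at least $\Advantage'-\RWLoss/4 \ge \Advantage-\RWLoss$ by Property~1 of $\TestRep$ transported to $\ApproxTest$ via the eigenvalue-estimation guarantee. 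Multiplying, $\Value{\Game}^n(A^{\StrategyMap}_{\Game,n,\RWLoss},\DMatrix) \ge n(\Advantage-\RWLoss)$, after absorbing the $\negl(\secp)$ estimation-failure terms (these only worsen the bound by $n\cdot\negl(\secp)$, which is dominated by choosing the Chernoff failure probability $\le \RWLoss/(4n\cdot 2t)$, still only costing a $\log$ factor in $t$). For the running time: each repair phase performs $O(1)$ alternations in expectation by the $1/T$-tail bound of Section~\ref{subsec:tech-state-recovery} (applied inside each Jordan subspace and averaged), each alternation costs one $\ApproxTest$ and one $\MeasA_{r,b}$, and each $\ApproxTest$ costs $O(t) = \tilde O(1/\RWLoss^2)$ coherent applications of $\CProj$ and $\Meas{\UniState{R}}$; a $\CProj$ costs one controlled query to $U_{\StrategyMap}, \contra{U_{\StrategyMap}}$ and one coherent $\Pred$, a $\Meas{\UniState{R}}$ is free. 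Summing over $n$ rounds gives expected time $\tilde O(|\Pred|\cdot n/\RWLoss)$ and $\tilde O(n/\RWLoss)$ oracle queries—here the apparent $1/\RWLoss^2$ from $t$ is tamed to $1/\RWLoss$ by a sharper bookkeeping of the repair cost, exactly as carried out in \cref{sec:state-repair}.

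The main obstacle, and the only place where the above is a genuine sketch rather than a routine assembly, is justifying that the measure-and-repair loop still behaves well once $\TestRep$ is swapped for the non-idempotent $\ApproxTest$. As flagged in Section~\ref{subsec:tech-approximate-jordan}, a trace-distance argument fails (eigenvalue exactly at threshold), and the projective-in-an-expanded-space version only works because the \emph{modified} measurement $\MeasA_{r,b}$ re-projects $\RegL$ onto $\ket{0^{2t}}$; one must analyze the Jordan decomposition induced by the pair (projective $\ApproxTest$, $\MeasA_{r,b}$) and show that the post-$\ApproxTest$-returns-$1$ state genuinely has success probability $\ge \Advantage'-\RWLoss/4$ on the original registers, not merely that some ancilla-laden surrogate does. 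This is the content of \cref{sec:state-repair}, and I would prove \cref{theorem:repeated-games} by citing that state-repair analysis as a black box: state a clean lemma ("given binary projective $\MeasA = \BMeas{\Pi_\MeasA}$ and $\MeasB$ with $\ket\psi$ in the image of $\Pi_\MeasA$, the alternating procedure $\MeasA,\MeasB,\MeasA,\MeasB,\ldots$ returns the state to the image of $\Pi_\MeasA$ in $O(1)$ expected steps, and on a success the state's $\MeasB$-success-probability is preserved up to the estimation error") and apply it with $\MeasA \leftarrow \ApproxTest$, $\MeasB \leftarrow \MeasA_{r,b}$. Everything else—the reduction to pure states, the Chernoff calibration of $t$, the cost accounting, and the observation that measuring $\RegResp$ to read off $z_i$ is innocuous—is bookkeeping.
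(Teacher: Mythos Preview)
Your proposal has a genuine gap at the ``Multiplying'' step. You claim $\Expectation[\sum_i b_i \mid \text{no abort}] \geq n(\Advantage-\RWLoss)$ and $\Pr[\text{no abort}] \geq \Advantage-\RWLoss$, and conclude $\Value{\Game}^{n} \geq n(\Advantage-\RWLoss)$. But the law of total expectation gives only
\[
\Value{\Game}^{n} \;=\; \Expectation\Big[\sum_i b_i\Big] \;\geq\; \Pr[\text{no abort}]\cdot \Expectation\Big[\sum_i b_i \,\Big|\, \text{no abort}\Big] \;\geq\; n(\Advantage-\RWLoss)^2,
\]
which is quadratic, not linear, in $\Advantage-\RWLoss$ and does not meet the theorem's bound. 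This is not merely an arithmetic slip: the threshold-and-abort architecture from the technical overview \emph{cannot} avoid this loss. With a fixed threshold $\Advantage'$, a state whose Jordan mass sits just below $\Advantage'$ will abort almost surely even though its true value is $\approx\Advantage'$; conversely, a state with half its mass at $2\Advantage$ and half at $0$ has $\Pr[\text{no abort}]=1/2$ regardless of how close $\Advantage$ is to $1$. Your claimed lower bound $\Pr[\text{no abort}]\ge \Advantage-\RWLoss$ via ``Property~1'' is also misapplied: Property~1 is a Markov argument that yields $\Pr[\TestRep=1]\ge \Advantage-\Advantage'$ (mass above threshold), not $\ge \Advantage-\RWLoss$.

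The paper's proof (\cref{sec:proof-of-main-theorem}) resolves this by abandoning the binary threshold test entirely. It replaces $\ApproxTest$ with the \emph{real-valued} measurement $\ApproxEig$ that outputs an estimate $p_i$ of the current value, and crucially uses \cref{lemma:approx-eig}\ref{item:aeig-expectation}: $\Expectation[p_1]=\Value{\Game}(\StrategyMap,\DMatrix)$ \emph{exactly}, with no abort. The repair lemma (\cref{lemma:state-repair}) is then stated not as ``return to the image of a fixed projector'' but as ``the whole measure-and-repair round is $(2\varepsilon,O(\sqrt{\delta}))$-almost-projective as a measurement of $p$'', so that $p_{i+1}\ge p_i-2\varepsilon$ with high probability. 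Summing $\Pr[b_i=1]\ge \Expectation[p_1]-O(i\varepsilon)-O(i\sqrt{\delta})$ over $i$ gives the linear bound directly, with $\varepsilon=\Theta(\RWLoss/n)$ and $\delta=\Theta(\RWLoss^2/n^2)$. The ``clean lemma'' you propose to cite as a black box---alternating until a fixed projector $\Pi_\MeasA$ accepts---is not what \cref{sec:state-repair} actually proves, and would not suffice here; the almost-projective formulation with a real-valued outcome is what makes the accounting linear.
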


We prove the theorem using two key subroutines, $\ApproxEig$ and $\RepairProb$, which do the following:
\begin{itemize}
    \item $\ApproxEig^{\StrategyMap}$ applied to $\DMatrix$ is an \emph{approximate measurement} of $\Value{\Game}(\StrategyMap,\DMatrix)$. That is, it produces an outcome $p$ where $\Expectation[p] = \Value{\Game}(\StrategyMap,\DMatrix)$, and conditioned on obtaining outcome $p$ the post-measurement state $\DMatrix'$ satisfies $\Value{\Game}(\StrategyMap,\DMatrix') \approx p$.
    
    \item $\RepairProb^{\StrategyMap}_p$ is a procedure that \emph{repairs} a state that has been perturbed by the referee's measurement. In more detail, if $\DMatrix$ is the state of the system after applying $\ApproxEig^{\StrategyMap}$ and obtaining outcome $p$, and playing a one-round experiment with strategy $(\StrategyMap,\DMatrix)$ results in leftover state $\DMatrix'$, then applying $\RepairProb^{\StrategyMap}_p$ to $\DMatrix'$ outputs a \emph{repaired} state $\DMatrix^*$ in the sense that $\Value{\Game}(\StrategyMap,\DMatrix^*) \approx p$.
\end{itemize}
We remark that our implementations of $\ApproxEig^{\StrategyMap}$ and $\RepairProb^{\StrategyMap}_{p}$ make black-box use of $U_{\StrategyMap},U_{\StrategyMap}^{\dagger}$.

Given a strategy $(\StrategyMap,\DMatrix)$ for the one-round experiment, our $n$-time strategy is as follows.
\begin{enumerate}[noitemsep]
    \item[] Repeat for $i \in [n]$:
    \begin{enumerate}[noitemsep]
        \item Apply $p_i \gets \ApproxEig^{\StrategyMap}$.
        \item Receive $r_i \in \RSet$; run $S(r_i)$ coherently to compute $\RegZ$ and send it to the referee.
        \item Receive $\RegZ$ and measurement result $b_i \in \Bits$ from the referee.
        \item Apply $\RepairProb^{\StrategyMap}_{p_i}$.
    \end{enumerate}
\end{enumerate}
The guarantee of $\ApproxEig$ implies that $\Expectation[p_1] = \Value{\Game}(\StrategyMap,\DMatrix)$, and that $\Pr[b_i = 1] \approx \Expectation[p_i]$ for all $i$. The guarantee of $\RepairProb$ implies that $p_1 \approx p_2 \approx \cdots \approx p_n$ with high probability. Together these imply \cref{theorem:repeated-games}, by linearity of expectation.

\parhead{Organization} In \cref{sec:alternating-projectors} we present general technical lemmas that are useful for analysing algorithms which consist of alternating applications of two binary projective measurements; both $\ApproxEig$ and $\RepairProb$ are of this type. In \cref{sec:prob-estimation} we describe and analyze our $\ApproxEig$ procedure, which is a variant of procedures from \cite{MarriottW05,Zhandry20}. In \cref{sec:state-repair} we describe and analyze $\RepairProb$. Finally, in \cref{sec:proof-of-main-theorem} we prove~\cref{theorem:repeated-games}.

\subsection{Jordan subspaces and alternating measurements}
\label{sec:alternating-projectors}

We provide general tools for analysing \emph{alternating projection algorithms}, which were introduced by Marriott and Watrous~\cite{MarriottW05} for witness-preserving amplification of $\mathsf{QMA}$. In more detail, given two binary-outcome projective measurements $\MeasA = \BMeas{\BProj{\MeasA}}$ and $\MeasB = \BMeas{\BProj{\MeasB}}$ on a Hilbert space $\RegH$, an alternating projection algorithm applies the measurements in alternating fashion ($\MeasA,\MeasB,\MeasA,\MeasB,\ldots$) until a stopping condition is met (e.g., a certain number of measurements have been performed or some outcome has been observed). We can describe the distribution of measurement outcomes using Jordan's lemma (\cref{lemma:jordan}).

\parhead{Jordan decomposition}
Applying Jordan's lemma (\cref{lemma:jordan}) to $(\BProj{\MeasA},\BProj{\MeasB})$ induces an orthogonal decomposition $\RegH = \bigoplus_{j} \Subspace_{j}$ into one- and two-dimensional \emph{Jordan subspaces} $\Subspace_{j}$. 

Within each two-dimensional Jordan subspace $\Subspace_{j}$, we define four states $\JorKetA{j}{1},\JorKetA{j}{0},\JorKetB{j}{1},\JorKetB{j}{0}$:
\begin{itemize}[noitemsep]
    \item $\JorKetA{j}{1}$ is a state in $\Subspace_{j} \cap \image(\BProj{\MeasA})$.
    \item $\JorKetB{j}{1}$ is a state in $\Subspace_{j} \cap \image(\BProj{\MeasB})$.
    \item $\JorKetA{j}{0}$ is a state in $\Subspace_{j} \cap \ker(\BProj{\MeasA})$ (orthogonal to $\JorKetA{j}{1}$).
    \item $\JorKetB{j}{0}$ is a state in $\Subspace_{j} \cap \ker(\BProj{\MeasB})$ (orthogonal to $\JorKetB{j}{1}$).
\end{itemize}
These states are unique up to phase. Let
\begin{equation*}
p_j \eqdef \norm{\JorBraKetAB{j}{1}}^2 = \norm{\JorBraKetAB{j}{0}}^2
\enspace.
\end{equation*}
We adopt the convention that the phases of these states are chosen to satisfy
\begin{equation}
\label{eqn:jordan-vector-relations}
\JorKetA{j}{1} = \sqrt{p_j} \JorKetB{j}{1} + \sqrt{1-p_j} \JorKetB{j}{0} \ \text{ and } \
\JorKetB{j}{1} = \sqrt{p_j} \JorKetA{j}{1} + \sqrt{1-p_j} \JorKetA{j}{0}
\enspace.
\end{equation}

Notice that if $\ket{\psi}$ is the post-measurement state after $\MeasA$ has returned $1$, then $\ket{\psi} = \sum_j \alpha_j \JorKetA{j}{1}$ for some choice of amplitudes $\{\alpha_j\}_j$. Likewise, if $\ket{\psi}$ is the post-measurement state after $\MeasB$ has returned $1$, then $\ket{\psi} = \sum_j \alpha_j \JorKetB{j}{1}$ for some choice of amplitudes $\{\alpha_j\}_j$.

We can view each one-dimensional subspace $\Subspace_{j}$ as a \emph{degenerate} two-dimensional subspace. If $\BProj{\MeasA}$ acts as the identity on $\Subspace_{j}$ then we label the vector spanning the subspace $\JorKetA{j}{1}$; if $\BProj{\MeasA}$ is the zero projection on $\Subspace_{j}$ then we label the vector $\JorKetA{j}{0}$. We use a similar convention for $\BProj{\MeasB}$ (so the vector spanning a one-dimensional subspace has two labels). We set $p_j \eqdef 1$ if both $\BProj{\MeasA}$ and $\BProj{\MeasB}$ act as the identity or both act as zero, and $p_j \eqdef 0$ otherwise. One can verify that the discussion above for two-dimensional subspaces holds for one-dimensional subspaces under this convention.

\parhead{Distribution of measurement outcomes}
Consider the following (classical) probability distribution $\MWDistFull{p}{T}$ (for ``Marriott--Watrous distribution''), parameterized by a probability $p \in [0,1]$ and positive integer $T$.
\begin{center}
\begin{minipage}{0.9\textwidth}
\begin{itemize}[noitemsep]
  \item[] $\MWDistFull{p}{T}$:
\begin{enumerate}[noitemsep]
    \item For each $i \in [T]$, set $a_i := 1$ with probability $p$ and $a_i := 0$ otherwise. 
    \item Let $b_0 \eqdef 1$. For $i \in [T]$, define $b_i := b_{i-1} \oplus a_i$. 
    \item Output $b_1,b_2,\ldots,b_T$.
\end{enumerate}
\end{itemize}
\end{minipage}
\end{center}
The following two lemmas characterize the distribution of measurement outcomes of an alternating measurement procedure. The analysis closely follows that of \cite{MarriottW05,Regev06-XXX}.
\begin{lemma}
\label{lemma:alt-meas-eigenstate}
The measurement outcomes that result from applying $T$ alternating measurements $\MeasA,\MeasB,\MeasA,\MeasB\ldots$ to $\JorKetB{j}{1}$ are distributed according to $\MWDistFull{p_j}{T}$.
\end{lemma}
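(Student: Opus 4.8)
The plan is to reduce the global statement to a one-dimensional computation inside a single Jordan subspace. By \cref{lemma:jordan} applied to $(\BProj{\MeasA},\BProj{\MeasB})$, the space $\RegH$ decomposes into subspaces $\Subspace_{j}$ each invariant under both $\BProj{\MeasA}$ and $\BProj{\MeasB}$. Since the initial state $\JorKetB{j}{1}$ lies entirely in $\Subspace_{j}$, every intermediate post-measurement state produced by the alternating procedure $\MeasA,\MeasB,\MeasA,\MeasB,\ldots$ stays in $\Subspace_{j}$: after an $\MeasA$-measurement with outcome $b$ the state is $\JorKetA{j}{b}$, and after a $\MeasB$-measurement with outcome $b$ it is $\JorKetB{j}{b}$. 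Hence it suffices to track the sequence of \emph{labels} $b$ of these eigenstates, and the outcome probabilities never depend on the (irrelevant) phases of the intermediate states.

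The core computation is the single-step transition rule inside $\Subspace_{j}$. From the phase conventions in \cref{eqn:jordan-vector-relations} one reads off that $\abs{\braket{\JorSymb{v}{j}{b}^{\MeasA} | \JorSymb{v}{j}{b'}^{\MeasB}}}^2$ equals $p_j$ when $b = b'$ and equals $1-p_j$ when $b \neq b'$. Consequently, measuring $\MeasA$ on a state $\JorKetB{j}{b}$ returns outcome $b$ (with post-measurement state $\JorKetA{j}{b}$) with probability $p_j$ and outcome $1-b$ (post-measurement state $\JorKetA{j}{1-b}$) with probability $1-p_j$; symmetrically for $\MeasB$ applied to $\JorKetA{j}{b}$. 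Since the initial state $\JorKetB{j}{1}$ is precisely the outcome-$1$ eigenstate of $\MeasB$, I treat it as if produced by a fictitious $0$-th measurement with outcome $b_0 \eqdef 1$. Writing $b_1,\dots,b_T$ for the outcomes of the $T$ actual measurements, an induction on the number of measurements performed, using the transition rule above, shows that conditioned on $b_0,\dots,b_{i-1}$ the $i$-th measurement returns $b_i = b_{i-1}$ with probability $p_j$ and $b_i = 1-b_{i-1}$ with probability $1-p_j$, independently of the history. This is exactly the process defining $\MWDistFull{p_j}{T}$, proving the claim when $\Subspace_{j}$ is two-dimensional.

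It remains to dispatch the degenerate one-dimensional subspaces, which I would handle by the convention stated just before the lemma: there $p_j \in \{0,1\}$, the spanning vector carries both an $\MeasA$-label and a $\MeasB$-label, and $\BProj{\MeasA},\BProj{\MeasB}$ act as identity or zero, so the outcome sequence is deterministic and one checks directly that it matches a sample of $\MWDistFull{p_j}{T}$ for $p_j \in \{0,1\}$. I do not expect a genuine obstacle; the lemma is essentially a bookkeeping statement, and the only points needing care are (i) using $\Subspace_{j}$-invariance to confine the whole procedure to an (at most) two-dimensional space so that Jordan's lemma applies cleanly, (ii) aligning the $b_0 = 1$ convention with the starting state $\JorKetB{j}{1}$, and (iii) treating the one-dimensional subspaces uniformly. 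This mirrors the alternating-measurement analyses of \cite{MarriottW05,Regev06-XXX}.
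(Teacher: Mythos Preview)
Your proposal is correct and follows essentially the same approach as the paper: derive the single-step transition rule from \cref{eqn:jordan-vector-relations}, set $b_0 \eqdef 1$ to encode the initial state $\JorKetB{j}{1}$, and conclude that each outcome repeats the previous one with probability $p_j$. Your write-up is more detailed than the paper's (you spell out the $\Subspace_{j}$-invariance, the phase-independence of the outcome probabilities, and the one-dimensional degenerate case), but the underlying argument is identical.
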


\begin{proof}
This is a consequence of two symmetric claims that follow directly from~\cref{eqn:jordan-vector-relations}.
\begin{itemize}
    \item If $\MeasA$ is applied to $\JorKetB{j}{b}$, then with probability $p_j$ the outcome is $b$ and the post-measurement state is $\JorKetA{j}{b}$, and with probability $1-p_j$ the outcome is $1-b$ and the post measurement state is $\JorKetA{j}{1-b}$.
    \item If $\MeasB$ is applied to $\JorKetA{j}{b}$, then with probability $p_j$ the outcome is $b$ and the post-measurement state is $\JorKetB{j}{b}$, and with probability $1-p_j$ the outcome is $1-b$ and the post measurement state is $\JorKetB{j}{1-b}$. 
\end{itemize}
It is convenient to think of the initial state $\JorKetB{j}{1}$ as the post-measurement state after $\MeasB$ returns $1$. Letting $b_0 \eqdef 1$, for any $i \in [T]$ the $i$-th measurement outcome $b_i$ is equal to $b_{i-1}$ with probability $p_j$ and equal to $1-b_{i-1}$ with probability $1-p_j$, giving the distribution $\MWDistFull{p_j}{T}$.
\end{proof}

We can generalize~\cref{lemma:alt-meas-eigenstate} to characterize the measurement outcomes when we begin with any state in $\image(\BProj{\MeasB})$, which must be of the form $\sum_j \alpha_j \JorKetB{j}{1}$.

\begin{lemma}
\label{lemma:alt-meas-general-state}
The measurement outcomes that result from applying $T$ alternating measurements $\MeasA,\MeasB,\MeasA,\MeasB,\ldots$ to the state $\sum_j \alpha_j \JorKetB{j}{1}$ have the following distribution:
\begin{enumerate}[nolistsep]
    \item sample $p_j$ with probability  $\abs{\alpha_j}^2$;
    \item output $\MWDistFull{p_j}{T}$.
\end{enumerate}
\end{lemma}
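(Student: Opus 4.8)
The plan is to deduce this from \cref{lemma:alt-meas-eigenstate} by inserting an auxiliary measurement at the very beginning of the alternating procedure. Let $\MeasJor = (\Pi^\Jor_j)_j$ be the projective measurement onto the Jordan subspaces, i.e.\ $\Pi^\Jor_j$ is the orthogonal projection onto $\Subspace_{j}$. By \cref{lemma:jordan} every $\Subspace_{j}$ is invariant under both $\BProj{\MeasA}$ and $\BProj{\MeasB}$, so each $\Pi^\Jor_j$ commutes with $\BProj{\MeasA}$ and with $\BProj{\MeasB}$; hence $\MeasJor$ commutes with both $\MeasA$ and $\MeasB$ as projective measurements. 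The idea is then that conditioning on the (discarded) outcome of $\MeasJor$ splits the procedure into the per-subspace behaviour already analysed in \cref{lemma:alt-meas-eigenstate}.

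The key steps are as follows. First I would argue that running $\MeasJor$ (and discarding its outcome) immediately before the alternating sequence $\MeasA,\MeasB,\MeasA,\MeasB,\ldots$ does not change the joint distribution of the recorded outcomes $b_1,\ldots,b_T$: since $\MeasJor$ commutes with each measurement in the sequence, the CPTP map that applies $\MeasJor$, discards, and then applies the sequence recording outcomes equals the map that applies the sequence recording outcomes and then applies $\MeasJor$ and discards — and in the latter the trailing $\MeasJor$ acts only on the non-recorded state registers, so it may be dropped. Second, applying $\MeasJor$ to the input state $\ket{\psi} = \sum_j \alpha_j \JorKetB{j}{1}$ returns outcome $j$ with probability $\norm{\Pi^\Jor_j \ket{\psi}}^2 = \abs{\alpha_j}^2$, because the $\JorKetB{j}{1}$ lie in pairwise orthogonal subspaces, and the post-measurement state conditioned on outcome $j$ is exactly $\JorKetB{j}{1}$. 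Third, conditioned on this outcome $j$, the subsequent $T$ alternating measurements are applied to $\JorKetB{j}{1}$, so by \cref{lemma:alt-meas-eigenstate} their outcomes are distributed according to $\MWDistFull{p_j}{T}$. Combining these by the law of total probability gives exactly the two-stage sampling procedure in the statement.

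The only step that needs genuine care is the commutation argument — formally verifying that prepending and then ``deferring'' the commuting measurement $\MeasJor$ leaves the distribution of $(b_1,\ldots,b_T)$ untouched. This is standard (commuting projective measurements may be performed in any order, and a measurement performed last whose outcome is discarded is trace-preserving and hence invisible on the recorded registers), so once it is in place the rest is a direct application of \cref{lemma:alt-meas-eigenstate}.
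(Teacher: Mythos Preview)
Your proposal is correct and follows essentially the same approach as the paper's proof: both introduce the Jordan-subspace projective measurement, argue that it commutes with $\MeasA$ and $\MeasB$ (you via subspace invariance, the paper via diagonality in the $\{\JorKetA{j}{b}\}$ and $\{\JorKetB{j}{b}\}$ bases), and then use this to reduce to \cref{lemma:alt-meas-eigenstate} on each collapsed component. The paper phrases the commutation as ``commuting $\MeasJor$ to occur after the $T$ alternating measurements'', while you phrase it as prepending-then-deferring, but this is the same argument.
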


\begin{proof}
Consider the \emph{Jordan subspace measurement} $\Meas{\Jor}[\BProj{\MeasA},\BProj{\MeasB}] \eqdef (\SProj[\Jor]{j})_j$ on $\RegH$, where
\begin{equation*}
\SProj[\Jor]{j} \eqdef \JorProjA{j}{1} + \JorProjA{j}{0} = \JorProjB{j}{1} + \JorProjB{j}{0}
\enspace.
\end{equation*}
In words, $\Meas{\Jor}[\BProj{\MeasA},\BProj{\MeasB}]$ is the projective measurement onto the Jordan subspaces $\{S_j\}_{j}$ that outputs a Jordan subspace label $j$. 

Suppose that we perform the measurement $\Meas{\Jor}[\BProj{\MeasA},\BProj{\MeasB}]$ on $\sum_j \alpha_j \JorKetB{j}{1}$, and subsequently perform $T$ alternating measurements $\MeasA,\MeasB,\MeasA,\MeasB,\ldots$. The outcome of $\Meas{\Jor}[\BProj{\MeasA},\BProj{\MeasB}]$ is $j$ with probability $\abs{\alpha_j}^2$, and the subsequent alternating measurement outcomes are distributed according to $\MWDistFull{p_j}{T}$ by~\cref{lemma:alt-meas-eigenstate}. It remains to prove that the distribution of measurement outcomes is unchanged even if we skip the $\Meas{\Jor}[\BProj{\MeasA},\BProj{\MeasB}]$ measurement.

This is because $\Meas{\Jor}[\BProj{\MeasA},\BProj{\MeasB}]$ \emph{commutes} with both $\MeasA$ and $\MeasB$. To see that $\Meas{\Jor}[\BProj{\MeasA},\BProj{\MeasB}]$ commutes with $\MeasA$, observe that the corresponding measurement operators are diagonal in the basis $\{\JorKetA{j}{b}\}_{j,b}$, since $\BProj{\MeasA} = \sum_j \JorProjA{j}{1}$ by Jordan's lemma and $\SProj[\Jor]{j} = \JorProjA{j}{0} + \JorProjA{j}{1}$ for all $j$ by definition. $\Meas{\Jor}[\BProj{\MeasA},\BProj{\MeasB}]$ commutes with $\MeasB$ by an identical argument for the basis $\{\JorKetB{j}{b}\}_{j,b}$.

As a consequence, we can commute $\Meas{\Jor}[\BProj{\MeasA},\BProj{\MeasB}]$ to occur \emph{after} the $T$ alternating measurements $\MeasA,\MeasB,\MeasA,\MeasB,\ldots$, at which point $\Meas{\Jor}[\BProj{\MeasA},\BProj{\MeasB}]$ has no effect on the measurement outcomes.
\end{proof}

\newcommand{\NumReps}{\mathsf{NReps}}

\parhead{Almost projective measurements}
We state a property of general measurements due to~\cite{Zhandry20} that captures when a measurement is ``close'' to being projective, in the sense that sequential applications of the measurement yield similar outcomes.

\begin{definition}
\label{def:almost-proj}
A real-valued measurement $\Measurement$ on $\RegH$ is \defemph{$(\varepsilon,\delta)$-almost-projective} if applying $\Measurement$ twice in a row to any state $\DMatrix \in \Hermitians{\RegH}$ produces measurement outcomes $p,p'$ where 
\begin{equation*}
\Pr[\abs{p-p'} \leq \varepsilon] \geq 1-\delta
\enspace.
\end{equation*}
\end{definition}

We briefly discuss how alternating measurements $\MeasA,\MeasB$ constitutes a $(\varepsilon,\delta)$-almost projective approximation of $\MeasJor[\BProj{\MeasA},\BProj{\MeasB}]$. While we will not make use of this fact directly (we prove a variant of it in \cref{lemma:approx-eig}), we will introduce some concepts and notation that are useful later. For $\vec{b} \in \Bits^{n+1}$, and letting $\Fractions{n} \eqdef \{ 0, 1/n, 2/n, \ldots, 1\}$, define
\begin{equation*}
\NumReps(\vec{b}) \eqdef \frac{|\{ j \in \{1, \ldots, n\} : b_{j-1} = b_{j} \}|}{n} \in \Fractions{n}
\enspace.
\end{equation*}
That is, $\NumReps(\vec{b})$ is the number of pairs of consecutive repeated bits in $\vec{b}$, divided by $n$; for example $p(0,0,1,1,1,0) = 3/5$. The following proposition is immediate from the definition of $\MWDist$:

\begin{proposition}
\label{prop:pest-mwdist}
If $\vec{b} \sim \MWDistFull{p}{T}$ then $\NumReps(1,\vec{b}) \sim \Bin(T,p)/T$.
\end{proposition}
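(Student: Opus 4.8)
The proof is a direct unwinding of the definitions of $\MWDist$ and $\NumReps$, so my plan is simply to name the relevant random variables and observe they are the independent coins used to build $\MWDist$. First I would note that prepending $b_0 \eqdef 1$ to the output $\vec{b} = (b_1,\ldots,b_T)$ of $\MWDistFull{p}{T}$ exactly reconstitutes the sequence $(b_0,b_1,\ldots,b_T)$ appearing in the definition of $\MWDistFull{p}{T}$, so that
\[
\NumReps(1,\vec{b}) \;=\; \frac{1}{T}\,\big|\{\, j \in [T] : b_{j-1} = b_j \,\}\big| \;=\; \frac{1}{T}\sum_{j=1}^{T} Y_j, \qquad \text{where } Y_j \eqdef \mathbf{1}_{b_{j-1} = b_j}.
\]
The reason the proposition is stated with the prepended bit $1$ (rather than just $\NumReps(\vec{b})$) is precisely that this bit contributes the comparison of $b_0$ against $b_1$, so that all $T$ transitions of the $\MWDist$ process are counted.

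Next I would identify the indicators $Y_j$ with the independent coins used to build $\MWDist$. By construction of $\MWDistFull{p}{T}$, for each $j \in [T]$ the bit $b_j$ is obtained from $b_{j-1}$ by an independent coin that leaves $b_{j-1}$ unchanged with probability $p$ and flips it to $1-b_{j-1}$ with probability $1-p$; the event $\{b_{j-1} = b_j\}$ is exactly the event that the $j$-th coin leaves the bit unchanged. Since these coins are mutually independent over $j \in [T]$, the random variables $Y_1,\ldots,Y_T$ are mutually independent Bernoulli random variables with parameter $p$. Hence $\sum_{j=1}^{T} Y_j \sim \Bin(T,p)$, and dividing through by $T$ gives $\NumReps(1,\vec{b}) \sim \Bin(T,p)/T$, as claimed.

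There is essentially no obstacle here: the content is pure bookkeeping, which is why the paper calls it ``immediate''. The only two points that warrant a moment's care are (i) that the prepended bit is consistent with the initialization $b_0 = 1$ in the definition of $\MWDist$, so that the count runs over exactly $T$ consecutive pairs (neither $T-1$ nor $T+1$); and (ii) that the keep/flip decisions in $\MWDist$ are genuinely \emph{mutually} independent, which is what upgrades the (obviously correct) marginals of the $Y_j$ to the statement that their sum is exactly binomial rather than merely binomial-like in expectation.
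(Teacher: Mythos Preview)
Your proposal is correct and is exactly the unwinding the paper has in mind when it declares the proposition ``immediate from the definition of $\MWDist$''; the identification of the indicators $Y_j = \mathbf{1}_{b_{j-1}=b_j}$ with the independent $\MWDist$ coins (the $a_j$'s in the paper's notation) is the whole content. Your two noted caveats about the prepended $1$ and mutual independence are precisely the points worth checking.
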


Consider the measurement procedure $\Measurement$ that applies $T$ measurements $\MeasA,\MeasB,\MeasA,\MeasB,\ldots$ in an alternating fashion, and outputs $\NumReps(b_1,\ldots,b_{T})$, where the $b_i$ are the measurement outcomes. Then by \cref{lemma:alt-meas-general-state}, for $\ket{\psi} = \sum_{j} \alpha_j \JorKetB{j}{1}$, $\Expectation_{p \gets \Measurement(\ket{\psi})}[p] = |\alpha_j|^2 p_j$. Moreover, if $T \approx \frac{1}{\varepsilon} \log \frac{1}{\delta}$ then $\Measurement$ is $(\varepsilon,\delta)$-almost projective. This is because sequential applications of $\Measurement$ are equivalent to a single application of $\Measurement$ of length $2T$; $(\varepsilon,\delta)$-almost projectivity follows by a Chernoff bound.

\subsection{Probability estimation}
\label{sec:prob-estimation}

We describe a measurement procedure $\ApproxEig$ that \emph{estimates} $\Value{\Game}(\StrategyMap,\DMatrix)$, following techniques of \cite{MarriottW05,Zhandry20}. The procedure is a variation on the ``approximate projective implementation'' procedure of \cite{Zhandry20}, and we show that it is $(\varepsilon,\delta)$-almost projective. We also show that if $\ApproxEig(\DMatrix)$ produces an outcome $\geq p$ with high probability, then $\Value{\Game}(\StrategyMap,\DMatrix)$ cannot be much smaller than $p$.

A player with unitary strategy $U_S$ and initial state $\DMatrix$ in the game $\Game = (\RSet,\ZSet,\Pred)$ receives a random challenge $r \gets R$, applies $U_S$ to $\ketbra{r}^{\RegR} \otimes \DMatrix^{\RegZ,\RegI}$, and sends $\RegZ$ to the referee; here $\RegR$ is supported on $\{\ket{r}\}_{r \in R}$, $\RegZ$ is supported on $\{\ket{z}\}_{z \in Z}$, and $\RegI$ denotes the player's internal registers.

The procedure $\ApproxEig$ is parameterized by $\varepsilon,\delta \in [0,1]$ and a game $\Game$, and has black-box access to the player's unitary $U_{\StrategyMap}$ and its inverse $U_{\StrategyMap}^\dagger$, and operates on registers $(\RegZ,\RegI)$.

We set
\begin{equation*}
\AEigReps \eqdef \AEigReps(\varepsilon,\delta) \eqdef \max\{\lceil \ChernoffN_{\varepsilon/2,\delta/4}/2 \rceil, \log_{5/8}(\delta/2)\} = O\left(\frac{1}{\varepsilon} \log \frac{1}{\delta}\right) \enspace,    
\end{equation*}
where $\ChernoffN_{\varepsilon,\delta}$ is a parameter defined in \cref{prop:chernoff} for the Chernoff bound. Let $\RegR'$ be a register with basis $\{ \ket{r} \}_{r \in \RSet} \cup \{ \ket{\top}, \ket{\bot} \}$. We define the state $\UniState{R}$ on $\RegR',\RegR$ as 
\begin{align*}
\UniState{R} \eqdef \frac{1}{2} \ket{\top,0} + \frac{1}{2} \ket{\bot,0} + \frac{1}{\sqrt{2|\RSet|}} \sum_{r \in \RSet} \ket{r,r}
\enspace,
\end{align*}
where $\top$ and $\bot$ are arbitrary symbols distinct from the elements of $R$, and $0 \in \RSet$.
\begin{remark}
    We introduce the auxiliary (control) register $\RegR'$ for two reasons:
    \begin{enumerate}[nolistsep,label=(\alph*)]
        \item $\RegR'$ has two additional basis elements $\ket{\top},\ket{\bot}$. These are special symbols which correspond to ``automatically'' winning or losing the game, respectively. This forces our probability estimates to be scaled within the range $[1/4,3/4]$, which can easily be rescaled to $[0,1]$ before outputting a final value. This modification ensures that the procedure terminates within a polynomial number of steps except with negligible probability.
        \item Tracing out the $\RegR'$ register leaves the classical mixed state $\frac{1}{2}\ketbra{0} + \frac{1}{2|\RSet|} \sum_r \ketbra{r}$ on $\RegR$; this ensures that $U_{S}$ behaves as if it were invoked on random $\ket{r}$ (or $0$, with probability $1/2$).
    \end{enumerate}
\end{remark}

We are now ready to define the procedure $\ApproxEig$.

\begin{center}
\begin{minipage}{0.9\textwidth}
\noindent $\ApproxEig_{\Game,\varepsilon,\delta}^{U}$:
\begin{enumerate}[itemsep=0pt]
\item Initialize registers $(\RegR',\RegR)$ to $\UniState{R}$;
\item Define $\Meas{\ConProj} := \BMeas{\BProj{\ConProj}}$ where $\BProj{\ConProj} \eqdef U_{\StrategyMap}^{\dagger} \BProj{\Pred} U_{\StrategyMap}$
for
\begin{align*}
    \BProj{\Pred} &\eqdef \sum_{r,z,\Pred(r,z) = 1} \ketbra{r,z}^{\RegR',\RegZ} + \ketbra{\top}^{\RegR'} \otimes I^{\RegZ} \enspace.
\end{align*}
\item \label[step]{step:aeig-estimate} For $i = 1, \ldots, \AEigReps$:
\begin{enumerate}[nolistsep]
\item Apply $\Meas{\ConProj}$, obtaining outcome $L_{2i-1} \in \Bits$.
\item Apply $\Meas{\UniState{\RSet}} \eqdef \BMeas{\UniProj{\RSet}^{\RegR',\RegR}}$, obtaining outcome $L_{2i} \in \Bits$.
\end{enumerate}
\item \label[step]{step:aeig-disentangle} If $L_{2\AEigReps} = 1$, skip to  \cref{step:aeig-output}. Otherwise, apply $\Meas{\ConProj},\Meas{\UniState{\RSet}}$ to $\RegA$ in an alternating fashion until $\Meas{\UniState{R}} \to 1$, or a further $2\AEigReps$ measurements have been applied.
\item \label[step]{step:aeig-output} Discard $\RegR$ and $\RegR'$; output $\PEst \eqdef 2\cdot \NumReps(1,L_1,\dots,L_{2\AEigReps})-1/2$.
\end{enumerate}
\end{minipage}
\end{center}

\begin{lemma}
    \label{lemma:approx-eig}
    The measurement $\ApproxEig \eqdef \ApproxEig_{\Game,\varepsilon,\delta}^{\StrategyMap}$ has the following properties:
    \begin{enumerate}[label=(\roman*)]
    \item \label{item:aeig-time} $\ApproxEig$ is an oracle circuit of size $O(|\Pred| \cdot \frac{1}{\varepsilon} \log \frac{1}{\delta})$ that applies $U_{\StrategyMap}$ and $U_{\StrategyMap}^{\dagger}$ $O(\frac{1}{\varepsilon} \log \frac{1}{\delta})$ times;
    \item \label{item:aeig-expectation} for every $\DMatrix \in \Hermitians{\RegZ,\RegI}$, $\Expectation_{\PEst \gets \ApproxEig(\DMatrix)}[\PEst] = \Value{\Game}(\StrategyMap,\DMatrix)$;
    \item \label{item:aeig-approx-proj} $\ApproxEig$ is $(\varepsilon,\delta)$-almost projective;
    \item \label{item:aeig-good-states} for every $p \in \bbR$, if $\Pr_{p' \gets \ApproxEig(\DMatrix)}[p' \geq p] \geq 1-\gamma$ then $\Value{\Game}(\StrategyMap,\DMatrix) \geq p - \gamma - \varepsilon - \delta$;
    \item \label{item:aeig-pm-state} for every $\DMatrix \in \Hermitians{\RegZ,\RegI}$, $\Value{\Game}(\StrategyMap,\ApproxEig(\DMatrix)) \geq \Value{\Game}(\StrategyMap,\DMatrix) - \delta$.
    \end{enumerate}
\end{lemma}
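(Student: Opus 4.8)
The plan is to analyze $\ApproxEig$ entirely through the Jordan decomposition (\cref{lemma:jordan}) of the two projectors $\BProj{\ConProj}$ and $\UniProj{\RSet}^{\RegR',\RegR}$, reducing every item to an elementary fact about the Marriott--Watrous distribution $\MWDist$. The whole argument rests on two observations that I would establish first. \emph{(Subspace structure.)} The initial state $\UniState{\RSet}\otimes\DMatrix$ lies in the image of $\UniProj{\RSet}^{\RegR',\RegR}$, and within any Jordan subspace $\Subspace_j$ that image is one-dimensional, spanned by a vector of the form $\UniState{\RSet}\otimes\ket{\chi_j}$. Expanding $\UniState{\RSet}$ into its $\ket{\top,0}$, $\ket{\bot,0}$, and $\tfrac{1}{\sqrt{2|\RSet|}}\sum_r\ket{r,r}$ pieces and using $\BProj{\ConProj}=U_{\StrategyMap}^{\dagger}\BProj{\Pred}U_{\StrategyMap}$ together with the definition of $\BProj{\Pred}$, a direct computation gives
\[
p_j \;=\; \bigl\|\BProj{\ConProj}(\UniState{\RSet}\otimes\ket{\chi_j})\bigr\|^2 \;=\; \bigl\|\BProj{\Pred}\,U_{\StrategyMap}(\UniState{\RSet}\otimes\ket{\chi_j})\bigr\|^2 \;=\; \tfrac14 + \tfrac12\,\Value{\Game}(\StrategyMap,\ketbra{\chi_j}) \;\in\; \bigl[\tfrac14,\tfrac34\bigr]
\]
(the $\top$-branch always contributes $\tfrac14$, the $\bot$-branch is killed by $\BProj{\Pred}$, and the remaining branch contributes $\tfrac12$ times the win probability on a uniform challenge). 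In particular $\UniState{\RSet}\otimes\DMatrix$ is supported only on two-dimensional Jordan subspaces with eigenvalue in $[\tfrac14,\tfrac34]$ --- exactly the purpose of the padding symbols $\top,\bot$. \emph{(Outcome distribution.)} Identifying $\Meas{\ConProj}$ with the first and $\Meas{\UniState{\RSet}}$ with the second measurement, \cref{lemma:alt-meas-general-state} says the outcomes $L_1,\dots,L_{2\AEigReps}$ of \cref{step:aeig-estimate} are generated by sampling a Jordan subspace $j$ with probability $w_j \eqdef \Tr(\SProj[\Jor]{j}(\UniProj{\RSet}^{\RegR',\RegR}\otimes\DMatrix))$ --- conditioned on which, by the image structure, the state is the \emph{pure} state $\UniState{\RSet}\otimes\ket{\chi_j}$ --- and then outputting $\MWDistFull{p_j}{2\AEigReps}$; hence conditioned on $j$, $\NumReps(1,L_1,\dots,L_{2\AEigReps})\sim\Bin(2\AEigReps,p_j)/2\AEigReps$ by \cref{prop:pest-mwdist}.

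Given this, items~(i),~(ii),~(v) are short. Item~(i) is a count: \cref{step:aeig-estimate,step:aeig-disentangle} use at most $4\AEigReps$ of the measurements $\Meas{\ConProj},\Meas{\UniState{\RSet}}$, each $\Meas{\ConProj}$ costing $O(|\Pred|)$ gates plus two queries to $U_{\StrategyMap},U_{\StrategyMap}^{\dagger}$, and $\AEigReps = O(\tfrac1\varepsilon\log\tfrac1\delta)$. Item~(ii): $\Expectation[\NumReps(1,L_1,\dots,L_{2\AEigReps})] = \sum_j w_j p_j = \Tr(\BProj{\ConProj}(\UniProj{\RSet}^{\RegR',\RegR}\otimes\DMatrix)) = \tfrac14 + \tfrac12\Value{\Game}(\StrategyMap,\DMatrix)$, so $\Expectation[\PEst] = 2\Expectation[\NumReps]-\tfrac12 = \Value{\Game}(\StrategyMap,\DMatrix)$. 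Item~(v): conditioned on $j$ and on \cref{step:aeig-disentangle} eventually seeing a $1$-outcome from $\Meas{\UniState{\RSet}}$ (which is automatic when $L_{2\AEigReps}=1$), the post-measurement state is exactly $\UniState{\RSet}\otimes\ket{\chi_j}$, so after discarding $\RegR',\RegR$ it is $\ketbra{\chi_j}$ with value $2p_j-\tfrac12$; averaging gives $\sum_j w_j(2p_j-\tfrac12) = \Value{\Game}(\StrategyMap,\DMatrix)$, and the ``disentangle fails'' branch costs at most its probability (shown below to be $\le\delta/4$) times $1$.

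Item~(iii) is the crux. Since the Jordan-subspace measurement commutes with $\Meas{\ConProj}$ and $\Meas{\UniState{\RSet}}$, I would insert it immediately after the first $\ApproxEig$ initializes $\RegR',\RegR$ and condition on its outcome $j$; everything then lives in the fixed subspace $\Subspace_j$, starting from $\UniState{\RSet}\otimes\ket{\chi_j}$. By \cref{lemma:alt-meas-eigenstate}, \cref{step:aeig-disentangle} returns the state to $\UniState{\RSet}\otimes\ket{\chi_j}$ unless all $\AEigReps$ of its $\Meas{\UniState{\RSet}}$-outcomes are $0$; conditioned on $j$ this has probability $\Pr[L_{2\AEigReps}=0]\cdot(p_j^2+(1-p_j)^2)^{\AEigReps} = \tfrac{1-q_j^{2\AEigReps}}{2}\bigl(\tfrac{1+q_j^2}{2}\bigr)^{\AEigReps}$ with $q_j=2p_j-1$, which since $q_j^2\le\tfrac14$ is at most $\tfrac12(5/8)^{\AEigReps}\le\delta/4$ (using $\AEigReps\ge\log_{5/8}(\delta/2)$) --- this is precisely where the $[\tfrac14,\tfrac34]$ range is used. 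When this bad event does not occur, the second $\ApproxEig$ reinitializes $\RegR',\RegR$ to $\UniState{\RSet}$ and so restarts from the same state $\UniState{\RSet}\otimes\ket{\chi_j}\in\Subspace_j$; by a Chernoff bound (\cref{prop:chernoff}, using that $2\AEigReps$ is at least $\ChernoffN_{\varepsilon/2,\delta/4}$) both outputs lie within $\tfrac\varepsilon2$ of $2p_j-\tfrac12$ except with probability $\le\delta/4$ each, and a union bound over the three bad events gives $\Pr[|\PEst-\PEst'|\le\varepsilon]\ge1-\delta$. Item~(iv) is then a calculation in the same language: $\Pr[\PEst\ge p]\ge1-\gamma$ forces $\Pr_j[p_j\ge\tfrac{p+1/2}{2}-\tfrac\varepsilon2]\ge1-\gamma-\delta/4$ (for each subspace with $p_j$ below that threshold, $\Bin(2\AEigReps,p_j)/2\AEigReps$ overshoots it with probability $\le\delta/4$), and combining with $p_j\ge\tfrac14$ and $\Value{\Game}(\StrategyMap,\DMatrix)=2\Expectation[p_j]-\tfrac12$ (from item~(ii)) yields $\Value{\Game}(\StrategyMap,\DMatrix)\ge p-\gamma-\varepsilon-\delta$.

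I expect the main obstacle to be item~(iii), and within it the analysis of the disentangling \cref{step:aeig-disentangle}: one must recognize that its only failure mode is exhausting the measurement budget, bound that probability \emph{uniformly over all Jordan subspaces} (so the bound survives averaging over $j$), and --- the key point --- notice that the $\top/\bot$ padding of $\UniState{\RSet}$ forces every Jordan eigenvalue $p_j$ into $[\tfrac14,\tfrac34]$, which is exactly what makes the $(5/8)^{\AEigReps}$ decay go through and hence makes the whole procedure terminating and near-projective. Once this is set up, everything else, including the mixed-state case, is handled by carrying the Jordan weights $w_j$ through directly rather than reducing to pure states.
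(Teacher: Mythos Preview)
Your proposal is correct and follows essentially the same approach as the paper: both analyze $\ApproxEig$ via the Jordan decomposition of $\BProj{\ConProj}$ and $\UniProj{\RSet}^{\RegR',\RegR}$, reduce the outcome distribution to $\MWDist(p_j,\cdot)$, and use the key observation that the $\top/\bot$ padding forces every relevant $p_j$ into $[\tfrac14,\tfrac34]$ so that \cref{step:aeig-disentangle} fails with probability at most $(5/8)^{\AEigReps}\le\delta/2$. The only differences are presentational---you carry Jordan weights $w_j$ directly where the paper reduces to pure states by convexity, your item~(iii) conditions on $j$ where the paper couples two distributions $D,D'$, and your item~(v) argument is more direct (it implicitly uses, as does the paper's trace-inequality chain, that $\Value{\Game}(\StrategyMap,\cdot)$ restricted to states of the form $\Tr_{\RegR',\RegR}(\UniProj{\RSet}\,\sigma\,\UniProj{\RSet})$ depends only on the diagonal of $\sigma$ in the $\{\JorKetB{j}{1}\}$ basis, which is why inserting $\Meas{\Jor}$ is harmless for the value).
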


\begin{proof}
\cref{item:aeig-time} follows directly from the description; we proceed to prove \cref{item:aeig-expectation,item:aeig-approx-proj,item:aeig-good-states,item:aeig-pm-state}. It suffices to prove each property for pure states $\ket{\psi} \in \RegZ \otimes \RegI$, as the statement for mixed states follows by convexity.

Consider a decomposition of $\RegR' \otimes \RegR \otimes \RegZ \otimes \RegI$ into the Jordan subspaces for projectors $\BProj{\ConProj}$ and $\UniProj{\RSet}^{\RegR',\RegR} \otimes \Id^{\RegZ,\RegI}$ (henceforth we will write the projector $\UniProj{\RSet}^{\RegR',\RegR} \otimes \Id^{\RegZ,\RegI}$ as $\UniProj{\RSet}^{\RegR',\RegR}$). Following our notation for Jordan subspaces in \cref{sec:alternating-projectors}, we will associate $\Meas{\ConProj}$ with $\MeasA$ and $\Meas{\UniState{R}}$ with $\MeasB$, so that in the $j$-th Jordan subspace:
\begin{itemize}[noitemsep]
    \item $\BProj{\ConProj}$ is a projection onto $\JorKetA{j}{1}$,
    \item $\UniProj{\RSet}^{\RegR',\RegR}$ is a projection onto $\JorKetB{j}{1}$, and
    \item $p_j = \norm{\JorBraKetAB{j}{1}}^2$.
\end{itemize} 
    
    Write $\UniState{\RSet}^{\RegR,\RegR'} \otimes \ket{\psi}^{\RegZ,\RegI} = \sum_{j} \alpha_j \JorKetB{j}{1}$. Note that
    \begin{equation*}
        \sum_{j} |\alpha_j|^2 p_j = \norm{\BProj{\ConProj} \UniState{\RSet}\ket{\psi}}^2 = \frac{\Value{\Game}(\StrategyMap,\ket{\psi})}{2} + \frac{1}{4} \enspace.
    \end{equation*}
    
    By \cref{lemma:alt-meas-general-state}, $\PEst \gets \ApproxEig(\ket{\psi})$ is distributed as:
    \begin{enumerate}[noitemsep]
        \item Choose $j$ with probability $|\alpha_{j}|^2$.
        \item Sample $L_{1},\ldots,L_{2\AEigReps} \gets \MWDist(p_j, 2\AEigReps)$.
        \item Output $\PEst \eqdef 2p(1,L_{1},\ldots,L_{2\AEigReps})-1/2$.
    \end{enumerate}
    Hence in particular we have that \[\Expectation[\PEst] = 2\sum_{j} |\alpha_j|^2 \Expectation[\NumReps(1,L_1,\ldots,L_{2t})] - 1/2 = 2\sum_{j} |\alpha_j|^2 p_j - 1/2 = \Value{\Game}(\StrategyMap,\ket{\psi}) \enspace,\]
    which establishes \ref{item:aeig-expectation}.
    
    We now prove \ref{item:aeig-good-states}. Suppose that $\Pr_{p' \gets \ApproxEig(\ket{\psi})}[p' \geq p] \geq 1-\gamma$. Then
    \begin{equation*}
        \gamma \geq \Pr_{\PEst \gets \ApproxEig(\ket{\psi})}[\PEst < p] = \sum_{j} |\alpha_j|^2 \Pr_{\vec{L} \gets \MWDist(p_j,2\AEigReps)}[p(1,\vec{L}) < p/2+1/4] \geq \sum_{j, p_j < p/2 + 1/4 - \varepsilon} |\alpha_j|^2(1 - \delta) \enspace,
    \end{equation*}
    by \cref{prop:pest-mwdist}. Rearranging,
    \begin{equation*}
        \sum_{j, p_j < p/2 + 1/4 - \varepsilon} |\alpha_j|^2 \leq \gamma + \delta \enspace.
    \end{equation*}
    Hence
    \begin{equation*}
        \Value{\Game}(\StrategyMap,\ket{\psi}) = 2\sum_{j} |\alpha_j|^2 p_j - 1/2 \geq p - \gamma - \varepsilon - \delta \enspace.
    \end{equation*}
    
    Next we prove \ref{item:aeig-approx-proj}. Let $D$ be the distribution on $\Fractions{2\AEigReps} \times \Fractions{2\AEigReps}$ arising from two sequential applications of $\ApproxEig$ with initial state $\ket{\psi}$ (recall that $Q_{2t} = \{0,\frac{1}{2t},\frac{2}{2t},\ldots,1\}$). Let $D'$ be sampled as follows.
    \begin{enumerate}[noitemsep]
        \item Choose $j$ with probability $|\alpha_{j}|^2$.
        \item Sample $L_{1},\ldots,L_{4\AEigReps} \gets \MWDist(p_j, 4\AEigReps)$.
        \item Sample $L'_{1},\ldots,L'_{2\AEigReps} \gets \MWDist(p_j, 2\AEigReps)$.
        \item Compute $\PEst \eqdef p(1,L_{1},\ldots,L_{2\AEigReps})$ and $\PEst' \eqdef p(1,L'_{1},\ldots,L'_{2\AEigReps})$.
        \item Output $(\PEst,\PEst')$.
    \end{enumerate}
    The statistical distance between $D$ and $D'$ is bounded by $\Pr[ \forall i \in [\AEigReps,2\AEigReps],\, L_{2i} = 0]$. This can be shown by coupling the outcomes of the first $4\AEigReps$ measurements with $L_{1},\ldots,L_{4t}$ drawn by $D'$. If this bad event does not occur, the first application of $\ApproxEig$ terminates in some state $\UniState{\RSet}^{\RegR',\RegR}\ket{\phi}^{\RegZ,\RegI}$, and so tracing out $(\RegR',\RegR)$ and then reinitializing it to $\UniState{\RSet}$ at the beginning of the second application of $\ApproxEig$ has no overall effect on the state. In this case, therefore, we can view the two applications of $\ApproxEig$ as a single alternating measurement procedure of length $2i+2\AEigReps$ conditioned on the outcome of the $2i$-th measurement being $1$. Then by \cref{lemma:alt-meas-general-state}, in this case $D$ and $D'$ are identically distributed.
    
    We now bound $\Pr[ \forall i \in [\AEigReps,2\AEigReps],\, L_{2i} = 0]$. Suppose that $j$ is sampled in the first step. For each $i \in [\AEigReps+1,\ldots,2\AEigReps]$, the probability that $L_{2i} = 1$ given that $L_{2i-2} = 0$ is $2p_j(1-p_j)$. Note that for every subspace $j$ where $\JorKetB{j}{1}$ is nonzero, $p_j = \norm{\BProj{\ConProj} \JorKetB{j}{1}}^2 \in [1/4,3/4]$; in particular, this holds for all subspaces $j$ such that $\alpha_j \neq 0$. Hence for any $j$ sampled with positive probability, $2p_j(1-p_j) \geq 3/8$. It follows that the probability that $L_{2i} = 0$ for all $i \in [\AEigReps,2\AEigReps]$ is at most $(5/8)^\AEigReps \leq \delta/2$.
    
    Finally we show that
    \begin{equation*}
        \Pr_{(\PEst,\PEst') \gets D'}[|\PEst-\PEst'| > \varepsilon] < \delta/2 \enspace,
    \end{equation*}
    which will complete the proof. Observe that for $j$ sampled in the first step, $\PEst,\PEst' \sim \Bin(2\AEigReps,p_j)/2\AEigReps$. Hence by \cref{prop:chernoff} (Chernoff bound), $\Pr[|\PEst - p_j| > \varepsilon/2] < \delta/4$, and similarly for $\PEst'$. The equation follows by a union bound.
    
    It remains to prove \cref{item:aeig-pm-state}. Recall that for any state $\DMatrix \in \Hermitians{\RegResp,\RegInt}$, we have that \[\Value{\Game}(\StrategyMap,\DMatrix) = 2\sum_{j} p_j \Tr(\SProj[\Jor]{j} (\UniProj{\RSet} \otimes \DMatrix)) - 1/2.\] 
    
    Let $\ApproxEig'$ be defined identically to $\ApproxEig$ except that it does not discard $\RegR,\RegR'$. Since for all $j$, $\SProj[\Jor]{j}$ commutes with $\Meas{\ConProj}$,$\Meas{\UniState{\RSet}}$, we have \[\Tr(\SProj[\Jor]{j} \ApproxEig'(\DMatrix)) = \Tr(\SProj[\Jor]{j} (\UniProj{\RSet}^{\RegR',\RegR} \otimes \DMatrix))~.\]
    Then we have
    \begin{align*}
    \Value{\Game}(\StrategyMap,\ApproxEig(\DMatrix)) &= 2\sum_{j} p_j \Tr(\SProj[\Jor]{j} (\UniProj{\RSet}^{\RegR',\RegR} \otimes \ApproxEig(\DMatrix))) - 1/2 \\
    &= 2\sum_{j} p_j \Tr(\SProj[\Jor]{j} (\UniProj{\RSet}^{\RegR',\RegR} \otimes \Tr_{\RegR,\RegR'}(\ApproxEig'(\DMatrix)))) - 1/2 \\
    &\geq 2\sum_{j} p_j \Tr(\SProj[\Jor]{j} (\UniProj{\RSet}^{\RegR',\RegR} \otimes \Tr_{\RegR,\RegR'}(\UniProj{\RSet}^{\RegR',\RegR} \cdot \ApproxEig'(\DMatrix)))) - 1/2 \\
    &= 2\sum_{j} p_j \Tr(\SProj[\Jor]{j} \UniProj{\RSet}^{\RegR',\RegR} \cdot \ApproxEig'(\DMatrix)) - 1/2 \\
    &\geq 2\sum_{j} p_j \Tr(\SProj[\Jor]{j} \ApproxEig'(\DMatrix)) - 1/2 - \delta \\
    &= 2\sum_{j} p_j \Tr(\SProj[\Jor]{j} \DMatrix) - 1/2 - \delta = \Value{\Game}(\StrategyMap,\DMatrix) - \delta ~,
    \end{align*}
    where the final inequality follows because $\Tr(\UniProj{\RSet} \ApproxEig'(\DMatrix))$ is at least the probability that $\ApproxEig$ terminates with $\Meas{\UniState{R}} \to 1$, which is at least $1-\delta$.
\end{proof}

\subsection{A state repair procedure}
\label{sec:state-repair}

We construct a procedure $\Repair^{\Measurement}(p)$ parameterized by an almost-projective measurement $\Measurement$ and with input $p \in \bbR$ that (under certain conditions) outputs a state $\DMatrix$ satisfying the guarantee: ``applying $\Measurement$ to $\DMatrix$ produces an outcome $\approx p$ with high probability''. We then obtain $\RepairProb$ by plugging in the almost-projective measurement $\ApproxEig$ for $\Measurement$.

\paragraph{The procedure.} Formally, our state repair procedure $\Repair^{\Measurement,\MeasC}_T$ is a CPTP map on a register $\RegH$, parameterized by:
\begin{itemize}[noitemsep]
    \item a positive integer $T$,
    \item an oracle for an $(\varepsilon,\delta)$-almost-projective measurement $\Measurement$ on $\RegH$, and
    \item an oracle for an $N$-outcome projective measurement $\MeasC = (\BProj{k})_{k=1}^{N}$ on $\RegH$,
\end{itemize}
and taking classical inputs $(k,p)$ where $k \in [N]$ and $p \in \bbR$. 

Recall that the measurement $\Measurement = (M_q)_{q \in I}$, where $I \subseteq \bbR$ is the set of outcomes of $\Measurement$, can be implemented as a unitary $\MeasUnitary$ on $(\RegH,\RegW)$ for some ancilla register $\RegW$, followed by some projective measurement $(\BProj{\Measurement,q})_{q \in I}$ on $\RegW$. Formally, for each $q \in I$, the unitary $\MeasUnitary$ and projector $\BProj{\Measurement,q}$ satisfy $M_{q} \DMatrix M_{q}^{\dagger} = \Tr_{\RegW}(\BProj{\Measurement,q} \MeasUnitary (\DMatrix \otimes \ketbra{0}^{\RegW}) \MeasUnitary^{\dagger})$ for all $\DMatrix \in \Hermitians{\RegH}$. We are now ready to give the state repair procedure.
\begin{center}
\begin{minipage}{0.9\textwidth}
$\Repair_{T}^{\Measurement,\MeasC}(k,p)$:
\begin{enumerate}[itemsep=0pt]
\item Define measurements
\begin{align*}
\MeasA_p \coloneqq \BMeas{\BProj{\MeasA,p}} \ &\mathrm{where} \ \BProj{\MeasA,p} \coloneqq \sum_{q \in [p \pm \varepsilon]} \MeasUnitary^{\dagger} \BProj{\Measurement,q} \MeasUnitary \enspace, \\
\MeasB_k \coloneqq \BMeas{\BProj{\MeasB,k}} \ &\mathrm{where} \ \BProj{\MeasB,k} \coloneqq \BProj{k}\otimes \ketbra{0}^{\RegW} \enspace.
\end{align*}
\item \label[step]{step:repair-init-w} Initialize $\RegW$ to $\ket{0}$.
\item Apply the measurement $\MeasA_p$. If the outcome is $1$, skip to \cref{step:trace-out}.
\item Apply the measurements $\MeasB_k,\MeasA_p,\MeasB_k,\MeasA_p,\ldots$ in alternating fashion until either (1) $\MeasA_p \rightarrow 1$ occurs or (2) $T$ applications of $(\MeasB_k,\MeasA_p)$ have been applied (whichever comes first).

\item \label[step]{step:trace-out} Apply $\MeasUnitary$ to $(\RegH,\RegW)$, and discard the $\RegW$ registers.
\end{enumerate}
\end{minipage}
\end{center}

The following lemma describes the effect of the repair procedure.

\newcommand{\RepairExpt}{\mathsf{RepairExpt}}
\newcommand{\Expt}{\mathsf{Expt}}
\begin{lemma}[State repair]
    \label{lemma:state-repair}
    Let $\Measurement$ be an $(\varepsilon,\delta)$-almost projective measurement on $\RegH$, $\MeasC = (\BProj{k})_{k=1}^{N}$ be a projective measurement on $\RegH$ with $N$ outcomes, $T$ be a positive integer. Consider the following quantum measurement procedure $\RepairExpt$ on $\RegH$:
    \begin{enumerate}[noitemsep]
        \item Measure the initial state: apply $\Measurement$, obtaining outcome $p$;
        \item Damage the state: apply $\MeasC$, obtaining outcome $k$;
        \item Repair the state: run $\Repair_{T}^{\Measurement,\MeasC}(k,p)$ and let $R$ denote the total number of calls to $\Measurement$ and $\MeasC$.
        \item Output $p$.
    \end{enumerate}
    Then $\RepairExpt$ is $(2\varepsilon, N(\delta + 1/T) + 4\sqrt{\delta})$-almost projective,
    and $\Expectation[R] \leq N+ 4T\sqrt{\delta} + 1$.
\end{lemma}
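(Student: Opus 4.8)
Our plan is to strip away the quantum content of $\RepairExpt$ using Jordan's lemma, reducing it to a purely classical random process whose statistics we then control via the $(\varepsilon,\delta)$-almost-projectivity of $\Measurement$ together with gentle measurement (\cref{lemma:gentle-measurement}). By convexity of ``$(\varepsilon',\delta')$-almost-projective'' and of $\Expectation[R]$ in the input state, it suffices to analyze a pure input. Fix the outcome $p$ of Step~1 and the outcome $k$ of Step~2 of $\RepairExpt$, and consider the Jordan decomposition $\bigoplus_j \Subspace_j$ of the projectors $\BProj{\MeasA,p}$ and $\BProj{\MeasB,k}$ on $(\RegH,\RegW)$, with eigenvalues $p_j$. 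Two structural facts drive everything: (a) the state entering $\Repair_{T}^{\Measurement,\MeasC}(k,p)$ is of the form $\sigma\otimes\ketbra{0}^{\RegW}$ with $\sigma$ in the image of $\BProj{k}$ (since $\MeasC$ just returned $k$), hence it lies in $\image\BProj{\MeasB,k}$; and (b) $\MeasB_k$ preserves the form ``(operator on $\RegH$) $\otimes\ketbra{0}^{\RegW}$''. By fact (a) and \cref{lemma:alt-meas-general-state}, the outcomes produced inside $\Repair$ --- the single $\MeasA_p$ of Step~3 followed by the alternating $\MeasB_k,\MeasA_p,\ldots$ of Step~4 --- are distributed exactly as: draw $j$ with probability $\beta_j\coloneqq\abs{\alpha_j}^2$ (where $\sigma\otimes\ketbra{0}^{\RegW}=\sum_j\alpha_j\JorKetB{j}{1}$), then emit $\MWDistFull{p_j}{1+2T}$ with $b_0\coloneqq1$. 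Since Step~5 is deterministic given the outcome string, both $R=1+2\cdot(\#\text{ of }(\MeasB_k,\MeasA_p)\text{ pairs})$ and the event ``$\Repair$ succeeds'' (terminates because $\MeasA_p\to1$ rather than by hitting the cap $T$) are functions of this classical process, and from the description of $\MWDist$ one reads off, per subspace, $\Expectation[\#\text{pairs}\mid j]\le\min\!\big(\tfrac1{2p_j},T\big)$ and $\Pr[\text{hit cap}\mid j]=(1-p_j)(1-2p_j(1-p_j))^{T-1}\le\tfrac{1}{2Tp_j}$ (the last inequality since $u(1-u)^{T-1}\le 1/T$ for all $u$).

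It then remains to bound $\Expectation_{p,k}\big[\sum_j \beta_j\, g(p_j)\big]$ for $g(x)=\min(1/(2x),T)$ (giving $\Expectation[R]$) and for $g$ equal to the cap-probability (giving the almost-projectivity error), plus a separate argument that a successful $\Repair$ restores a state on which re-running $\Measurement$ reports a value near $p$. The first input is almost-projectivity: unwinding \cref{def:almost-proj}, applying $\MeasA_p$ to the post-$(\Measurement\to p)$ state $\rho_p\otimes\ketbra{0}^{\RegW}$ returns $1$ with probability $1-\delta_p$ where $\Expectation_p[\delta_p]\le\delta$, because $\BProj{\MeasA,p}=\sum_{q\in[p\pm\varepsilon]}\MeasUnitary^\dagger\BProj{\Measurement,q}\MeasUnitary$ and so $\MeasA_p$ succeeds precisely when a fresh run of $\Measurement$ would report a value in $[p\pm\varepsilon]$. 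The second input is \cref{lemma:gentle-measurement}: the post-$(\MeasA_p\to1)$ state is within trace distance $2\sqrt{\delta_p}$ of $\rho_p\otimes\ketbra{0}^{\RegW}$, so --- since $\MeasC$ and $\Repair$ are CPTP hence contractive --- we may analyze a hybrid in which the state just before the $\MeasC$-``damage'' lies in $\image\BProj{\MeasA,p}$, at additive cost $2\sqrt{\delta_p}\cdot O(T)$ in $\Expectation[R]$ (as $R\le1+2T$) and $O(\sqrt{\delta_p})$ in probabilities; averaging over $p$ with $\Expectation_p[\sqrt{\delta_p}]\le\sqrt\delta$ produces the $4\sqrt\delta$ and $4T\sqrt\delta$ terms, and folding in the $\delta_p$-abort of this hybrid produces the $N\delta$ term. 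The $N$ itself comes from a $\Pr[k]$-normalization: writing $\gamma^{(k)}_j\coloneqq\Tr\big(\JorProjB{j}{1}(\rho_p\otimes\ketbra0^\RegW)\big)$ one has $\beta_j=\gamma_j^{(k)}/\Pr[k]$ and $\sum_j\gamma_j^{(k)}=\Pr[k]$, and once the pre-damage state is in $\image\BProj{\MeasA,p}$ the relations \cref{eqn:jordan-vector-relations} give $\gamma_j^{(k)}=\abs{\mu_j}^2p_j$ with $\sum_j\abs{\mu_j}^2=1$, whence $\beta_j\,g(p_j)\le\abs{\mu_j}^2/(2\Pr[k])$ for $g(x)\le1/(2x)$ and therefore $\Pr[k]\sum_j\beta_j\,g(p_j)\le\tfrac12$; summing over the $N$ outcomes $k$ and redoing the computation with the extra $1/(Tp_j)$ factor for the cap-probability yields the $N$ and the $N/T$ terms. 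Linearity of expectation then assembles $\Expectation[R]\le N+4T\sqrt\delta+1$ and the claimed error $N(\delta+1/T)+4\sqrt\delta$.

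The genuinely hard part --- and the source of the $2\varepsilon$ rather than $\varepsilon$ in the statement --- is the last point: showing that conditioned on ``$\Repair$ succeeds'' the value re-measured by the next application of $\Measurement$ lies within $\varepsilon$ of $p$, so two consecutive $\RepairExpt$'s drift by at most $2\varepsilon$. A successful $\Repair$ only leaves the state in $\image\BProj{\MeasA,p}$, i.e.\ certifies a $\Measurement$-value in $[p\pm\varepsilon]$; one must propagate this certificate through Step~5 (applying $\MeasUnitary$ and discarding $\RegW$) and through the re-initialization of $\RegW$ that the next $\Measurement$ performs --- exploiting that the $\BProj{\Measurement,q}$ act on $\RegW$ alone --- to conclude that the re-measured value is again within $\varepsilon$ of $p$, while simultaneously handling the asymmetry that $\MeasC$ (hence $\MeasB_k$) acts on $\RegH$ only whereas $\BProj{\MeasA,p}$ acts jointly on $(\RegH,\RegW)$, and tightening the $\delta_p$-bookkeeping from the hybrid above (which above was stated only at the level of intuition). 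Carrying this out requires the careful combined Jordan analysis of the projective implementation of $\Measurement$ against $\MeasB_k$ described in \cref{subsec:tech-approximate-jordan}, and is where the bulk of the work lies.
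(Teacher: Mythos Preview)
Your proposal follows essentially the same route as the paper: a hybrid that post-selects on $\MeasA_p \to 1$ before the damage step (justified by gentle measurement, at cost $2\sqrt{\delta}$ in trace distance), followed by the Jordan analysis of $(\BProj{\MeasA,p}, \BProj{\MeasB,k})$ to reduce the alternating-measurement dynamics to $\MWDist$, and then summing $\Pr[k] \cdot (\text{per-}k\text{ bound})$ over the $N$ outcomes. Your accounting of $\Expectation[R]$ and of the $N/T$ contribution is correct.

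One correction concerning the ``hard part'' and the origin of the $2\varepsilon$. It is \emph{not} that a successful $\Repair$ certifies a next-$\Measurement$ value within $\varepsilon$ of $p$, with a second $\varepsilon$ coming from a separate drift. Rather, the paper (in its \cref{claim:good-states}) shows that the post-repair state $\DMatrix^* = \Tr_{\RegW}(\MeasUnitary \DMatrix' \MeasUnitary^\dagger)$ can be identified --- after a unitary rotation within each Jordan subspace that maps the $(\Id - \BProj{\MeasA,p})$-component of $\DMatrix'$ into $\image(\BProj{\MeasA,p})$, at trace-distance cost $\gamma = 1 - \Tr(\BProj{\MeasA,p}\DMatrix')$ --- with the state obtained by applying $\Measurement$ to some $\DMatrixB'' \in \Hermitians{\RegH}$ and post-selecting on an outcome $q \in [p \pm \varepsilon]$; this post-selection succeeds with probability exactly $\eta = \beta$. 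A \emph{single} application of $(\varepsilon,\delta)$-almost-projectivity to $\DMatrixB''$ then gives that the next $\Measurement$-outcome lies in $[q \pm \varepsilon] \subseteq [p \pm 2\varepsilon]$ except with probability $\delta/\eta$. Summing $\Pr[k]\cdot \delta/\Pr[k]$ over $k$ yields the $N\delta$ term --- so $N\delta$ does not come from an ``abort'' in the hybrid, but from this single almost-projectivity step after repair. The identification of $\DMatrix^*$ as a post-$\Measurement$ state is the crux; this is precisely where the structure $\BProj{\MeasB,k} = \BProj{k}\otimes\ketbra{0}^{\RegW}$ and the fact that the $\BProj{\Measurement,q}$ act only on $\RegW$ get used, via an auxiliary operator $C = \sum_{j,p_j>0} p_j^{-1/2}\JorKetB{j}{1}\JorBraA{j}{1}$ that carries $\image(\BProj{\MeasA,p})$ into $\image(\BProj{\MeasB,k})$.
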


\begin{proof}[Proof of \cref{lemma:state-repair}]

We write out in full the steps applied in $\RepairExpt$:
\begin{enumerate}[noitemsep]
    \item[] $\RepairExpt$:
    \item Apply $\Measurement$, obtaining outcome $p$;
    \item \label[step]{step:perturb-1} Apply $\MeasC$, obtaining outcome $k \in [N]$.
    \item \label[step]{step:repair-init-w-1} Initialize $\RegW$ to $\ket{0}$.
    \item \label[step]{step:initial-meas-1} Apply the measurement $\MeasA_p$. If the outcome is $1$, skip to \cref{step:trace-out-1}.
    \item \label[step]{step:alt-meas-1} Apply the measurements $\MeasB_k,\MeasA_p,\MeasB_k,\MeasA_p,\ldots$ in alternating fashion until either (1) $\MeasA_p \rightarrow 1$ occurs or (2) $T$ applications of $(\MeasB_k,\MeasA_p)$ have been applied (whichever comes first).
    \item \label[step]{step:trace-out-1} Apply $\MeasUnitary$ to $(\RegH,\RegW)$, and discard the $\RegW$ registers.
\end{enumerate}

From this point on, we refer to \cref{step:initial-meas-1,step:alt-meas-1,step:trace-out-1} as $\Repair'(k,p)$, which maps $\RegH \otimes \RegW \to \RegH$.

\noindent Define the $(N+1)$-outcome projective measurement $\MeasC' \eqdef ((\BProj{\MeasB,k})_{k=1}^N, \BProj{\bot})$ on $\RegH \otimes \RegW$ where $\BProj{\bot} \eqdef \Id^{\RegH} \otimes (\Id - \ketbra{0})^{\RegW}$. Next consider the following experiment $\Expt_1$ (differences highlighted in \textcolor{red}{red}).
\begin{enumerate}[noitemsep]
    \item[] $\Expt_1$:
    \item Apply $\Measurement$, obtaining outcome $p$;
    \textcolor{red}{
    \item \label[step]{step:repair-init-w-2} Initialize $\RegW$ to $\ket{0}$.
    \item \label[step]{step:perturb-2} Apply $\MeasC'$, obtaining outcome $k \in [N] \cup \{\bot\}$.
    }
    \item Apply $\Repair'(k,p)$.
\end{enumerate}

$\Expt_1$ and $\RepairExpt$ are equivalent, since $\Expt_1$ can be obtained by performing the following changes to $\RepairExpt$:
\begin{itemize}[noitemsep]
    \item Swap the order of \cref{step:perturb-1} and \cref{step:repair-init-w-1} in $\RepairExpt$. This does not change the resulting experiment since $\MeasC$ acts trivially on $\RegW$.
    \item Then, apply $\MeasC'$ instead of $\MeasC$ to obtain $k$. This causes no change since $\MeasC'(\DMatrixB^{\RegH} \otimes \ketbra{0}^{\RegW}) = \MeasC(\DMatrixB^{\RegH}) \otimes \ketbra{0}^{\RegW}$ for all $\DMatrixB \in \Hermitians{\RegH}$.
\end{itemize}

We now define another experiment $\Expt_2$ as follows (differences from $\Expt_1$ highlighted in \textcolor{red}{red}).

\begin{enumerate}[noitemsep]
    \item[] $\Expt_2$:
    \item \label[step]{step:initial-measurement-3} Apply $\Measurement$, obtaining outcome $p$;
    \item \label[step]{step:repair-init-w-3} Initialize $\RegW$ to $\ket{0}$.
    \textcolor{red}{\item \label[step]{step:repair-postselect} Apply $\MeasA_{p}$ to $(\RegH,\RegW)$ and postselect on obtaining outcome $1$.}
    \item \label[step]{step:perturb-3} Apply $\MeasC'$, obtaining outcome $k \in [N] \cup \{\bot\}$.
    \item \label[step]{step:repair-3} Apply $\Repair'(k,p)$.
\end{enumerate}

It will be convenient hereafter to treat $\Measurement$ and $\MeasC'$ as CPTP maps that write their output to a new output register, i.e., $\Measurement \colon \Hermitians{\RegH} \to \Hermitians{\RegH \otimes \RegO_1}$ and $\MeasC' \colon \Hermitians{\RegH} \to \Hermitians{\RegH \otimes \RegO_2}$. For the remainder of the proof, fix an initial state $\DMatrix \in \Hermitians{\RegH}$. Let $\DMatrix_1$ denote the state on $(\RegH,\RegW,\RegO_1)$ directly before \cref{step:perturb-2} in $\Expt_1$ applied to $\DMatrix$. Let $\DMatrix_2$ denote the state on the same registers directly before \cref{step:perturb-3} in $\Expt_2$ applied to $\DMatrix$. We show that these states are close in trace distance.

\begin{claim}
\label{claim:stat-distance-hybrids}
The trace distance between $\DMatrix_1$ and $\DMatrix_2$ is at most $2\sqrt{\delta}$.
\end{claim}

\begin{proof}
Let $\BProj{\MeasA}' \eqdef \sum_{p \in I} \ketbra{p}^{\RegO_1} \otimes \BProj{\MeasA,p}^{\RegH,\RegW}$. We have that
	\[\DMatrix_2 = \frac{\BProj{\MeasA}' \DMatrix_1 \BProj{\MeasA}'}{\Tr(\BProj{\MeasA}' \DMatrix_1)} \enspace.\]
Note that $\Tr(\BProj{\MeasA}' \DMatrix_1) = \Tr(\BProj{\MeasA}' (\Measurement(\DMatrix) \otimes \ketbra{0}^\RegW))$ is equal to the probability that applying $\Measurement$ twice in succession to $\DMatrix$ yields outcomes $p,p'$ such that $|p-p'| \leq \varepsilon$, and hence is at least $1-\delta$. The claim follows by the gentle measurement lemma (\cref{lemma:gentle-measurement}).
\end{proof}

To complete the proof of the lemma, we make use of the following key claim about $\Expt_2$. Roughly speaking, we show that in $\Expt_2$, if we obtain outcome $p \in I$ in \cref{step:initial-measurement-3} and an outcome $k \neq \bot$ in \cref{step:perturb-3} (which occurs with probability at least $1-2\sqrt{\delta}$ due to~\cref{claim:stat-distance-hybrids}) where the probability of obtaining $k$ was $\beta$, then the final state $\DMatrix^*$ after \cref{step:repair-3} has the following guarantee: applying $\Measurement$ to $\DMatrix^*$ produces an outcome $p'$ within $2\varepsilon$ of $p$ except with probability inversely proportional to $\beta$.

\begin{claim}
\label{claim:state-repair-main}
Fix $p \in I, k \in [N]$; let $\ket{\phi_{\MeasA}}$ be an arbitrary state in $\image(\BProj{\MeasA,p}) \subseteq \RegH \otimes \RegW$, and define $\ket{\phi_{\MeasB}} \coloneqq \BProj{\MeasB,k}\ket{\phi_{\MeasA}}/\sqrt{\beta}$ where $\beta \coloneqq \norm{\BProj{\MeasB,k}\ket{\phi_{\MeasA}}}^2$. Applying $\Repair'(k,p)$ to $\ket{\phi_{\MeasB}} \in \RegH \otimes \RegW$ yields the state $\DMatrix^* \in \Hermitians{\RegH}$ where
\begin{equation*}
    \Pr_{p' \gets \Measurement(\DMatrix^*)}[|p' - p| > 2\varepsilon] \leq (\delta + 1/T)/\beta,
\end{equation*}
and $\Repair'(k,p)$ applies $1+1/\beta$ measurements in expectation.
\end{claim}

We show how \cref{lemma:state-repair} follows from \cref{claim:state-repair-main}, and subsequently prove \cref{claim:state-repair-main}.
	
	Write $\DMatrix_2 = \sum_{p \in I} \ketbra{p}^{\RegO_1} \otimes \DMatrix_p^{\RegH,\RegW}$; note that $\Tr(\BProj{\MeasA,p} \DMatrix_p) = \Tr(\DMatrix_p)$ due to the post-selection in \cref{step:repair-postselect}. By the definition of $\MeasC' \colon \Hermitians{\RegH} \to \Hermitians{\RegH \otimes \RegO_2}$,
	\[ \MeasC'(\DMatrix_2) = \sum_{p \in I} \ketbra{p}^{\RegO_1} \otimes \left( \BProj{\bot} \DMatrix_p^{\RegH,\RegW} \BProj{\bot} \otimes \ketbra{\bot}^{\RegO_2} + \sum_{k=1}^{N} \BProj{\MeasB,k} \DMatrix_p^{\RegH,\RegW} \BProj{\MeasB,k} \otimes \ketbra{k}^{\RegO_2} \right) \enspace. \]
	
	By \cref{claim:stat-distance-hybrids}, $\Tr(\BProj{\bot} \DMatrix_2) \leq \Tr(\BProj{\bot} \DMatrix_1) + 2\sqrt{\delta} = 2\sqrt{\delta}$. For $p \in I$, write $\DMatrix_{p} = \sum_{i} q_i \ketbra{\psi_{i}}$ for unit states $\ket{\psi_{i}} \in \RegH \otimes \RegW$ ; note that $\ket{\psi_{i}} \in \image(\BProj{\MeasA,p})$. For all $i$ and any $k \in [N]$, we can define $\ket{\psi_{i,k}} \eqdef \BProj{\MeasB,k} \ket{\psi_{i}}/\norm{\BProj{\MeasB,k} \ket{\psi_{i}}}$ and apply \cref{claim:state-repair-main} with $\ket{\phi_{\MeasA}}$ set to $\ket{\psi_{i}}$ to obtain
	\begin{equation*}
		\Pr_{p' \gets \Measurement(\DMatrix_{i,p,k}^*)}[|p' - p| > 2\varepsilon] \leq (\delta + 1/T)/\norm{\BProj{\MeasB,k} \ket{\psi_{i}}}^2 \enspace,
	\end{equation*}
	where $\DMatrix_{i,p,k}^* \in \Hermitians{\RegH}$ is the state after applying $\Repair'(k,p)$ to $\ket{\psi_{i,k}}$.
	
	To conclude, we show that $\Expt_2$ is $(2\varepsilon,N(\delta + 1/T) + 2\sqrt{\delta})$-almost projective; the statement for $\RepairExpt$ will then follow by \cref{claim:stat-distance-hybrids}. Let $\RegO_3$ be a new ancilla register that will store the outcome of the \emph{second} application of $\Expt_2$, and consider the projector $\BProj{\mathsf{bad}} \eqdef \sum_{p} \sum_{p' \notin [p \pm 2\varepsilon]} \ketbra{p,p'}^{\RegO_1,\RegO_3}$ corresponding to the event that applying $\Expt_2$ twice yields outcomes $(p,p')$ more than $2\varepsilon$ apart. Since the outcome of $\Expt_2$ is determined by the outcome of $\Measurement$, we have by convexity
	\begin{equation*}
		\Tr(\BProj{\mathsf{bad}} \cdot \Measurement(\Expt_2(\DMatrix))) \leq N(\delta + 1/T) + 2\sqrt{\delta} \enspace .
	\end{equation*}
	Hence by \cref{claim:stat-distance-hybrids},
	\begin{equation*}
		\Tr(\BProj{\mathsf{bad}} \cdot \Measurement(\RepairExpt(\DMatrix))) \leq N(\delta + 1/T) + 4\sqrt{\delta} \enspace,
	\end{equation*}
	which completes the proof that $\RepairExpt$ is $(2\varepsilon, N(\delta + 1/T) + 4\sqrt{\delta})$-almost projective.
	
	By \cref{claim:stat-distance-hybrids} and \cref{claim:state-repair-main}, and law of total expectation, it holds that
	\begin{align*}
	    \Expectation[R] &\leq d(\DMatrix_1',\DMatrix_2') \cdot T + \Tr(\BProj{\bot} \DMatrix_2') \cdot T + \sum_{p \in I,k \in [N]} \Tr(\BProj{\MeasB,k} \DMatrix_{p}) (1 + \Tr(\DMatrix_{p})/\Tr(\BProj{\MeasB,k}  \DMatrix_{p})) \\
	    & \leq 2T\sqrt{\delta} + 2T\sqrt{\delta} + \sum_{p \in I,k \in [N]} (\Tr(\BProj{\MeasB,k} \DMatrix_{p}) + \Tr(\DMatrix_{p}))\\
	    & \leq N + 4T\sqrt{\delta}+ 1 \enspace,
	\end{align*}
	which concludes the proof, given \cref{claim:state-repair-main}.
\end{proof}

\begin{proof}[Proof of \cref{claim:state-repair-main}]

For this proof, we write $\MeasA,\MeasB$ for $\MeasA_{p},\MeasB_{k}$ and $\BProj{\MeasA},\BProj{\MeasB}$ for $\BProj{\MeasA,p}, \BProj{\MeasB,k}$ respectively.

Consider a decomposition of $\RegH \otimes \RegW$ into the Jordan subspaces $\{ \Subspace_{j} \}_j$ for projectors $\BProj{\MeasA}$ and $\BProj{\MeasB}$. Following our standard notation for Jordan subspaces, in the $j$-th Jordan subspace $\Subspace_{j}$, $\BProj{\MeasA}$ is a projection onto $\JorKetA{j}{1}$ and $\BProj{\MeasB}$ is a projection onto $\JorKetB{j}{1}$, and $p_j = |\JorBraKetAB{j}{1}|^2$. Recall that we write $\SProj[\Jor]{j}$ for the projection onto $\Subspace_{j}$.

Since $\ket{\phi_{\MeasA}} = \sum_j \alpha_j \JorKetA{j}{1}$ for some choice of $\{\alpha_j\}_j$, we can write $\ket{\phi_{\MeasB}}$ as
\begin{align*}
    \ket{\phi_{\MeasB}} = \frac{1}{\sqrt{\beta}}\sum_j\alpha_j \BProj{\MeasB} \JorKetA{j}{1} = \frac{1}{\sqrt{\beta}}\sum_j\alpha_j\sqrt{p_j}\JorKetB{j}{1}
\end{align*}

Let $\DMatrix' \in \Hermitians{\RegH \otimes \RegW}$ be the state immediately before ``Apply $\MeasUnitary$ to $(\RegH,\RegW)$, and discard the $\RegW$ registers.'' in $\Repair'(k,p)$, so that $\DMatrix^* = \Tr_{\RegW}(\MeasUnitary \DMatrix' \MeasUnitary^{\dagger})$. We first bound $\Tr(\Pi_{\MeasA} \DMatrix')$, i.e., the probability that $\Repair'(k,p)$ stops because $\MeasA \to 1$, by analyzing the distribution of measurement outcomes that result from applying a total of $2T+1$ alternating measurements $\MeasA,\MeasB,\MeasA,\MeasB,\ldots,\MeasA$. Note that the real $\Repair$ procedure terminates after obtaining a $1$ outcome for $\MeasA$; we consider the distribution of a fixed number of measurements for the purpose of analysis.

Let $I(b_1,b_2,\ldots,b_{2T+1})$ be the smallest $i$ such that $b_{2i+1} = 1$, or $T+1$ if there is no such $i$. Let $D$ be denote the following distribution:
\begin{enumerate}[noitemsep]
    \item \label[step]{step:sample-j} Sample $j$ with probability $\abs{\alpha_j}^2 p_j/\beta$
    \item Sample $(b_1,b_2,\ldots,b_{2T+1}) \gets \MWDistFull{p_j}{2T+1}$. 
    \item Output $I(b_1,b_2,\ldots,b_{2T+1})$.
\end{enumerate}
By~\cref{lemma:alt-meas-general-state}, the expected number of measurements applied by $\Repair$ is $2\Expectation[D]+1$, and
\[
\Tr(\Pi_{\MeasA} \DMatrix') = 1 - \Pr_{i \gets D}[i = T+1] \enspace.
\]

We now analyse the distribution $D$. Suppose that $j$ is sampled in \cref{step:sample-j}. The probability that $b_1 = 1$ occurs is then $p_j$. Then for each $i \in [T]$, the probability that $b_{2i+1} = 1$ given that $b_{2i-1} = 0$ is $2p_j(1-p_j)$. Hence conditioned on $j$ being sampled, $D$ is dominated by the random variable $D'$ which takes value $0$ with probability $p_j$ and is distributed as $\mathsf{Geo}(2p_j(1-p_j))$ with probability $1-p_j$, where $\mathsf{Geo}(q)$ is the geometric distribution with parameter $q$.

It follows that $\Expectation[D] \leq \frac{1}{\beta} \sum_{j} \abs{\alpha_j}^2 p_j(1-p_j) \Expectation[\mathsf{Geo}(2p_j(1-p_j))] = 1/(2\beta)$, and
\[
\Pr_{i \gets D}[i = T+1] \leq \frac{1}{\beta}\sum_{j} \abs{\alpha_j}^2 p_j (1-p_j) (1-2p_j(1-p_j))^{T} \leq \frac{1}{\beta T} \enspace,
\]
since $x(1-2x)^{T} \leq 1/T$ for all $x \in [0,1/4]$. This establishes that $\Tr(\BProj{\MeasA} \DMatrix') \geq 1-\frac{1}{\beta T}$.

To complete the proof of \cref{claim:state-repair-main}, we prove that applying $\Measurement$ to $\DMatrix^* = \Tr_{\RegW}(\MeasUnitary \DMatrix' \MeasUnitary^{\dagger})$ produces $p'$ within $2\varepsilon$ of $p$ with probability at least $1 - (\delta + 1/T)/\beta$.

Since $\MeasA$ and $\MeasB$ commute with $\SProj[\Jor]{j}$, $\Tr(\SProj[\Jor]{j} \DMatrix') = \norm{\SProj[\Jor]{j} \ket{\phi_{\MeasB}}}^2 = |\alpha_{j}|^2 p_j/\beta$. In particular, $\eta$ as defined in \cref{claim:good-states} is equal to $\beta$, and $\Tr(\SProj[\Jor]{j} \DMatrix') = 0$ for all $j$ with $p_j = 0$. By definition, the last measurement applied during $\Repair$ is $\MeasA$, and so since $\MeasA$ is projective, $\DMatrix' = \MeasA(\DMatrix') = \BProj{\MeasA} \DMatrix' \BProj{\MeasA} + (I-\BProj{\MeasA}) \DMatrix' (I-\BProj{\MeasA})$, which commutes with $\BProj{\MeasA}$. The statement then follows by \cref{claim:good-states}.
\end{proof}

\begin{claim}
    \label{claim:good-states}
    Suppose $\DMatrix' \in \Hermitians{\RegH \otimes \RegW}$ satisfies each of the following:
    \begin{itemize}[nolistsep]
        \item $\Tr(\Pi_{\MeasA} \DMatrix') = 1-\gamma$,
        \item $\DMatrix'$ commutes with $\Pi_{\MeasA}$, and
        \item $\Tr(\SProj[\Jor]{j} \DMatrix') = 0$ for all $j$ where $p_j = 0$.
    \end{itemize}
    Let 
    \[\eta \eqdef \frac{1}{\sum_{j, p_j > 0} \Tr(\SProj[\Jor]{j} \DMatrix')/p_j} \enspace,\]
    and $\DMatrix^* \eqdef \Tr_{\RegW}(\MeasUnitary \DMatrix' \MeasUnitary^{\dagger})$. Then
    \begin{equation*}
        \Pr_{p' \gets \Measurement(\DMatrix^*)}[|p' - p| > 2\varepsilon] \leq \delta/\eta + \gamma \enspace.
    \end{equation*}
\end{claim}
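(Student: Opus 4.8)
The plan is to establish two structural facts about the projectors $\BProj{\MeasA},\BProj{\MeasB}$ involved, then peel off a reduction to $\gamma=0$, then a reduction to pure states, leaving a clean case that is handled directly by the $(\varepsilon,\delta)$-almost-projectivity of $\Measurement$. Recall that $\BProj{\MeasB} = \BProj{k}\otimes\ketbra{0}^{\RegW}$, so $\image(\BProj{\MeasB}) \subseteq \RegH \otimes \ket{0}^{\RegW}$; hence each Jordan vector $\JorKetB{j}{1}$ equals $\ket{\chi_j}\otimes\ket{0}^{\RegW}$ for a unit $\ket{\chi_j}\in\RegH$, and the $\ket{\chi_j}$ are orthonormal (they come from orthogonal Jordan subspaces). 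Recall also that $\BProj{\MeasA} = \MeasUnitary^{\dagger}\big(\sum_{q\in[p\pm\varepsilon]}\BProj{\Measurement,q}\big)\MeasUnitary$. Combining these with $\JorKetA{j}{1} = \BProj{\MeasA}\JorKetB{j}{1}/\sqrt{p_j}$ (from \cref{eqn:jordan-vector-relations}, valid when $p_j>0$), one gets
\begin{equation*}
\MeasUnitary\JorKetA{j}{1} \;=\; \frac{1}{\sqrt{p_j}}\Big(\textstyle\sum_{q\in[p\pm\varepsilon]}\BProj{\Measurement,q}\Big)\MeasUnitary\big(\ket{\chi_j}\otimes\ket{0}\big)\enspace,
\end{equation*}
so in particular $\Measurement$ applied to $\ket{\chi_j}$ yields an outcome in $[p\pm\varepsilon]$ with probability exactly $p_j$.

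First I would reduce to $\gamma=0$. Since $\DMatrix'$ commutes with the projector $\BProj{\MeasA}$, write $\DMatrix' = (1-\gamma)\sigma + \gamma\sigma_0$ with $\sigma \coloneqq \BProj{\MeasA}\DMatrix'\BProj{\MeasA}/(1-\gamma)$ supported on $\image(\BProj{\MeasA})$ and $\sigma_0$ on $\ker(\BProj{\MeasA})$. By linearity of $\DMatrix\mapsto\Tr_{\RegW}(\MeasUnitary\DMatrix\MeasUnitary^{\dagger})$ and of $\Measurement$, $\Pr_{p'\gets\Measurement(\DMatrix^*)}[|p'-p|>2\varepsilon] \le (1-\gamma)\Pr_{p'\gets\Measurement(\sigma^*)}[|p'-p|>2\varepsilon]+\gamma$, where $\sigma^* \coloneqq \Tr_{\RegW}(\MeasUnitary\sigma\MeasUnitary^{\dagger})$. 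The state $\sigma$ satisfies the claim's hypotheses with parameter $0$ in place of $\gamma$, and since $\SProj[\Jor]{j}$ commutes with $\BProj{\MeasA}$ with $\BProj{\MeasA}\SProj[\Jor]{j}\BProj{\MeasA}=\JorProjA{j}{1}\le\SProj[\Jor]{j}$, one computes $\Tr(\SProj[\Jor]{j}\sigma) = \Tr(\JorProjA{j}{1}\DMatrix')/(1-\gamma) \le \Tr(\SProj[\Jor]{j}\DMatrix')/(1-\gamma)$, so the analogous quantity $\eta_\sigma$ for $\sigma$ is at least $(1-\gamma)\eta$. Thus it suffices to prove the claim for states supported on $\image(\BProj{\MeasA})$ with $\gamma=0$; the resulting bound $\delta/\eta_\sigma\le\delta/((1-\gamma)\eta)$ then yields $\delta/\eta+\gamma$ as required.

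Next I would reduce to pure states and run the core argument. Write a $\gamma=0$ state as $\sigma=\sum_i w_i\ketbra{\psi_i}$ with $\ket{\psi_i}=\sum_{j:p_j>0}\alpha^{(i)}_j\JorKetA{j}{1}$ (the restriction to $p_j>0$ follows from the hypothesis $\Tr(\SProj[\Jor]{j}\DMatrix')=0$ for $p_j=0$). Since $\Pr_{p'\gets\Measurement(\sigma^*)}[|p'-p|>2\varepsilon]$ is the $w_i$-average of the corresponding probabilities for the $\ket{\psi_i}$, and $1/\eta_\sigma = \sum_{j:p_j>0}\Tr(\SProj[\Jor]{j}\sigma)/p_j = \sum_i w_i\big(\sum_{j:p_j>0}|\alpha^{(i)}_j|^2/p_j\big)$, it is enough to treat a single pure $\ket{\psi}=\sum_j\alpha_j\JorKetA{j}{1}$ with $\eta=\big(\sum_{j:p_j>0}|\alpha_j|^2/p_j\big)^{-1}$. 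Define the unit vector $\ket{\hat\chi}\coloneqq\sqrt{\eta}\sum_j\tfrac{\alpha_j}{\sqrt{p_j}}\ket{\chi_j}\in\RegH$ (it is a unit vector because the $\ket{\chi_j}$ are orthonormal and $\sum_j|\alpha_j|^2/p_j=1/\eta$). Using the displayed identity above, $\MeasUnitary\ket{\psi}=\tfrac{1}{\sqrt{\eta}}\big(\sum_{q\in[p\pm\varepsilon]}\BProj{\Measurement,q}\big)\MeasUnitary(\ket{\hat\chi}\otimes\ket{0})$, hence (using $\Tr_{\RegW}(PXP)=\Tr_{\RegW}(PX)$ for the $\RegW$-projector $P=\sum_{q\in[p\pm\varepsilon]}\BProj{\Measurement,q}$, together with $M_q\DMatrix M_q^{\dagger}=\Tr_{\RegW}(\BProj{\Measurement,q}\MeasUnitary(\DMatrix\otimes\ketbra{0})\MeasUnitary^{\dagger})$)
\begin{equation*}
\DMatrix^* \;=\; \Tr_{\RegW}\big(\MeasUnitary\ketbra{\psi}\MeasUnitary^{\dagger}\big) \;=\; \frac{1}{\eta}\textstyle\sum_{q\in[p\pm\varepsilon]}M_q\ketbra{\hat\chi}M_q^{\dagger}\enspace.
\end{equation*}
The right-hand side is exactly the normalized post-measurement state obtained by applying $\Measurement$ to $\ket{\hat\chi}$ and conditioning on the outcome landing in $[p\pm\varepsilon]$ — an event of probability precisely $\eta$. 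So applying $\Measurement$ to $\DMatrix^*$ is the \emph{second} application in ``apply $\Measurement$ twice to $\ket{\hat\chi}$'', conditioned on the first outcome being in $[p\pm\varepsilon]$. By $(\varepsilon,\delta)$-almost-projectivity, the joint probability that the first outcome is in $[p\pm\varepsilon]$ and the two outcomes differ by more than $\varepsilon$ is at most $\delta$, so the conditional probability is at most $\delta/\eta$; and a first outcome in $[p\pm\varepsilon]$ together with outcomes within $\varepsilon$ forces the second into $[p\pm2\varepsilon]$. Hence $\Pr_{p'\gets\Measurement(\DMatrix^*)}[|p'-p|>2\varepsilon]\le\delta/\eta$, finishing the pure-state case and therefore the claim.

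The main obstacle is the passage summarized in the first paragraph: relating $\DMatrix^* = \Tr_{\RegW}(\MeasUnitary\DMatrix'\MeasUnitary^{\dagger})$ — which discards $\RegW$ and then lets the next application of $\Measurement$ re-initialize a fresh $\ket{0}^{\RegW}$ — to an honest conditional post-measurement state on $\RegH$. A naive trace-distance argument fails here (as the discussion preceding \cref{lemma:state-repair} illustrates for $\ApproxEig$ versus $\TestRep$). What rescues the situation is that $\image(\BProj{\MeasB})\subseteq\RegH\otimes\ket{0}^{\RegW}$, which lets us pull the relevant Jordan vectors back through $\MeasUnitary$ to genuine states $\ket{\chi_j}$ of $\RegH$ alone; this is what makes $\DMatrix^*$ expressible as the conditional post-measurement state of a single source state $\ket{\hat\chi}$, so that $(\varepsilon,\delta)$-almost-projectivity of $\Measurement$ can be invoked. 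The remaining work — the two reductions and the monotonicity bookkeeping for $\eta$ via Jordan's lemma — is routine.
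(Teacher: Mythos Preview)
Your proof is correct and follows essentially the same core argument as the paper: both establish that $\DMatrix^*$ is the post-measurement state of some source state on $\RegH$ (your $\ket{\hat\chi}$; the paper's $\DMatrixB''$) after applying $\Measurement$ and conditioning on the outcome landing in $[p\pm\varepsilon]$, an event of probability $\eta$, and then invoke $(\varepsilon,\delta)$-almost-projectivity. Your explicit $\ket{\hat\chi}$ is exactly the pure-state version of the paper's construction via the operator $C=\sum_{j,p_j>0}\tfrac{1}{\sqrt{p_j}}\JorKetB{j}{1}\JorBraA{j}{1}$ (indeed $\ket{\hat\chi}\otimes\ket{0}=\sqrt{\eta}\,C\ket{\psi}$).

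The one genuine stylistic difference is the handling of $\gamma$: the paper defines a flip unitary $U$ on Jordan subspaces to rotate the $\ker(\BProj{\MeasA})$-part of $\DMatrix'$ into $\image(\BProj{\MeasA})$, producing a state $\DMatrixB$ with $\Tr(\BProj{\MeasA}\DMatrixB)=1$ and $d(\DMatrixB,\DMatrix')\le\gamma$, and then appeals to contractivity of trace distance. Your convex split $\DMatrix'=(1-\gamma)\sigma+\gamma\sigma_0$ together with the monotonicity bookkeeping $\eta_\sigma\ge(1-\gamma)\eta$ is a bit more elementary (no auxiliary unitary) and yields the same bound; either approach works.
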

\begin{proof}
    Since $\DMatrix'$ commutes with $\Pi_{\MeasA}$, we can write $\DMatrix' = \sum_{i} q_{i} \ketbra{\phi_i}$, where the $\ket{\phi_i}$ are eigenstates of $\Pi_{\MeasA}$. Consider the unitary $U$ on $\RegH \otimes \RegW$ that maps $\JorKetA{j}{b}$ to $\JorKetA{j}{1-b}$ for $b \in \Bits$ for each $2$-dimensional Jordan subspace $\Subspace_j$, and acts as identity on each $1$-dimensional subspace. Formally,
    \[ U \eqdef \sum_{j,p_j \notin \{0,1\}} (\JorKetA{j}{1}\JorBraA{j}{0} + \JorKetA{j}{0}\JorBraA{j}{1}) + \sum_{j, p_j = 1} \JorKetBraA{j}{1} + \sum_{j, p_j = 0} \JorKetBraA{j}{0} \enspace.\]
    In particular, if $\ket{\phi_i} = \sum_{j, p_j > 0} \zeta_j \JorKetA{j}{0}$, then $U \ket{\phi_i} = \sum_{j, p_j > 0} \zeta_j \JorKetA{j}{1} \in \image(\Pi_{\MeasA})$. Moreover, $\SProj[\Jor]{j}$ commutes with $U$ for all $j$.
    
    Let $\DMatrixB \eqdef \Pi_{\MeasA} \DMatrix' + U (\Id-\Pi_{\MeasA})\DMatrix' U^{\dagger}$. $\DMatrixB$ does not appear during the procedure; it is defined for the purpose of analysis. Intuitively, $\DMatrixB$ is the result of rotating, within each Jordan subspace, the part of $\DMatrix'$ in $\image(\Id-\BProj{\MeasA})$ into $\image(\BProj{\MeasA})$. By unitary invariance of the trace,
    \begin{equation*}
        \Tr(\DMatrixB) = \Tr(\Pi_{\MeasA} \DMatrix') + \Tr(U (\Id-\Pi_{\MeasA})\DMatrix' U^{\dagger}) = \Tr(\DMatrix') = 1 \enspace.
    \end{equation*}
    For all $j$, we have $\Tr(\SProj[\Jor]{j} \DMatrixB) = \Tr(\SProj[\Jor]{j} \DMatrix')$ since $\SProj[\Jor]{j}$ commutes with both $U$ and $\BProj{\MeasA}$. The trace distance between $\DMatrixB$ and $\DMatrix'$ is at most $\Tr((\Id - \BProj{\MeasA})\DMatrix') = \gamma$. Finally, by definition of $U$, $\Tr(\BProj{\MeasA} \DMatrixB) = 1$. 
    
    We will now show that the outcome of $\Measurement(\Tr_{\RegW}(\MeasUnitary \DMatrixB \MeasUnitary^{\dagger}))$ is in the range $p \pm 2\varepsilon$ with probability $\delta/\eta$, which will complete the proof by contractivity of the trace distance. Define the linear operator $C \eqdef \sum_{j,p_j > 0} \frac{1}{\sqrt{p_j}} \JorKetB{j}{1}\JorBraA{j}{1}$. Notice that $\BProj{\MeasA} C$ is the projection onto $\image(\BProj{\MeasA}) \cap (\bigoplus_{j, p_j > 0} \Subspace_{j})$ since
    \[
        \BProj{\MeasA} C = \sum_{j, p_j > 0} \frac{1}{\sqrt{p_j}} \JorProjA{j}{1} \JorKetB{j}{1}\JorBraA{j}{1} = \sum_{j, p_j > 0} \JorProjA{j}{1} \enspace.
    \]
    
    Let $\DMatrixB' \eqdef C \DMatrixB C^{\dagger}/\Tr(C \DMatrixB C^{\dagger})$. We have that
    \[ \Tr(C \DMatrixB C^{\dagger}) = \Tr(C^{\dagger} C \DMatrixB) = \sum_{j,p_j > 0} \frac{1}{p_j} \Tr(\JorKetBraA{j}{1} \DMatrixB) = \sum_{j, p_j > 0} \frac{1}{p_j} \Tr(\SProj[\Jor]{j} \DMatrix') = 1/\eta,\]
    and $\Tr(\Pi_{\MeasB} \DMatrixB') = 1$. By the definition of $\Pi_{\MeasB}$, this implies that $\DMatrixB' = \DMatrixB'' \otimes \ketbra{0}^{\RegW}$ for some $\DMatrixB'' \in \Hermitians{\RegH}$. We also have that $\Pi_{\MeasA} \DMatrixB' \Pi_{\MeasA} = \eta \BProj{\MeasA} C \DMatrixB C^{\dagger} \BProj{\MeasA} = \eta \DMatrixB$, where the second equality follows from the fact that $\DMatrixB \in \Hermitians{\image(\BProj{\MeasA}) \cap (\bigoplus_{j, p_j > 0} \Subspace_{j})}$ by construction. 
    
    Recall that (1) applying $\MeasUnitary$ to a state of the form $\DMatrix'' \otimes \ketbra{0}^{\RegW}$, then applying the projective measurement $(\BProj{\Measurement,q})_{q \in I}$ on $\RegW$ and tracing out $\RegW$ is equivalent to applying the $(\varepsilon,\delta)$-almost-projective measurement $\Measurement = (M_q)_{q \in I}$ to $\DMatrix''$ and (2) $\BProj{\MeasA} = \sum_{q \in [p \pm \varepsilon]} \MeasUnitary^\dagger \BProj{\Measurement,q} \MeasUnitary$. So we have:
    \begin{align*}
        \DMatrix^* &= \Tr_{\RegW}(\MeasUnitary \DMatrixB \MeasUnitary^{\dagger}) = \frac{1}{\eta} \Tr_{\RegW}(\MeasUnitary \BProj{\MeasA} \DMatrixB' \BProj{\MeasA} \MeasUnitary^{\dagger}) \\
        &= \frac{1}{\eta} \Tr_{\RegW}\left(\sum_{q,q' \in [p \pm \varepsilon]} \BProj{\Measurement,q} \MeasUnitary (\DMatrixB'' \otimes \ketbra{0}^{\RegW}) \MeasUnitary^{\dagger} \BProj{\Measurement,q'}\right) \\
        &= \frac{1}{\eta} \sum_{q \in [p \pm \varepsilon]} \Tr_{\RegW}\left(\BProj{\Measurement,q} \MeasUnitary (\DMatrixB'' \otimes \ketbra{0}^{\RegW}) \MeasUnitary^{\dagger}\right) \\
        &= \frac{1}{\eta} \sum_{q \in [p \pm \varepsilon]} M_{q} \DMatrixB'' M_{q}^{\dagger} = \frac{\sum_{q \in [p \pm \varepsilon]} M_{q} \DMatrixB'' M_{q}^{\dagger}}{\Tr(\sum_{q \in [p \pm \varepsilon]} M_{q} \DMatrixB'' M_{q}^{\dagger})} \enspace.
    \end{align*} 
    That is, $\DMatrix^*$ is the state after applying $\Measurement$ to $\DMatrixB''$ conditioned on obtaining an outcome in the range $p \pm \varepsilon$, which occurs with probability $\eta$. But then by $(\varepsilon,\delta)$-almost projectivity, the outcome of $\Measurement(\DMatrix^*)$ is in the range $p \pm 2\varepsilon$ with probability $1-\delta/\eta$.
\end{proof}

\subsection{Proof of~\cref{theorem:repeated-games}}
\label{sec:proof-of-main-theorem}

We now prove~\cref{theorem:repeated-games}. For $r \in \RSet$, define $\Meas{\Pred,r} \eqdef \BMeas{\BProj{\Pred,r}^{\RegZ,\RegI}}$ where \[\BProj{\Pred,r}^{\RegZ,\RegI} \eqdef U_{S,r}^{\dagger} \sum_{z,\Pred(r,z)=1} \ketbra{z}^{\RegZ} U_{S,r} ~,\]
where $U_{S,r}$ is a unitary implementation of the action of $S$ on message $r$.

We set $\ApproxEig$ and $\RepairProb$ as follows:
\begin{itemize}[noitemsep]
    \item Let $\ApproxEig_{\Game,\varepsilon,\delta}^{\StrategyMap}$ be a CPTP map from $(\RegResp,\RegInt)$ to $(\RegResp,\RegInt)$ as in \cref{lemma:approx-eig}.
    \item Let $\RepairProb_{\Game,\varepsilon,\delta,T,r}^{\StrategyMap} \eqdef \Repair_{T}^{\ApproxEig_{\Game,\varepsilon,\delta}^{\StrategyMap},\Measurement_{\Pred,r}}$ be a CPTP map from $(\RegResp,\RegInt)$ to $(\RegResp,\RegInt)$ as in \cref{lemma:state-repair} (that is, with $\RegH = (\RegResp,\RegInt)$).
\end{itemize}

The algorithm $A$ operates on registers $(\RegResp,\RegInt)$ and works as follows.
\begin{center}
\begin{minipage}{0.9\textwidth}
    $A^{\StrategyMap}_{\Game,n,\RWLoss}$:
    \begin{enumerate}[nolistsep]
        \item Let $\varepsilon \eqdef \RWLoss/(2n+2)$, $\delta \eqdef \RWLoss^2/cn^2$ for some universal constant $c$.
        \item \label[step]{step:repeated-main-loop} (Main loop.) For $i = 1,\ldots,n$,
        \begin{enumerate}[nolistsep,ref=\arabic{enumi}(\alph*)]
            \item Measure $p_i \gets \ApproxEig_{\Game,\varepsilon,\delta}^{\StrategyMap}$ on registers $(\RegResp,\RegInt)$.
            \item \label[step]{step:strat-apply-u} Receive $r_i \in \RSet$ from the referee and apply $U_{\StrategyMap,r_i}$ to $(\RegResp,\RegInt)$.
            \item \label[step]{step:strat-send-state} Send the register $\RegZ$ to the referee.
            \item \label[step]{step:strat-recv-state} Receive the (partially measured) register $\RegZ$ from the referee, along with the outcome $b_i \in \Bits$.
            \item \label[step]{step:ext-reverse-u} Apply $U_{\StrategyMap,r_i}^{\dagger}$ to $(\RegResp,\RegInt)$.
            \item Apply $\RepairProb_{\Game,\varepsilon,\delta,T,r_i}^{\StrategyMap}(p,b)$ to $(\RegResp,\RegInt)$ with $T \coloneqq \lceil 1/\sqrt{\delta} \rceil$.
        \end{enumerate}
    \end{enumerate}
\end{minipage}
\end{center}

\begin{claim}
    \label{claim:p_i}
    For each $i \in [n]$, $p_{i+1} \geq p_{i} - 2\varepsilon$ with probability $1-O(\sqrt{\delta})$.
\end{claim}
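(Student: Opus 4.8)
The plan is to recognize that the block of operations that $A$ performs between measuring $p_i$ (at the top of iteration $i$ of the main loop) and measuring $p_{i+1}$ (at the top of iteration $i+1$) is \emph{exactly} one execution of the procedure $\RepairExpt$ from \cref{lemma:state-repair}, followed by one additional application of the measurement $\ApproxEig_{\Game,\varepsilon,\delta}^{\StrategyMap}$. Concretely, I would instantiate $\RepairExpt$ with $\Measurement \eqdef \ApproxEig_{\Game,\varepsilon,\delta}^{\StrategyMap}$ (which is $(\varepsilon,\delta)$-almost projective by \cref{item:aeig-approx-proj}), with the $2$-outcome projective measurement $\MeasC \eqdef \Meas{\Pred,r_i}$ on $(\RegResp,\RegInt)$, and with the parameter $T=\lceil 1/\sqrt\delta\rceil$ used by $A$; once this identification is in place, the claim follows immediately from the almost-projectivity guarantee of \cref{lemma:state-repair}.

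First I would verify the identification step by step. Step~1 of $\RepairExpt$ (measure with $\Measurement$) is the application of $\ApproxEig_{\Game,\varepsilon,\delta}^{\StrategyMap}$ that produces $p_i$. For Step~2 (damage with $\MeasC$), I would argue that \cref{step:strat-apply-u,step:strat-send-state,step:strat-recv-state,step:ext-reverse-u} together implement precisely the binary projective measurement $\Meas{\Pred,r_i}$: applying $U_{\StrategyMap,r_i}$, then having the referee coherently compute $\Pred(r_i,\RegZ)$ into a fresh ancilla, measure it (obtaining $b_i$), and uncompute — whose net effect on $\RegZ$ is the binary projective measurement $\BMeas{\sum_{z,\Pred(r_i,z)=1}\ketbra{z}^{\RegZ}}$ — then applying $U_{\StrategyMap,r_i}^{\dagger}$; conjugating by $U_{\StrategyMap,r_i}$ turns this into $\Meas{\Pred,r_i}=\BMeas{\BProj{\Pred,r_i}}$, with $b_i$ playing the role of the outcome $k$. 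Step~3 of $\RepairExpt$ (repair) is the final substep of iteration $i$, which applies $\RepairProb_{\Game,\varepsilon,\delta,T,r_i}^{\StrategyMap}$ with estimate $p_i$ and outcome $b_i$, i.e.\ $\Repair_{T}^{\ApproxEig_{\Game,\varepsilon,\delta}^{\StrategyMap},\Meas{\Pred,r_i}}(b_i,p_i)$. Finally $\RepairExpt$ outputs $p_i$, and the very next operation $A$ performs is the $\ApproxEig_{\Game,\varepsilon,\delta}^{\StrategyMap}$ that produces $p_{i+1}$.

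With the identification in hand I would make the key structural observation: the output of $\RepairExpt$ is just the outcome of the measurement $\Measurement$ performed in its Step~1, and the damage/repair of Steps~2--3 occur only afterwards. Hence running $\RepairExpt$ twice in succession produces a pair of outcomes $(p,p'')$ distributed identically to $(p_i,p_{i+1})$ (for each fixed $r_i$, and each state entering iteration $i$), since the second run's output $p''$ is obtained by applying $\Measurement$ to the post-$\RepairExpt$ state, exactly like $p_{i+1}$. Therefore, by \cref{lemma:state-repair} ($\RepairExpt$ is $(2\varepsilon,\eta)$-almost projective with $\eta = N(\delta+1/T)+4\sqrt\delta = 2(\delta+1/T)+4\sqrt\delta$), we get $\Pr[\abs{p_i-p_{i+1}}>2\varepsilon]\le\eta$. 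Since $T=\lceil 1/\sqrt\delta\rceil\ge 1/\sqrt\delta$ and $\delta\le\sqrt\delta$, this gives $\eta\le 8\sqrt\delta=O(\sqrt\delta)$, so $p_{i+1}\ge p_i-2\varepsilon$ with probability $1-O(\sqrt\delta)$; averaging over the referee's choice of $r_i\gets\RSet$ (the bound holds for each fixed $r_i$) completes the claim.

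I do not expect a serious obstacle — the argument is essentially bookkeeping — but the one point requiring care is the structural observation that $\RepairExpt$'s output equals its \emph{first} measurement outcome, which is precisely what lets the ``twice-in-a-row'' almost-projectivity statement of \cref{lemma:state-repair} be applied to the asymmetric pair $(p_i,p_{i+1})$, where only a repair step (and no second damage step) separates the two estimates. A secondary detail is confirming that the prover--referee round, conjugated by $U_{\StrategyMap,r_i}$, is literally the projective measurement $\Meas{\Pred,r_i}$ with $N=2$ outcomes, so that the $N(\delta+1/T)$ term in \cref{lemma:state-repair} is only $2(\delta+1/T)$.
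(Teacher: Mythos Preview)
Your proposal is correct and follows essentially the same approach as the paper's proof: identify the main-loop iteration with a single invocation of $\RepairExpt$ (with $\Measurement = \ApproxEig_{\Game,\varepsilon,\delta}^{\StrategyMap}$, $\MeasC = \Meas{\Pred,r_i}$, $N=2$, $T=\lceil 1/\sqrt\delta\rceil$) and invoke the $(2\varepsilon,O(\sqrt\delta))$-almost-projectivity conclusion of \cref{lemma:state-repair}. Your explicit structural observation---that $\RepairExpt$'s output is its \emph{first} $\Measurement$-outcome, so the almost-projectivity bound applies to $(p_i,p_{i+1})$ even though the second iteration uses a different challenge $r_{i+1}$---is exactly the point the paper handles in a footnote, and your numerical bound $\eta\le 8\sqrt\delta$ matches.
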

\begin{proof}
    \cref{step:strat-apply-u,step:strat-send-state,step:strat-recv-state,step:ext-reverse-u} are equivalent to applying $\Meas{\Pred,r_i}$ to $(\RegZ,\RegI)$. Since $\ApproxEig_{\Game,\varepsilon,\delta}^{\StrategyMap}$ is $(\varepsilon,\delta)$-almost projective (\cref{lemma:approx-eig}, \cref{item:aeig-approx-proj}), the claim follows from applying \cref{lemma:state-repair} with $\Measurement = \ApproxEig_{\Game,\varepsilon,\delta}^{\StrategyMap}$, $\RegH = (\RegResp,\RegInt)$, $\MeasC = \Meas{\Pred,r_i}$, $N = 2$, $T = \lceil 1/\sqrt{\delta} \rceil$ and observing that the entire ``Main loop'' amounts to a single invocation of $\RepairExpt$, and is therefore a $(2\varepsilon,O(\sqrt{\delta}))$-almost-projective measurement.\footnote{On the $(i+1)$-th invocation of the main loop the challenge $r_{i+1}$ will generally be different than the challenge $r_i$ used in the $i$-th invocation; however, almost projectivity still applies since $p_{i+1}$ is clearly independent of $r_{i+1}$.}
\end{proof}
Let $\DMatrix_{i}$ be the state on $(\RegResp,\RegInt)$ at the beginning of the $i$-th iteration.
\begin{claim}
    For all $i \in [n]$, $\Value{\Game}(\StrategyMap,\DMatrix_{i}) \geq \Value{\Game}(\StrategyMap,\DMatrix) - 2i \cdot \varepsilon - O(i \cdot \sqrt{\delta})$.
\end{claim}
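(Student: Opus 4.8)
The plan is to track the sequence of estimates $p_1,\dots,p_n$ produced by the $\ApproxEig$ calls in the main loop, using them as a proxy for the values $\Value{\Game}(\StrategyMap,\DMatrix_i)$. The starting point is that $\ApproxEig$ is an unbiased estimator of the game value: by \cref{lemma:approx-eig} (\cref{item:aeig-expectation}), $\Expectation_{p \gets \ApproxEig^{\StrategyMap}(\DMatrix)}[p] = \Value{\Game}(\StrategyMap,\DMatrix)$ for every state $\DMatrix$. Since $\DMatrix_i$ is exactly the state to which the $i$-th iteration of the main loop applies $\ApproxEig$, and since $\Value{\Game}(\StrategyMap,\cdot)$ is affine in the state, averaging over all randomness generated before the $i$-th $\ApproxEig$ call — the challenges $r_1,\dots,r_{i-1}$, the feedback bits $b_1,\dots,b_{i-1}$, the estimates $p_1,\dots,p_{i-1}$, and all intervening measurement outcomes — yields $\Expectation[p_i] = \Value{\Game}(\StrategyMap,\DMatrix_i)$, where $\DMatrix_i$ denotes the unconditioned average state appearing in the claim.

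Given this, it remains to control how fast $\Expectation[p_i]$ can decay. First I would invoke \cref{claim:p_i}, which states that $p_{i+1} \ge p_i - 2\varepsilon$ except with probability $O(\sqrt{\delta})$; this is precisely the assertion that one pass of the main loop is a $(2\varepsilon,O(\sqrt{\delta}))$-almost projective measurement, obtained by applying \cref{lemma:state-repair} with $N = 2$ and $T = \lceil 1/\sqrt{\delta}\rceil$. Because the outputs of $\ApproxEig$ lie in a fixed bounded interval, this high-probability inequality upgrades to the unconditional bound $\Expectation[p_{i+1}] \ge \Expectation[p_i] - 2\varepsilon - O(\sqrt{\delta})$. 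Telescoping from $i=1$, and using $\DMatrix_1 = \DMatrix$ so that $\Expectation[p_1] = \Value{\Game}(\StrategyMap,\DMatrix)$, we get
\[
\Value{\Game}(\StrategyMap,\DMatrix_i) = \Expectation[p_i] \ge \Value{\Game}(\StrategyMap,\DMatrix) - 2(i-1)\varepsilon - O\big((i-1)\sqrt{\delta}\big) \ge \Value{\Game}(\StrategyMap,\DMatrix) - 2i\varepsilon - O(i\sqrt{\delta}),
\]
which is the desired bound.

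I do not expect a substantial obstacle here, since both ingredients are already established. The one point that requires care is the averaging in the first paragraph: one must check that the identity $\Value{\Game}(\StrategyMap,\DMatrix_i) = \Expectation[p_i]$ holds with the expectation taken over \emph{all} prior randomness — in particular the referee's challenges and the bits $b_j$ that are fed into the $\RepairProb$ calls — so that, combined with the unconditional guarantee of \cref{claim:p_i}, the telescoping inequality carries no hidden conditioning. This is exactly why $A$ is set up so that each pass of the main loop is a single invocation of the almost-projective measurement $\RepairExpt$, rather than a history-dependent operation.
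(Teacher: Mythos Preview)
Your proposal is correct and follows essentially the same route as the paper: both arguments use \cref{claim:p_i} to control the drift of the $p_i$'s and then convert to a statement about $\Value{\Game}(\StrategyMap,\DMatrix_i)$ via properties of $\ApproxEig$ from \cref{lemma:approx-eig}. The only minor difference is that you take expectations step by step and invoke \cref{item:aeig-expectation} (unbiasedness, giving $\Value{\Game}(\StrategyMap,\DMatrix_i)=\Expectation[p_i]$ directly), whereas the paper first chains the high-probability inequalities via a union bound and then appeals to \cref{item:aeig-good-states}; your route is arguably the more transparent of the two.
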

\begin{proof}
    By \cref{claim:p_i}, with probability $1 - O(i \cdot \sqrt{\delta})$ it holds that \[ p_{i} \geq p_{i-1} - 2\varepsilon \geq p_{i-2} - 4\varepsilon \geq \cdots \geq p_{1} - 2(i-1)\varepsilon~. \] Then by \cref{lemma:approx-eig}, \cref{item:aeig-good-states}, \[\Value{\Game}(\StrategyMap,\DMatrix_{i}) \geq \Expectation[p_1] - 2i \cdot \varepsilon - O(i \cdot \sqrt{\delta})~.\]
    Finally, by \cref{lemma:approx-eig}, \cref{item:aeig-expectation}, $\Expectation[p_1] = \Value{\Game}(\StrategyMap,\DMatrix)$.
\end{proof}
Now since $\Value{\Game}(\StrategyMap,\ApproxEig(\DMatrixB)) \geq \Value{\Game}(\StrategyMap,\DMatrixB) - \delta$ for all states $\DMatrixB$ by \cref{lemma:approx-eig}, \cref{item:aeig-pm-state}, we have that $\Pr[b_{i} = 1] \geq \Value{\Game}(\StrategyMap,\DMatrix) - 2i \cdot \varepsilon - O(i \cdot \sqrt{\delta})$. Hence 
\begin{align*}
    \Value{\Game}^{n}(A^{\StrategyMap}_{\Game,n,\RWLoss},\DMatrix) &= \Expectation[\sum_{i \in [n]} b_i] = \sum_{i \in [n]} \Pr[b_{i} = 1]\\
    &\geq n \cdot (\Value{\Game}(\StrategyMap,\DMatrix) - (n+1)\varepsilon - O(n \cdot \sqrt{\delta}))\\
    &\geq n \cdot (\Value{\Game}(\StrategyMap,\DMatrix) - \RWLoss)~, 
\end{align*} 
which completes the proof. The expected running time of this procedure is $\tilde{O}(|\Pred| \cdot n/\RWLoss)$.

\doclearpage
\section{A quantum rewinding lemma}
\label{sec:rewinding}

We use \cref{theorem:repeated-games} to prove a ``quantum forking lemma'' for collapsing protocols. We denote by $(\Transcript,\DMatrix) \gets \Interact{\Malicious{\ARGProver}}{\ARGVerifier}_{\NumRounds-1}$ the partial transcript $\Transcript$ and intermediate state $\DMatrix$ of the malicious prover $\Malicious{\ARGProver}$ after running $\NumRounds-1$ rounds of the interaction between $\Malicious{\ARGProver}$ and $\ARGVerifier$. Recall that $\RSet_m$ denotes the set of random coins for round $m$ of the protocol.

\newcommand{\Extract}{\mathsf{Fork}}
\begin{theorem}
    \label{theorem:quantum-forking}
    Let $(\ARGProver,\ARGVerifier)$ be an $m$-round collapsing protocol. There exists an algorithm $\Extract$ running in expected polynomial time with black-box access to an adversary such that the following holds. Let $\Malicious{\ARGProver}$ be an efficient quantum adversary such that $\Pr[\Interact{\Malicious{\ARGProver}}{\ARGVerifier} \to 1] \geq \SuccProb$. Then for any $n \in \Naturals$, $\RWLoss \in [0,1]$,
    
    \begin{equation*}
        \Expectation\left[
            ~|W|~
        \middle\vert
            \begin{array}{r}
                (\Transcript,\DMatrix) \gets \Interact{\Malicious{\ARGProver}}{\ARGVerifier}_{\NumRounds-1} \\
                \vec{r} = (r_1,\ldots,r_n) \gets (\RSet_{\NumRounds})^n \\
                W \gets \Extract^{\Malicious{\ARGProver}}(1^{\secp},1^{1/\RWLoss},\Transcript,\vec{r},\DMatrix)
            \end{array}
        \right] \geq n(\SuccProb - \RWLoss) - n^2/|\RSet_m| - \negl(\secp) \enspace.
    \end{equation*}
    
    Moreover, with probability $1$, we have $\{(s_i,z_i)\}_{i} \gets \Extract^{\Malicious{\ARGProver}}(1^{\secp},1^{1/\RWLoss},\Transcript,\vec{r},\DMatrix)$ where:
    \begin{itemize}[noitemsep]
        \item $\ARGVerifier(\Transcript,s_i,z_i) = 1$ holds for all $i \in [k]$,
        \item all $s_i$ are distinct, and
        \item for each $i$ there exists $j \in [n]$ such that $s_i = r_k$.
    \end{itemize}
    $\Extract$ runs in expected time $\poly(\secp) \cdot \tilde{O}(n/\RWLoss)$.
\end{theorem}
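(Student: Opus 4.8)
The plan is to reduce to \cref{theorem:repeated-games} by viewing the last round of the collapsing protocol as a single-player game. First I would fix the partial transcript $\Transcript$ and intermediate state $\DMatrix$ produced by $\Interact{\Malicious{\ARGProver}}{\ARGVerifier}_{\NumRounds-1}$, and define the game $\Game = (\RSet_{\NumRounds}, \RegResp_{\NumRounds}, \Pred)$ with $\Pred(r,z) \eqdef \ARGVerifier(\Transcript, r, z)$. The adversary's last-round unitary $\UPrvRound{\NumRounds}$ is a strategy $\StrategyMap$ for $\Game$ with initial state $\DMatrix$, and by construction $\Value{\Game}(\StrategyMap, \DMatrix)$ equals the probability that $\Malicious{\ARGProver}$, continuing from $(\Transcript, \DMatrix)$, produces an accepting last-round message on a uniform challenge; averaging over $(\Transcript, \DMatrix) \gets \Interact{\Malicious{\ARGProver}}{\ARGVerifier}_{\NumRounds-1}$, this expectation is exactly $\Pr[\Interact{\Malicious{\ARGProver}}{\ARGVerifier} \to 1] \geq \SuccProb$, since the one-round game value coincides in the classical and quantum experiments.

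Next I would instantiate $\Extract$ as follows: run the algorithm $A_{\Game,n,\RWLoss}$ of \cref{theorem:repeated-games}, feeding it the input challenges $r_1, \ldots, r_n$ in place of the referee's samples (which are uniform i.i.d., matching the theorem's hypothesis) and playing the referee internally (compute $\Pred(r_i, \RegResp_{\NumRounds})$ coherently into an ancilla, measure to get $b_i$, uncompute, hand back $\RegResp_{\NumRounds}$ and $b_i$); additionally, in each round where $b_i = 1$, measure the $\RegResp_{\NumRounds}$ register in the computational basis to obtain $z_i$ and append $(r_i, z_i)$ to $W$; finally discard duplicates so all recorded challenges are distinct. \cref{theorem:repeated-games} gives $\Value{\Game}^n(A^{\StrategyMap}_{\Game,n,\RWLoss}, \DMatrix) = \Expectation[\sum_i b_i] \geq n(\Value{\Game}(\StrategyMap,\DMatrix) - \RWLoss)$, so before the extra measurements and deduplication the expected number of accepting rounds is at least $n(\Value{\Game}(\StrategyMap,\DMatrix) - \RWLoss)$. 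The probability-$1$ structural guarantees are immediate: a pair $(r_i,z_i)$ is recorded only when $b_i = 1$, in which case $\RegResp_{\NumRounds}$ has just been projected onto $\{z : \ARGVerifier(\Transcript, r_i, z) = 1\}$, so $\ARGVerifier(\Transcript, r_i, z_i) = 1$; the recorded challenges lie in $\{r_1, \ldots, r_n\}$; and distinctness is enforced by hand. Deduplication removes at most one recorded pair per colliding pair $i < i'$ with $r_i = r_{i'}$, and over $\vec{r} \gets (\RSet_{\NumRounds})^n$ the expected number of colliding pairs is $\binom{n}{2}/|\RSet_{\NumRounds}| < n^2/|\RSet_{\NumRounds}|$.

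The main obstacle is showing that inserting the computational-basis measurements of $\RegResp_{\NumRounds}$ costs only $\negl(\secp)$ in the expected number of recorded transcripts; this is exactly where the collapsing hypothesis is used. I would argue via a hybrid over the $n$ rounds, where hybrid $\Hybrid_j$ performs the extra $\RegResp_{\NumRounds}$-measurement on wins only in rounds $1, \ldots, j$; thus $\Hybrid_0$ is $A_{\Game,n,\RWLoss}$ and $\Hybrid_n$ is (the quantum part of) $\Extract$. Consecutive hybrids differ only by whether $\RegResp_{\NumRounds}$ is measured in round $j+1$ conditioned on $b_{j+1} = 1$, which I embed into $\ProtocolCollapseExp$: the collapsing-experiment adversary $\Malicious{\ARGProver}'$ simulates $\Malicious{\ARGProver}$ for $\NumRounds-1$ rounds and then runs $\Extract$'s loop for the first $j$ rounds of the repeated game, routes the challenger's last-round challenge into round $j+1$, and sends $\RegResp_{\NumRounds}$; the challenger's verification measurement plays the role of $b_{j+1}$ (an abort corresponds to $b_{j+1} = 0$, where $\Hybrid_j$ and $\Hybrid_{j+1}$ already agree), and the distinguisher $\Adversary'$ receives the registers back, finishes rounds $j+2, \ldots, n$, and tests whether $\sum_{i \geq j+2} b_i \geq t$ for a threshold $t \in \{1, \ldots, n\}$. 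Since $\sum_{i \geq j+2} b_i \in \{0, \ldots, n\}$, summing the collapsing bound over all $t$ yields $|\Expectation_{\Hybrid_j}[\sum_i b_i] - \Expectation_{\Hybrid_{j+1}}[\sum_i b_i]| \leq n \cdot \negl(\secp)$, and telescoping over the $n-1$ hybrids gives $|\Expectation_{\Hybrid_0}[\sum_i b_i] - \Expectation_{\Hybrid_n}[\sum_i b_i]| \leq \negl(\secp)$. Combining this with the bound of \cref{theorem:repeated-games}, the deduplication loss above, and the averaging over $(\Transcript,\DMatrix)$ gives $\Expectation[|W|] \geq n(\SuccProb - \RWLoss) - n^2/|\RSet_{\NumRounds}| - \negl(\secp)$. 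The running-time bound follows since $A_{\Game,n,\RWLoss}$ runs in expected time $\tilde{O}(|\Pred| \cdot n/\RWLoss)$ with $|\Pred| = |\ARGVerifier| = \poly(\secp)$ (plus black-box use of $\UPrvRound{\NumRounds}$ and its inverse), and the added $\RegResp_{\NumRounds}$-measurements and the deduplication contribute only $\poly(\secp, n)$ overhead, so $\Extract$ runs in expected time $\poly(\secp) \cdot \tilde{O}(n/\RWLoss)$.
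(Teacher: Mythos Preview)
Your proposal is correct and follows essentially the same approach as the paper: define the last-round game $\Game = (\RSet_{\NumRounds}, \RegResp_{\NumRounds}, \ARGVerifier(\Transcript,\cdot,\cdot))$, run $A_{\Game,n,\RWLoss}$ from \cref{theorem:repeated-games} against an internal referee that additionally measures $\RegResp_{\NumRounds}$ on wins and deduplicates, invoke collapsing to absorb the cost of those extra measurements, and subtract the expected number of challenge collisions. The paper's proof compresses your hybrid argument into a single sentence (``the measurement \ldots\ is undetectable to any efficient distinguisher; in particular, it is undetectable to $A$''), whereas you spell out the $n$-step hybrid and the threshold-test reduction to $\ProtocolCollapseExp$; both routes yield the same $\negl(\secp)$ loss.
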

\begin{proof}
    We define an interactive quantum algorithm $C$ that acts as the referee in an $n$-round single-player quantum game as in \cref{sec:alt-proj-new}. For $r \in \RSet$, define
    \begin{equation*}
        \BProj{\ARGVerifier,r} \eqdef \sum_{z, V(\Transcript,r,z) = 1} \ketbra{z} \enspace.
    \end{equation*}
    \begin{center}
    \begin{minipage}{0.9\textwidth}
        $C(\Transcript,\vec{r})$:
        \begin{enumerate}[nolistsep]
            \item \label[step]{step:ext-main-loop} Set $W \eqdef \emptyset$. For $j = 1,\ldots,n$,
            \begin{enumerate}[nolistsep,ref=\arabic{enumi}(\alph*)]
                \item Send $r_j \in \RSet$ to the player.
                \item Receive register $\RegResp$ from the player.
                \item \label[step]{step:ext-measure} Apply the binary measurement $\Meas{\ARGVerifier,r_j} \eqdef \BMeas{\BProj{\ARGVerifier,r_j}}$ to register $\RegResp$, obtaining outcome $b$.
                \item \label[step]{step:ext-measure-resp} If $b = 1$, measure $\RegResp_{\NumRounds}$ in the computational basis to obtain response $z$. If there is no $z'$ such that $(r_j,z') \in W$, set $W \gets W \cup \{ (r_j,z) \}$.
                \item Return register $\RegResp$ to the player.
            \end{enumerate}
            \item Output $W$.
        \end{enumerate}
    \end{minipage}
    \end{center}
    The extractor $\Extract$ is obtained by simulating $\Interact{A_{\Game,n,\RWLoss}^{\UPrvRound{\NumRounds}}}{C(\Transcript,\vec{r})}$, where $A_{\Game,n,\RWLoss}$ is the algorithm guaranteed by \cref{theorem:repeated-games} with $\Game \eqdef (\RSet_{\NumRounds},\ZSet_{\NumRounds},\ARGVerifier(\Transcript,\cdot,\cdot))$, and $\UPrvRound{\NumRounds}$ is the unitary that the prover applies in the final round. The properties of the output of $\Extract$ (aside from the expected size of $W$) follow immediately from the definition.
    
    By the collapsing property of $(\ARGProver,\ARGVerifier)$, the measurement in \cref{step:ext-measure-resp} is undetectable to any efficient distinguisher; in particular, it is undetectable to $A$. We can therefore apply \cref{theorem:repeated-games} to show that the expected number of successful iterations is at least $n(\SuccProb - \RWLoss) - \negl(\secp)$. The expected number of repeated $r_j$ is at most $n^2/|R|$, which yields the bound.
\end{proof}

\begin{remark}
    If the quantum prover $\Malicious{\ARGProver}$ has (non-uniform) \emph{quantum} advice, then in general we can only run the extractor once.
    
    However, if the malicious quantum prover $\Malicious{\ARGProver}$ has (non-uniform) \emph{classical} advice, we can generate $(\Transcript,\DMatrix) \gets \Interact{\Malicious{\ARGProver}}{\ARGVerifier}_{\NumRounds-1}$ as many times as we would like (obtaining a different $(\Transcript,\DMatrix)$ each time). By running $\Extract^{\Malicious{\ARGProver}}$ on each $(\Transcript,\DMatrix)$, we eventually obtain a set $W$ of accepting transcripts with a shared prefix $\tau$ where $|W| \geq n(\SuccProb - \RWLoss) - n^2/|\RSet_m|$ with probability arbitrarily close to $1$.
\end{remark}

\subsection{Special sound protocols}

\cref{theorem:quantum-forking} immediately implies that any collapsing $k$-special sound protocol is an argument of knowledge. We first define $k$-special soundness, and then briefly explain how to apply \cref{theorem:quantum-forking} to obtain this result. Recall that a \emph{sigma protocol} is a three-message protocol where the prover moves first.

\begin{definition}[Special soundness]
    A sigma protocol $(\ARGProver,\ARGVerifier)$ is $k$-\emph{special sound} if there exists an extractor $\SSExtractor$ such that, given $k$ accepting transcripts $(a,r_1,z_1),\ldots,(a,r_k,z_k)$ with all $r_i \in \RSet$ distinct, $\SSExtractor(x,a,(r_i,z_i)_{i=1}^{k})$ outputs $w$ such that $(x,w) \in \Relation$.
\end{definition}

\begin{theorem}
\label{theorem:special-soundness}
    Any collapsing $k$-special sound protocol is a post-quantum argument of knowledge with knowledge error $O(k/|\RSet|)$.
\end{theorem}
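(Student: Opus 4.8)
The plan is to realize the knowledge extractor $\ARGExtractor$ as a single run of the forking algorithm $\Extract$ of \cref{theorem:quantum-forking} on the last round of the sigma protocol, followed by a call to the special-soundness extractor $\SSExtractor$. Given $1^{\secp}$, $\Instance$, $1^{1/\Advantage}$ and black-box access to a malicious prover $\Malicious{\ARGProver}$ with $\Pr[\Interact{\Malicious{\ARGProver}}{\ARGVerifier}=1]\ge\Advantage$, the extractor first runs $\Malicious{\ARGProver}$ through its opening message, obtaining the partial transcript $\Transcript=(a)$ and leftover state $\DMatrix$. If $\Advantage<64k/|\RSet|$ it outputs $\bot$: in this regime the claimed knowledge error $\KnowledgeError=O(k/|\RSet|)$ already exceeds $\Advantage$, so there is nothing to prove. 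Otherwise it sets $n\eqdef\lceil 16k/\Advantage\rceil$ (so that $n\le|\RSet|/2$) and $\RWLoss\eqdef\Advantage/8$, samples $\vec r\gets\RSet^{n}$, computes $W\gets\Extract^{\Malicious{\ARGProver}}(1^{\secp},1^{1/\RWLoss},\Transcript,\vec r,\DMatrix)$, and if $|W|\ge k$ takes $k$ of its elements $(s_1,z_1),\dots,(s_k,z_k)$ and outputs $\SSExtractor(\Instance,a,(s_i,z_i)_{i\in[k]})$. By \cref{theorem:quantum-forking} the elements of $W$ always yield accepting transcripts $(a,s_i,z_i)$ with pairwise distinct $s_i$, so whenever $|W|\ge k$ the call to $\SSExtractor$ returns $\Witness$ with $(\Instance,\Witness)\in\Relation$; hence it suffices to lower bound $\Pr[|W|\ge k]$.

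The quantitatively important observation is a sharpening of the collision term in \cref{theorem:quantum-forking}. Fix the opening message $a$ and write $v_a$ for the resulting last-round acceptance probability, i.e.\ the probability that $\ARGVerifier(a,r,z)=1$ when $r\gets\RSet$ and $z$ is $\Malicious{\ARGProver}$'s response from state $\DMatrix$. The proof of \cref{theorem:quantum-forking} (the application of \cref{theorem:repeated-games} together with the collapsing property) shows that the expected number of \emph{successful} iterations satisfies $\Expectation\bigl[\textstyle\sum_{j=1}^{n}b_j\bigr]\ge n(v_a-\RWLoss)-\negl(\secp)$, where $b_j\in\Bits$ is the referee's win bit in iteration $j$. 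Now, because $r_j$ is sampled i.i.d.\ uniformly and independently of everything preceding iteration $j$, we have $\Expectation[b_i b_j\mathbf{1}[r_i=r_j]]\le\Expectation[b_i]/|\RSet|$ for $i<j$, so the expected number of successful iterations whose challenge coincides with that of an earlier successful iteration is at most $(n/|\RSet|)\cdot\Expectation[\sum_j b_j]$ — a \emph{multiplicative} loss of $n/|\RSet|$ rather than the additive $n^2/|\RSet|$ recorded in the statement. Therefore $\Expectation[\,|W|\mid a\,]\ge(1-n/|\RSet|)\bigl(n(v_a-\RWLoss)-\negl(\secp)\bigr)\ge\tfrac12 n(v_a-\RWLoss)-\negl(\secp)$, using $n\le|\RSet|/2$.

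The remainder is bookkeeping. For every $a$ with $v_a\ge\Advantage/2$, the choice of parameters gives $\RWLoss=\Advantage/8\le v_a/4$ and $k/n\le\Advantage/16\le v_a/8$, hence $\Expectation[\,|W|\mid a\,]\ge\tfrac38 n v_a-\negl(\secp)$; since $|W|\le n$, a reverse Markov bound yields $\Pr[\,|W|\ge k\mid a\,]\ge\tfrac38 v_a-k/n-\negl(\secp)\ge\tfrac18 v_a$ (recall $\Advantage$, hence $v_a$, is inverse-polynomial, so the $\negl$ term is absorbed). Averaging over $a$ and using $\Expectation_a[v_a]=\Pr[\Interact{\Malicious{\ARGProver}}{\ARGVerifier}=1]\ge\Advantage$, which gives $\Expectation_a[v_a\mathbf{1}[v_a\ge\Advantage/2]]\ge\Advantage/2$, we obtain $\Pr[|W|\ge k]=\Expectation_a\bigl[\Pr[|W|\ge k\mid a]\bigr]\ge\tfrac18\Expectation_a[v_a\mathbf{1}[v_a\ge\Advantage/2]]\ge\Advantage/16$. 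Thus $\ARGExtractor$ outputs a witness with probability $\Omega(\Advantage-\KnowledgeError)$ for $\KnowledgeError=O(k/|\RSet|)$, and its expected running time is $\poly(\secp)\cdot\tilde{O}(n/\RWLoss)=\poly(\secp)\cdot\tilde{O}(k/\Advantage^{2})=\poly(\secp,1/\Advantage)$ since $k=\poly(\secp)$.

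The one step I expect to require genuine care is the sharpened collision bound in the second paragraph: invoking \cref{theorem:quantum-forking} verbatim forces $n\ll\sqrt{k|\RSet|}$ merely to keep the additive $n^2/|\RSet|$ below $\Expectation[\sum_j b_j]$, which only yields knowledge error $O(\sqrt{k/|\RSet|})$ — exactly the bound left by prior techniques, and short of the claimed $O(k/|\RSet|)$. Avoiding this requires observing that duplicate challenges only cost us among the comparatively few \emph{successful} iterations, turning the penalty into a multiplicative $1-n/|\RSet|$ factor; a secondary (routine) subtlety is the averaging over the opening message $a$, which is needed because the extractor is given only a lower bound on the \emph{overall} acceptance probability and not the per-prefix value $v_a$.
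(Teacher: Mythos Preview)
Your approach is the same as the paper's at the high level: run $\Extract$ on the last round with $n=\Theta(k/\Advantage)$ and $\RWLoss=\Theta(\Advantage)$, then feed $k$ of the returned transcripts to $\SSExtractor$, and bound $\Pr[|W|\ge k]$ by (reverse) Markov. The paper's argument is a sketch and simply asserts $\Expectation[|W|]\ge n(\Advantage-2k/|\RSet|)/2$.

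What you do differently---and this is a genuine contribution, not just extra care---is make explicit \emph{why} that bound holds. As you correctly point out, plugging the additive $n^2/|\RSet|$ from \cref{theorem:quantum-forking} into the Markov step forces $n=\Theta(\sqrt{k|\RSet|})$ and yields only knowledge error $O(\sqrt{k/|\RSet|})$; the bound the paper writes down does \emph{not} follow from \cref{theorem:quantum-forking} as stated. Your replacement, bounding the expected number of duplicated \emph{successful} rounds by $(n/|\RSet|)\cdot\Expectation[\sum_j b_j]$ via independence of $r_j$ from the history, turns the loss into a multiplicative $(1-n/|\RSet|)$ factor and recovers $O(k/|\RSet|)$. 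This is exactly the missing step the paper's sketch elides.

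Two small points. First, the per-prefix analysis is not needed: once you have the multiplicative collision bound, you can apply it directly to the averaged quantity $\Expectation[|W|]\ge(1-n/|\RSet|)\Expectation[\sum_j b_j]$ coming from \cref{theorem:quantum-forking} (with $\SuccProb=\Advantage$), and reverse Markov on $0\le|W|\le n$ gives $\Pr[|W|\ge k]=\Omega(\Advantage)$ without ever conditioning on $a$. Second, your per-$a$ bound $\Expectation[\sum_j b_j\mid a]\ge n(v_a-\RWLoss)-\negl(\secp)$ is a little delicate: the $\negl$ from collapsing is a bound on the \emph{global} distinguishing advantage, so after conditioning on a low-probability $a$ it need not stay negligible. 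The clean fix is to do the entire analysis with the response-measurement omitted (then \cref{theorem:repeated-games} gives an exact per-$a$ bound with no $\negl$), and invoke collapsing once at the end to argue that re-inserting the measurements changes $\Pr[|W|\ge k]$ by at most $n\cdot\negl(\secp)$ overall.
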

\begin{proof}[Proof sketch]
    Let $\Malicious{\ARGProver}$ be an adversary that convinces $\ARGVerifier$ with probability $\Advantage > 4k/|\RSet|$. The extractor $\ARGExtractor$ for $(\ARGProver,\ARGVerifier)$ operates as follows, where $\Extract$ is as guaranteed by \cref{theorem:quantum-forking}.
    \begin{enumerate}[noitemsep]
        \item Obtain first message $a$ from $\Malicious{\ARGProver}$; let $\DMatrix$ be the prover's state after sending $a$.
        \item Sample $\vec{r} = (r_1,\ldots,r_n) \gets \RSet^{n}$ uniformly at random, where $n = 8k/\Advantage$.
        \item Run $W \gets \Extract^{\Malicious{\ARGProver}}(1^{\secp},1^{1/\RWLoss},a,\vec{r},\DMatrix)$, for $\RWLoss = \Theta(\Advantage)$ to be chosen.
        \item If $|W| \geq k$, output $w \gets \ARGExtractor_{\mathsf{ss}}(x,a,W)$.
    \end{enumerate}
    $\RWLoss$ can be chosen such that $\Expectation[|W|] \geq n(\Advantage - 2k/|\RSet|)/2 \geq n\Advantage/4$, and so the probability that $|W| \geq n\Advantage/8 = k$ is at least $\Advantage/8$ by Markov's inequality. The theorem follows by the definition of special soundness.
\end{proof}

For constant $k$, \cref{theorem:special-soundness} states that the post-quantum knowledge error of any $k$-special sound collapsing sigma protocol is $O(1/|\RSet|)$, which asymptotically matches the classical knowledge error. Previously, the post-quantum knowledge error of such protocols was only shown to be $O(1/\sqrt{|\RSet|})$ via Unruh's rewinding lemma~\cite{Unruh12}. 

\begin{remark}
    \cref{theorem:special-soundness} alone is insufficient to imply post-quantum security of Kilian's protocol (when instantiated with a PCP of knowledge), since Kilian's protocol is \emph{not} $k$-special sound for any $k = \poly(\secp)$. In particular, $k$-special soundness requires successful extraction from \emph{any} set of $k$ accepting transcripts with distinct challenges $r_i$; the extractor for Kilian's protocol requires that the $r_i$ are also ``sufficiently random''. We therefore prove post-quantum security of Kilian's protocol in \cref{sec:kilian} by directly applying \cref{theorem:quantum-forking} to obtain accepting transcripts for randomly sampled $r_i$. 
\end{remark}

\doclearpage
\section{Collapsing vector commitments}
\label{sec:collapsing-vc}

We define collapsing vector commitments (\Cref{sec:vector-commitments}), and then prove that Merkle trees are collapsing vector commitments when the underlying hash function is collapsing (\Cref{sec:merkle-trees}). Later on, in \cref{sec:kilian}, we will formulate Kilian's protocol in terms of vector commitments, and establish its post-quantum security when the vector commitment is collapsing.

\subsection{Definition}
\label{sec:vector-commitments}

A (static) vector commitment scheme $\VCScheme$ \cite{CatalanoF13} consists of the following algorithms.
\begin{itemize}
\item $\VCGen(1^\secp,\Alphabet,\veclen)$ is a probabilistic algorithm that takes as input the security parameter $1^\secp$, an alphabet $\Alphabet$, and a vector length $\veclen \in \bbN$, and outputs a commitment key $\ck$.
\item $\VCCommit(\ck,\Message)$ is a (possibly probabilistic) algorithm that takes as input a commitment key $\ck$ and a vector $\Message \in \Alphabet^{\veclen}$, and outputs a commitment string $\VCcm$ and auxiliary information $\VCaux$.
\item $\VCOpen(\ck,\VCaux,\VCQuerySet)$ is a deterministic algorithm that takes as input a commitment key $\ck$, auxiliary information $\VCaux$, and a subset $\VCQuerySet \subseteq [\veclen]$, and outputs an opening proof $\VCauth$.
\item $\VCVerify(\ck,\VCcm,\VCQuerySet,\VCval,\VCauth)$ is a deterministic algorithm that takes as input a commitment key $\ck$, a commitment $\VCcm$, a subset $\VCQuerySet \subseteq [\veclen]$, alphabet symbols $\VCval \in \Alphabet^{\VCQuerySet}$, and an opening proof $\VCauth$, and outputs a bit $b \in \Bits$.
\end{itemize}
The vector commitment scheme $\VCScheme$ is \emph{complete} if for every security parameter $\secp$, alphabet $\Alphabet$, vector length $\veclen \in \bbN$, and adversary $\Adversary$,
\begin{equation*}
\Pr\left[
\VCVerify(\ck,\VCcm,\VCQuerySet,\Message[\VCQuerySet],\VCauth)=1
\;\middle\vert\;
\begin{array}{r}
\ck \gets \VCGen(1^\secp,\Alphabet,\veclen) \\
(\Message \in \Alphabet^{\veclen},\VCQuerySet \subseteq [\veclen]) \gets \Adversary(\ck) \\
(\VCcm,\VCaux) \gets \VCCommit(\ck,\Message) \\
\VCauth \gets \VCOpen(\ck,\VCaux,\VCQuerySet)
\end{array}
\right]
=1 \enspace.
\end{equation*}
The traditional definition of security for a vector commitment scheme is \emph{position binding}, which states that no efficient attacker can open any location to two different values. In more detail, for every security parameter $\secp$, alphabet $\Alphabet$, vector length $\veclen \in \bbN$, and polynomial-size (classical or quantum) adversary $\Adversary$,
\begin{equation*}
\Pr\left[
\begin{array}{c}
\exists\,i \in \VCQuerySet_1 \cap \VCQuerySet_2 \text{ s.t. } \VCval_1[i] \neq \VCval_2[i] \\
\wedge\;\VCVerify(\ck,\VCcm,\VCQuerySet_1,\VCval_1,\VCauth_1)=1 \\
\wedge\;\VCVerify(\ck,\VCcm,\VCQuerySet_2,\VCval_2,\VCauth_2)=1
\end{array}
\;\middle\vert\;
\begin{array}{r}
\ck \gets \VCGen(1^\secp,\Alphabet,\veclen) \\
\left(\VCcm,
\begin{array}{c}
\VCQuerySet_1 \subseteq [\veclen], \VCval_1 \in \Alphabet^{\VCQuerySet_1}, \VCauth_1 \\
\VCQuerySet_2 \subseteq [\veclen], \VCval_2 \in \Alphabet^{\VCQuerySet_2}, \VCauth_2
\end{array}
\right) 
\gets \Adversary(\ck) \\
\end{array}
\right]
= \negl(\secp) \enspace.
\end{equation*}
While position binding against classical adversaries suffices to prove security of Kilian's protocol against classical adversaries, it is not known whether position binding against quantum adversaries suffices to prove security of Kilian's protocol against quantum adversaries. (And, as discussed in \cref{sec:intro}, it is unlikely to.) Hence we rely on an additional \emph{collapsing} property that we introduce.

\begin{definition}
\label{def:vc-collapsing}
$\VCScheme$ is \emph{collapsing} if for every security parameter $\secp$, alphabet $\Alphabet$, vector length $\veclen \in \bbN$, and polynomial-size quantum adversary $\Adversary$,
\begin{equation*}
\Big|
\Pr[\VCCollapsingExp{0}{\secp}{\Alphabet}{\veclen}{\Adversary} = 1]
-
\Pr[\VCCollapsingExp{1}{\secp}{\Alphabet}{\veclen}{\Adversary} = 1]
\Big|
\leq \negl(\secp)
\enspace.
\end{equation*}
For $b \in \Bits$ the experiment $\VCCollapsingExp{b}{\secp}{\Alphabet}{\veclen}{\Adversary}$ is defined as follows:
\begin{enumerate}[noitemsep]
  \item The challenger samples $\ck \gets \VCGen(1^{\secp},\Alphabet,\veclen)$ and sends $\ck$ to $\Adversary$.
  \item $\Adversary$ replies with a classical message $(\VCcm,\VCQuerySet \subseteq [\veclen])$, and a quantum state on registers $(\RegV,\RegO)$, where the $\RegV$ registers contain strings $\VCval \in \Alphabet^{\VCQuerySet}$ and the $\RegO$ registers contain opening proofs $\VCauth$.
  \item The challenger computes into an ancilla register the bit $\VCVerify(\ck,\VCcm,\VCQuerySet,\RegV,\RegO)$ via some unitary $U$, measures the ancilla, and then applies $\contra{U}$ to uncompute. If the measured bit is $0$ (verification fails), the challenger aborts and outputs $\bot$.
  \item If $b = 0$, the challenger does nothing. If $b = 1$, the challenger measures the registers $(\RegV,\RegO)$ in the standard basis to obtain a string $\VCval$ and opening proof $\VCauth$, which it discards.
  \item The challenger returns the contents of the (potentially measured) registers $(\RegV,\RegO)$ to $\Adversary$.
  \item $\Adversary$ outputs a bit $b$, which is the output of the experiment.
\end{enumerate}
\end{definition}

\begin{remark}
The definition of collapse binding for standard commitments implies (classical-style) binding \cite{Unruh16-eurocrypt}. However, we do not know whether our definition of collapsing for vector commitments implies position binding in general, without imposing additional structure on the vector commitment.
\end{remark}

\subsection{Merkle trees are collapsing}
\label{sec:merkle-trees}

We describe Merkle trees as an instance of vector commitments (\Cref{sec:vector-commitments}), and then prove that they are collapsing when the underlying hash function is collapsing.

\begin{construction}
\label{construction:merkle-tree}
Let $\HashFamily = \{\HashDistribution_{\secp}\}_{\secp \in \Naturals}$ be a function family with input size $\hinlen(\secp)$ and output size $\houtlen(\secp) = \hinlen(\secp)/2$. Let $\VCScheme \eqdef \GetMerkle{\HashFamily}$ be the vector commitment for messages over alphabet $\Alphabet \eqdef\Bits^{\hinlen(\secp)}$ that is constructed as follows.
\begin{itemize}
\item $\VCGen(1^\secp,\Alphabet,\veclen)$: sample a hash function $\HashFunction \gets \HashDistribution_{\secp}$ and output the commitment key $\ck \eqdef (\veclen,\HashFunction)$.
\item $\VCCommit(\ck,\Message)$: use $\HashFunction \colon \Bits^{\hinlen(\secp)} \to \Bits^{\hinlen(\secp)/2}$ to pairwise hash the message $\Message$ to obtain a corresponding Merkle tree $\MTree$ with root $\MRoot \in \Bits^{\hinlen(\secp)/2}$, and then output $\VCcm \eqdef \MRoot$ as a commitment and $\VCaux \eqdef (\Message,\MTree)$ as auxiliary information.
\item $\VCOpen(\ck,\VCaux,\VCQuerySet)$: for each index $i \in \VCQuerySet$, deduce the authentication path $\MPath_{i}$ for index $i$ in the Merkle tree $\MTree$, and then output the opening proof $\VCauth \eqdef (\MPath_{i})_{i \in \VCQuerySet}$. (Some of the paths may have overlaps, in which case the opening proof $\VCauth$ can be compressed accordingly.)
\item $\VCVerify(\ck,\VCcm,\VCQuerySet,\VCval,\VCauth)$: for each index $i \in \VCQuerySet$, check that the authentication path $\MPath_{i}$ in $\VCauth$ is for messages of length $\veclen$, and that it authenticates the value $\VCval_{i}$ for location $i$ in a Merkle tree with root $\VCcm$.
\end{itemize}
\end{construction}

It is well-known that Merkle trees satisfy the position binding property.

\begin{claim}
\label{claim:merkle-trees-are-position-binding}
If $\HashFamily$ is a collision-resistant hash function with input size $\hinlen(\secp)$ and output size $\houtlen(\secp) = \frac{\hinlen(\secp)}{2}$ against classical (resp., quantum) adversaries then $\VCScheme \eqdef \GetMerkle{\HashFamily}$ is a position-binding vector commitment scheme over alphabet $\Alphabet \eqdef \Bits^{\hinlen(\secp)}$ against classical (resp., quantum) adversaries.
\end{claim}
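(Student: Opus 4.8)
The plan is to prove the contrapositive: from any adversary $\Adversary$ that breaks position binding of $\VCScheme = \GetMerkle{\HashFamily}$, I would construct an adversary $\Adversary'$ that finds a collision in $\HashFamily$ with essentially the same success probability. First I would recall the structure of the attack: $\Adversary$, given $\ck = (\veclen,\HashFunction)$, outputs a root $\VCcm = \MRoot$, two query sets $\VCQuerySet_1,\VCQuerySet_2$ with claimed values $\VCval_1,\VCval_2$ and opening proofs $\VCauth_1,\VCauth_2$, such that both verify but there is some index $i \in \VCQuerySet_1 \cap \VCQuerySet_2$ with $\VCval_1[i] \neq \VCval_2[i]$. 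Since verification of an authentication path is a sequence of hash evaluations from the leaf to the claimed root $\MRoot$, each proof $\VCauth_b$ determines the sequence of node labels along the path from leaf $i$ to the root. The two paths for index $i$ share the same endpoint $\MRoot$ at the top but, because $\VCval_1[i] \neq \VCval_2[i]$, have different labels at the leaf.

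The key step is the standard ``walk down from the root until the paths diverge'' argument. Both paths go from the leaf position $i$ up to the common root $\MRoot$. Walking from the root downward, at each level the two paths either agree on both sibling labels or they first disagree. Since the top labels agree ($= \MRoot$) and the bottom labels disagree ($\VCval_1[i] \neq \VCval_2[i]$), there must be a unique highest level at which the pair of child labels differ while the parent label is the same in both paths. At that node, $\Adversary'$ reads off the two distinct pairs of children $(x,y) \neq (x',y')$ that both hash under $\HashFunction$ to the same parent value; this yields a collision $(x\|y, x'\|y')$ for $\HashFunction$. Thus $\Adversary'$ runs $\Adversary(\ck)$, checks that the output is a valid position-binding break (both proofs verify, common index with differing values), performs this deterministic descent to locate the diverging node, and outputs the collision. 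Whenever $\Adversary$ succeeds, $\Adversary'$ succeeds, so if $\HashFamily$ is collision resistant (against classical, resp.\ quantum, adversaries) then $\VCScheme$ is position binding (against the same class), since $\Adversary'$ has only polynomial overhead over $\Adversary$ and is classical (resp.\ quantum) whenever $\Adversary$ is.

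The main obstacle, such as it is, is purely bookkeeping rather than conceptual: one must carefully handle the fact that $\VCauth$ may be a compressed opening proof for a set $\VCQuerySet$ of several indices (paths can overlap), so I would first decompress to the individual authentication path $\MPath_i$ for the witnessing index $i$ from each of $\VCauth_1,\VCauth_2$, and also verify the edge case where $\veclen$ is not a power of two (the tree may be unbalanced, but the descent argument is unchanged as long as both paths have the claimed root and length consistent with $\veclen$). One should also note that the two paths necessarily have the same length, since both are authentication paths for the same leaf index $i$ in a Merkle tree over a vector of length $\veclen$, so the ``highest diverging level'' is well-defined. None of these points introduces real difficulty; the claim is, as stated, well-known, and the proof is a routine reduction.
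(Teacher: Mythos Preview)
Your proposal is correct and is exactly the standard argument; the paper does not give its own proof of this claim, instead simply remarking that ``it is well-known that Merkle trees satisfy the position binding property'' and stating the claim without proof. Your reduction (locate the highest level at which the two authentication paths for the common index $i$ disagree while their parent agrees, and output the resulting hash collision) is the expected one, and the bookkeeping points you flag (decompressing $\VCauth$ to the single path $\MPath_i$, equal path lengths since both paths are for the same leaf index $i$) are the only things to check.
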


We now show that if $\HashFamily$ is a collapsing hash function then $\VCScheme \eqdef \GetMerkle{\HashFamily}$ is a collapsing vector commitment.

\begin{claim}
\label{claim:merkle-trees-are-collapsing}
If $\HashFamily$ is a collapsing hash function with input size $\hinlen(\secp)$ and output size $\houtlen(\secp) = \frac{\hinlen(\secp)}{2}$ then $\VCScheme \eqdef \GetMerkle{\HashFamily}$ is a collapsing vector commitment over alphabet $\Alphabet \eqdef \Bits^{\hinlen(\secp)}$.
\end{claim}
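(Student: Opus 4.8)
The plan is to prove \Cref{claim:merkle-trees-are-collapsing} by a hybrid argument that ``collapses'' the Merkle tree one internal node at a time, from the root downwards, applying the collapsing property of $\HashFamily$ (\Cref{def:collapsing}) at each step. First I would fix a polynomial-size quantum adversary $\Adversary$ for the vector-commitment collapsing experiment and condition on $\VCVerify$ accepting (otherwise both experiments output $\bot$ and there is nothing to prove); after this conditioning, the state on $(\RegV,\RegO)$ is supported on openings for which every hash equation along every queried authentication path holds \emph{with certainty}. Let $\MHeight = \log \veclen$ and let $u_1,\dots,u_M$ enumerate the internal tree nodes lying on some queried path $\MPath_i$, $i \in \VCQuerySet$, ordered so that each node precedes its children; since $\Adversary$ is efficient, $M = O(|\VCQuerySet| \cdot \MHeight) = \poly(\secp)$. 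For an internal node $u$ at depth $\ell$, both its hash value $v_u$ and its preimage $x_u$ (the ordered concatenation of $u$'s two children's values) are deterministic functions of $\VCcm$ and the contents of $(\RegV,\RegO)$, computable by a sub-circuit of $\VCVerify$; note $v_{\mathrm{root}} = \VCcm$ is already classical, and, on the accepting-supported state, $\HashFunction(x_u) = v_u$.

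Next I would define hybrids $\Hybrid_0,\dots,\Hybrid_M$: in $\Hybrid_j$, after the verification step and before returning the registers to $\Adversary$, the challenger processes $t = 1,\dots,j$ in order, coherently computing $x_{u_t}$ into a fresh ancilla (using the verifier sub-circuit), measuring that ancilla in the computational basis, and uncomputing $x_{u_t}$. Then $\Hybrid_0 = \VCCollapsingExp{0}{\secp}{\Alphabet}{\veclen}{\Adversary}$. Moreover $\Hybrid_M = \VCCollapsingExp{1}{\secp}{\Alphabet}{\veclen}{\Adversary}$: measuring a coherently-computed $x_{u_t}$ and then uncomputing it is equivalent to a computational-basis measurement of those components of $(\RegV,\RegO)$ that determine $x_{u_t}$, and because the root-first ordering ensures every queried leaf value (a register of $\RegV$) and every sibling hash in an opening proof (a register of $\RegO$) appears as a child value of some internal node $u_t$, once all $M$ preimages have been measured every register of $(\RegV,\RegO)$ has been measured.

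Finally I would bound $|\Pr[\Hybrid_{j-1}=1] - \Pr[\Hybrid_j=1]|$ by $\negl(\secp)$ for each $j \in [M]$ via a reduction: an adversary $\Adversary'$ against collapsing of $\HashFamily$ runs $\Adversary$ with commitment key $(\veclen,\HashFunction)$, simulates verification and the first $j-1$ preimage measurements exactly, coherently computes $x_{u_j}$ into a register $\RegX$, and hands the hash challenger the classical image $v_{u_j}$ (known already, since $u_j$'s parent precedes $u_j$ and was processed, or $u_j$ is the root) together with $\RegX$; since $\HashFunction(x_{u_j}) = v_{u_j}$ holds with certainty the challenger does not abort, and after it returns $\RegX$ (measured or not, according to its secret bit) $\Adversary'$ uncomputes $x_{u_j}$, finishes the remaining $M-j$ steps, and outputs $\Adversary$'s bit. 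This perfectly simulates $\Hybrid_{j-1}$ when the challenger's bit is $0$ and $\Hybrid_j$ when it is $1$, so $\Adversary'$'s advantage equals the hybrid gap and is $\negl(\secp)$ by collapsing of $\HashFamily$; summing over the $M = \poly(\secp)$ hybrids gives the claim. The main obstacle will be the bookkeeping in this last step --- verifying that handing the coherently-computed preimage register to the hash challenger and then uncomputing it induces exactly the computational-basis measurement appearing in the hybrid, that the root-first order keeps every node's image classical when it is needed, and that leaves together with all (possibly compressed) sibling hashes are indeed covered by the internal-node preimages.
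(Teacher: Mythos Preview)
Your proposal is correct and follows essentially the same approach as the paper: a root-to-leaf hybrid argument that reduces each step to the collapsing property of $\HashFamily$, using that the parent's hash value is already classical by the time you process a node. The paper presents it for a singleton query set $Q=\{i\}$ and indexes the hybrids level-by-level (measuring the top $j$ levels of the path registers in $\Hybrid_j$), whereas you handle general $Q$ directly and index node-by-node; these are cosmetic differences and the reductions are the same.
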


\begin{proof}
The proof is a standard application of the collapsing hash function security property. We write the proof for the case of a singleton query set $Q = \{ i \}$; extending to the general case is straightforward.

Fix a message length $\veclen$, and let $\MHeight \eqdef \lceil \log_2 \veclen \rceil$ be the height of a Merkle tree for messages of length $\veclen$. For $j \in \{0,1,\dots,\MHeight\}$, we define a hybrid experiment $\Hybrid_j$ as follows:
\begin{enumerate}[noitemsep]
  \item The challenger samples $\HashFunction \gets \HashDistribution_{\secp}$ and sends $\HashFunction$ to $\Adversary$.
  \item $\Adversary$ replies with a classical message $(\MRoot,i \in [\veclen])$ (a Merkle root and a location) and a quantum state on registers $(\RegV,\RegO_1,\dots,\RegO_{\MHeight})$, where the register $\RegV$ corresponds to strings in $\Alphabet^{\VCQuerySet}$ and each register $\RegO_j$ corresponds to the $j$-th node in the Merkle opening proof ($j = 1$ is a leaf node).  For convenience we set $\RegY_1 \eqdef \RegV$.
  \item The challenger coherently applies $\VCVerify$ using $\MHeight$ ancilla registers $\RegY_2,\dots,\RegY_{\MHeight+1}$:
  \begin{enumerate}
    \item Let $\MTUnitary_{i}$ be a unitary on the registers $(\RegO_1,\dots,\RegO_{\MHeight},\RegY_1,\dots,\RegY_{\MHeight+1})$ that works as follows: for $k = 1,\dots,\MHeight$, apply $\HashFunction$ to $(\RegY_{k},\RegO_k)$ or $(\RegO_k,\RegY_{k})$ (depending on the $k$-th bit of $i$) and XOR the result onto $\RegY_k$.
    \item Apply $\MTUnitary_{i}$ and then measure the bit indicating whether $\RegY_{\MHeight+1}$ equals $\MRoot$ (by applying the binary projective measurement $\BMeas{\ketbra{\MRoot}^{\RegY_{\MHeight+1}}}$). If the measured bit is $0$ (verification fails), then the challenger aborts and outputs $\bot$. 
  \end{enumerate}
  \item The challenger measures registers $(\RegO_{\MHeight-j+1},\dots,\RegO_{\MHeight})$ and $(\RegY_{\MHeight-j+1},\dots,\RegY_{\MHeight+1})$. (If $j = 0$ then the challenger does not measure any of the $\RegO$ registers.)
  \item The challenger applies $\contra{\MTUnitary_{i}}$ to uncompute the $\RegY_2,\dots,\RegY_{\MHeight+1}$ registers, and returns the registers $(\RegV,\RegO_1,\dots,\RegO_{\MHeight})$ to the adversary $\Adversary$.
\end{enumerate}
Hybrid $\Hybrid_0$ corresponds to the experiment $\VCCollapsingExp{0}{\secp}{\Alphabet}{\veclen}{\Adversary}$ and hybrid $\Hybrid_{\MHeight}$ corresponds to the experiment $\VCCollapsingExp{1}{\secp}{\Alphabet}{\veclen}{\Adversary}$, for the vector commitment scheme $\VCScheme \eqdef \GetMerkle{\HashFamily}$. (See \Cref{def:vc-collapsing} for the definition of the collapsing experiment for $\VCScheme$.)

We are left to argue that, for each $j \in \{0,1,\dots,\MHeight-1\}$, $\Hybrid_j$ and $\Hybrid_{j+1}$ are indistinguishable. Suppose by way of contradiction that for some $j \in \{0,1,\dots,\MHeight-1\}$ the attacker $\Adversary$ can distinguish $\Hybrid_j$ and $\Hybrid_{j+1}$ with advantage at least $\epsilon$. We construct an adversary $\Adversary_{j}$ that has distinguishing advantage at least $\epsilon$ for $\HashFamily$'s collapsing experiment $\HCollapsingExp{b}{\secp}{\Adversary_{j}}$ (see \Cref{def:collapsing}). The adversary $\Adversary_{j}$ works as follows.
\begin{enumerate}[noitemsep]
  \item Receive a hash function $\HashFunction$ from the challenger.
  \item Send $\HashFunction$ to $\Adversary$, and obtain the message $(\MRoot,i)$ and a quantum state on registers $(\RegV,\RegO_1,\dots,\RegO_{\MHeight})$.
  \item Similarly to the challenger in the hybrids, set $\RegV \eqdef \RegY_1$, prepare $\MHeight$ internal ancilla registers $\RegY_2,\dots,\RegY_{\MHeight+1}$, and apply the same unitary $\MTUnitary_{i}$ on $(\RegO_1,\dots,\RegO_{\MHeight},\RegY_1,\dots,\RegY_{\MHeight+1})$.
  \item Measure the bit indicating whether $\RegY_{\MHeight+1}$ equals the Merkle root $\MRoot$, and aborts if this measurement does not return $1$.
  \item Measure $(\RegO_{\MHeight-j+1},\dots,\RegO_{\MHeight})$ and $(\RegY_{\MHeight-j+1},\dots,\RegY_{\MHeight+1})$.
  \item Forward the contents of $(\RegO_{\MHeight-j},\RegY_{\MHeight-j})$ to the challenger as the hash function input, and forward $\RegY_{\MHeight-j}$ as the classical output. (If $b = 0$, the challenger in the collapsing experiment will not disturb the state on $(\RegO_{\MHeight-j},\RegY_{\MHeight-j})$; if instead $b = 1$, the challenger measures $(\RegO_{\MHeight-j},\RegY_{\MHeight-j})$ before returning these registers to $\Adversary_{j}$.)
  \item Apply $\MTUnitary_{i}$ again and return the registers $(\RegV,\RegO_1,\dots,\RegO_{\MHeight})$ to $\Adversary$.
  \item Output whatever $\Adversary$ outputs.
\end{enumerate}
The proof is concluded by observing that $\Adversary$'s view when inside the experiment $\HCollapsingExp{0}{\secp}{\Adversary_{j}}$ corresponds to hybrid $\Hybrid_j$ and $\Adversary$'s view when inside the experiment $\HCollapsingExp{1}{\secp}{\Adversary_{j}}$ corresponds to hybrid $\Hybrid_{j+1}$.
\end{proof}

\doclearpage
\section{Post-quantum security of Kilian's protocol}
\label{sec:kilian}

Denote by $\GetKilian{\PCPSystem}{\VCScheme}$ the instantiation of Kilian's protocol with PCP system $\PCPSystem$ and vector commitment scheme $\VCScheme$ (see \cref{sec:kilian-protocol} below). We prove the following theorem.

\begin{theorem}
\label{theorem:kilian}
Let $\PCPSystem$ be a PCP system for $\Relation$ with negligible soundness error, and let $\VCScheme$ be a collapsing vector commitment. Then $\GetKilian{\PCPSystem}{\VCScheme}$ is a post-quantum succinct argument for $\Relation$. Moreover, if $\PCPSystem$ has negligible knowledge error, then $\GetKilian{\PCPSystem}{\VCScheme}$ is also a post-quantum succinct argument of knowledge for $\Relation$.
\end{theorem}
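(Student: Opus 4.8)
The plan is to combine the quantum forking lemma (\cref{theorem:quantum-forking}) with the classical extraction analysis sketched in \cref{subsec:kilian}. The first ingredient is \cref{step:tech-overview-collapsing}: I would prove that $\GetKilian{\PCPSystem}{\VCScheme}$ is a collapsing protocol in the sense of \cref{def:collapsing-protocol} whenever $\VCScheme$ is a collapsing vector commitment. This is where \cref{claim:merkle-trees-are-collapsing} gets used (for the Merkle-tree instantiation), but stated at the level of abstract vector commitments the argument is essentially immediate: the last prover message is $z = (\VCval,\VCauth)$, and the verifier's acceptance predicate checks $\VCVerify(\ck,\VCcm,\PCPQuerySet,\VCval,\VCauth)=1$ together with the (classically computable) PCP check $\PCPVerifier(\Instance;r)$ accepting $\VCval$. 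Since the PCP check depends only on $r$ (public) and $\VCval$, measuring $\RegV$ after the verifier predicate has been projected onto the accepting subspace is exactly the measurement in the $\VCScheme$-collapsing experiment; so an adversary-distinguisher for the protocol-collapsing experiment yields one for $\VCScheme$-collapsing. I would write this reduction carefully for general query sets $\PCPQuerySet$ (the $\RegO$ register holds the opening proof; the reduction forwards $(\RegV,\RegO)$ to the $\VCScheme$-collapsing challenger).

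Second, I would set up the extractor. Given a malicious quantum prover $\Malicious{\ARGProver}$ that convinces $\ARGVerifier$ with probability $\SuccProb$ (for soundness, $\SuccProb$ is non-negligible by assumption; for knowledge soundness, $\SuccProb \ge \Advantage$ is given as input), run $\Malicious{\ARGProver}$ through the first three messages to obtain $(\Transcript,\DMatrix) \gets \Interact{\Malicious{\ARGProver}}{\ARGVerifier}_{\NumRounds-1}$ with $\Transcript=(h_{\CRHF},\VCcm)$ (here $\NumRounds$ is the number of rounds as indexed in the excerpt; Kilian's protocol has the commitment round as round $1$ and the PCP-query round as round $\NumRounds$, with $\NumRounds-1$ meaning "up to but excluding the final prover message" — I would align indices with the convention in \cref{theorem:quantum-forking}). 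By Markov, with probability $\ge \SuccProb/2$ the state $\DMatrix$ is $(\SuccProb/2)$-good in the sense of \cref{subsec:kilian}. Now apply \cref{theorem:quantum-forking} with this prefix, with $n = k = \Theta(\ProofLength \cdot \log|\Alphabet| / |\PCPQuerySet| \cdot 1/\SuccProb)$ sampled challenges $\vec{r}$, and with rewinding loss $\RWLoss = \Theta(\SuccProb)$. The conclusion gives a set $W$ of accepting transcripts $(\Transcript,s_i,z_i)$ with distinct $s_i$, each $s_i$ drawn from the uniformly random sampled $\vec r$, and $\Expectation[|W|] \ge n(\SuccProb - \RWLoss) - n^2/|\RSet_m| - \negl(\secp)$. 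Choosing constants so this is $\ge \Omega(n\SuccProb)$, and noting $n^2/|\RSet_m| = \negl$ because the PCP randomness space $\RSet_m$ is superpolynomial (or handling the small-$\RSet$ case separately), I get (by Markov again) at least $6\ProofLength\log(2|\Alphabet|)$ accepting transcripts with distinct, uniformly random challenges, except with probability $\negl(\secp)$ conditioned on $\DMatrix$ being good.

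Third, I would run the classical extraction procedure from \cref{subsec:kilian}: parse each $z_i = (\VCval_i,\VCauth_i)$ and write the recovered values into a PCP string $\PCPProof$. The classical analysis (which I would reproduce or cite) shows that with probability $1-\negl(\secp)$ the resulting $\PCPProof$ satisfies $\Pr[\PCPVerifier^{\PCPProof}(\Instance)] \ge \SuccProb/4$. There is one subtlety I must check: in the classical argument the distinct challenges $r_i$ must be "sufficiently random" — \cref{theorem:quantum-forking} guarantees the $s_i$ are a subset of a uniformly random tuple $\vec r$, and one has to argue that conditioning on "$s_i \in W$" does not bias their distribution too badly; the standard way around this is to observe that each of the $n$ game rounds is played independently with a fresh uniform $r_j$, and the subset that ends up winning is determined by measurement outcomes that commute appropriately — so the collection of winning challenges is distributed as a random subset of i.i.d. uniform draws, which suffices for the PCP-soundness amplification union bound. (Collapsing is what guarantees that additionally recording $z_i$ does not disturb the extractor's behavior relative to this analysis.) For the knowledge statement, instead of (or in addition to) bounding $\Pr[\PCPVerifier^{\PCPProof}(\Instance)]$, I feed the recovered accepting PCP local views into the PCP proof-of-knowledge extractor to obtain a witness $\Witness$; a standard "decoding" argument — run the PCP knowledge extractor on $\PCPProof$, whose PCP-verifier acceptance probability is bounded below — yields $(\Instance,\Witness)\in\Relation$ with probability $\Omega(\SuccProb - \KnowledgeError)$, matching \cref{def:post-quantum-knowledge}. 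Finally I would assemble the running time: the extractor's cost is dominated by the $\Extract$ call, which by \cref{theorem:quantum-forking} is $\poly(\secp)\cdot \tilde O(n/\RWLoss) = \poly(\secp,1/\SuccProb)$, i.e. expected polynomial time (for soundness, where $\SuccProb$ is inverse-polynomial; for knowledge soundness, polynomial in $\secp$ and the input $1^{1/\Advantage}$), and succinctness is inherited directly from the underlying PCP and vector commitment.

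I expect the main obstacle to be the third step's subtlety about the \emph{distribution} of the extracted challenges: \cref{theorem:quantum-forking} delivers accepting transcripts whose challenges form a random subset of a uniform tuple, but the classical Kilian extraction analysis (as presented in \cref{subsec:kilian}) is phrased for the idealized loop that resamples fresh uniform $r$'s until $k$ accepting ones are found. Bridging these requires either (a) re-deriving the PCP-soundness amplification to work with "a random subset of $n$ i.i.d. uniform samples, conditioned on the subset being the winning ones," which is fine because winning is a per-round event, or (b) showing that the forking lemma's output is statistically close to that idealized loop. Option (a) is cleaner and is the route I would take, but it demands carefully tracking independence across the $n$ game rounds inside the extractor of \cref{theorem:quantum-forking} — the fact that $p_{i+1}$ is independent of $r_{i+1}$ (noted in the footnote to \cref{claim:p_i}) is exactly the lever that makes this work, and I would need to expose it at the right level of abstraction. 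A secondary, more mechanical obstacle is handling the edge case $|\RSet_m| \le \poly(\secp)$ (small PCP-randomness space), where the $n^2/|\RSet_m|$ term is not negligible; in that regime one instead amplifies the PCP to have a large randomness space before applying Kilian, or argues directly that a polynomial number of \emph{distinct} challenges already covers the whole space. Everything else — the collapsing reduction in step one and the bookkeeping in assembling running time and success probability — is routine given the machinery already developed in the excerpt.
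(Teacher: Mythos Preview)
Your proposal is essentially correct and follows the same high-level route as the paper: show $\GetKilian{\PCPSystem}{\VCScheme}$ is collapsing, invoke the forking lemma (\cref{theorem:quantum-forking}) to harvest many accepting transcripts on uniformly random challenges, reconstruct a PCP string, and appeal to PCP soundness/knowledge. Two points of comparison are worth flagging.

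First, you omit a step the paper needs: after extracting the transcripts $\{(s_i,z_i)\}$, the paper explicitly checks that the opened values are \emph{consistent} (no PCP index is opened to two different values across transcripts), and aborts otherwise; the abort probability is bounded by position binding of $\VCScheme$. This check is not cosmetic---without it the reconstructed string $\PCPProof$ need not agree with every extracted local view, and then you cannot conclude that $\PCPVerifier^{\PCPProof}(\Instance;s_i)=1$ for each $s_i$, which is exactly what the next step requires.

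Second, you are overcomplicating your ``main obstacle.'' You worry about the conditional distribution of the winning challenges and propose to track independence across rounds inside the forking extractor (invoking the footnote to \cref{claim:p_i}). The paper sidesteps all of this with a containment-plus-union-bound argument: fix any $\PCPProof^*\in\Alphabet^{\ProofLength}$ with $\Pr[\PCPVerifier^{\PCPProof^*}(\Instance)=1]<k/(2n)$. If the extractor outputs $\PCPProof^*$, then (given the consistency check) $\PCPProof^*$ is accepting on every challenge in $W\subseteq\{r_1,\ldots,r_n\}$, hence on at least $k$ of the $r_i$'s. Since $\vec r$ is sampled uniformly and \emph{independently of everything the adversary or extractor does}, this last event depends only on the marginal of $\vec r$ and is bounded by a multiplicative Chernoff bound as $(2|\Alphabet|)^{-\ProofLength}$; a union bound over all $\PCPProof^*$ finishes. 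No analysis of what happens inside $\Extract$ is needed. (This also explains why your parameter choice should be $n=\Theta(\ProofLength\log|\Alphabet|/\Advantage)$ with no $1/|\PCPQuerySet|$ factor: the bound counts accepting \emph{challenges}, not covered positions.) Your separate Markov step on the intermediate state is likewise unnecessary, since the expectation in \cref{theorem:quantum-forking} is already over the sampling of $(\Transcript,\DMatrix)$.
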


\begin{corollary}
\label{corollary:from-lwe}
Assuming the post-quantum hardness of LWE, there exist post-quantum succinct arguments for $\NP$.
\end{corollary}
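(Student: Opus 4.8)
The plan is to derive \Cref{corollary:from-lwe} by combining \Cref{theorem:kilian} with standard instantiations of its two ingredients: a PCP for $\NP$ with negligible soundness error, and a collapsing vector commitment, the latter obtainable from QLWE. First I would fix the PCP. By the PCP theorem, every $\NP$ relation has a PCP verifier with constant soundness error, $O(\log |\Instance|)$ randomness, $O(1)$ queries, and $\poly(|\Instance|)$ proof length over a binary alphabet. Sequentially repeating this verifier $\omega(\log \secp)$ times with fresh randomness yields a PCP system $\PCPSystem$ for the target relation $\Relation$ with soundness error $\negl(\secp)$, while keeping the proof length $\poly(|\Instance|)$ and the query and randomness complexities $\poly(\secp,\log|\Instance|)$. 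Starting instead from a PCP of knowledge, the same amplification drives the knowledge error to negligible, which will yield the ``argument of knowledge'' half of the corollary.

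Second I would instantiate the hash function from QLWE. Post-quantum hardness of LWE yields post-quantum lossy functions, and by~\cite{Unruh16-asiacrypt} any post-quantum lossy function family yields a collapsing hash function; adjusting the output length by truncation (resp.\ by concatenating independent instances) gives a collapsing family $\HashFamily$ with input size $\hinlen(\secp)$ and output size $\houtlen(\secp) = \hinlen(\secp)/2$, exactly as required by \Cref{construction:merkle-tree}. Then \Cref{claim:merkle-trees-are-collapsing} shows that $\VCScheme \eqdef \GetMerkle{\HashFamily}$ is a collapsing vector commitment over the alphabet $\Bits^{\hinlen(\secp)}$, which we take wide enough to encode the PCP symbols.

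Finally I would apply \Cref{theorem:kilian} to $\GetKilian{\PCPSystem}{\VCScheme}$: since $\PCPSystem$ has negligible soundness error and $\VCScheme$ is a collapsing vector commitment, $\GetKilian{\PCPSystem}{\VCScheme}$ is a post-quantum succinct argument for $\Relation$, and a post-quantum succinct argument of knowledge when $\PCPSystem$ is additionally a PCP of knowledge. Succinctness follows from the parameters above: the prover sends a single Merkle root together with authentication paths for the $\poly(\secp,\log|\Instance|)$ queried positions, of total size $\poly(\secp,\log \ProofLength) = \poly(\secp,\log|\Instance|)$, so the overall communication is $\poly(\secp,\log|\Instance|)$. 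Since every $\NP$ relation admits a PCP of the required form, this establishes the corollary for all of $\NP$.

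The only point I expect to require care --- bookkeeping rather than a genuine obstacle --- is parameter matching: that the QLWE-based collapsing hash can be arranged with the precise $2$-to-$1$ compression assumed in \Cref{construction:merkle-tree}, and that the soundness-amplified PCP retains $\poly(|\Instance|)$ proof length and $\poly(\secp,\log|\Instance|)$ query complexity so that succinctness survives. Everything else is a direct chaining of \Cref{theorem:kilian}, \Cref{claim:merkle-trees-are-collapsing}, and the cited construction of collapsing hash functions from QLWE.
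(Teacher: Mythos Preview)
Your proposal is correct and follows essentially the same approach as the paper: chain QLWE $\to$ collapsing hash functions (via \cite{Unruh16-asiacrypt}) $\to$ collapsing vector commitments via Merkle trees (\Cref{claim:merkle-trees-are-collapsing}) $\to$ \Cref{theorem:kilian} with a suitably efficient PCP for $\NP$. The paper's own proof is a three-sentence version of exactly this chain (citing \cite{BabaiFLS91} for the PCP rather than amplifying a constant-soundness PCP, but that is an inessential difference), and your additional bookkeeping on parameters and succinctness is accurate filler that the paper leaves implicit.
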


\begin{proof}
Collapsing vector commitments can be obtained from collapsing hash functions (\cref{claim:merkle-trees-are-collapsing}), which in turn exist based on the post-quantum hardness of LWE \cite{Unruh16-asiacrypt}. The corollary follows from \Cref{theorem:kilian} applied to a PCP for $\NP$ with suitable efficiency (e.g., \cite{BabaiFLS91}).
\end{proof}

The rest of this section is organized as follows:
in \Cref{sec:pcps} we recall the definition of a PCP;
in \cref{sec:kilian-protocol} we describe Kilian's protocol and prove that $\GetKilian{\PCPSystem}{\VCScheme}$ is collapsing if $\VCScheme$ is a collapsing vector commitment;
in \cref{sec:kilian-proof} we prove \cref{theorem:kilian}.

\subsection{Probabilistically checkable proofs}
\label{sec:pcps}

A \emph{probabilistically checkable proof} (PCP) for a relation $\Relation$ with soundness error $\PCPError$, alphabet $\Alphabet$, and proof length $\ProofLength$, is a pair of polynomial-time algorithms $\PCPSystem=\PCPTuple$ satisfying the following.
\begin{itemize}

  \item \textbf{Completeness.}
For every instance-witness pair $(\Instance,\Witness) \in \Relation$, $\PCPProver(\Instance,\Witness)$ outputs a proof string $\PCPProof \colon [\ProofLength] \to \Alphabet$ such that $\Pr[\PCPVerifier^{\PCPProof}(1^{\secp},\Instance)=1]=1$.

  \item \textbf{Soundness.}
For every instance $\Instance \not\in \Language(\Relation)$ and proof string $\PCPProof \colon [\ProofLength] \to \Alphabet$, $\Pr[\PCPVerifier^{\PCPProof}(1^{\secp},\Instance)=1] \leq \PCPError$.

\end{itemize}
Probabilities are taken over the randomness $\PCPVRandomness$ of $\PCPVerifier$. The \emph{randomness complexity} $\PCPRandComplexity$ is the number of random bits used by $\PCPVerifier$, and the \emph{query complexity} $\PCPQuery$ is the number of locations of $\PCPProof$ read by $\PCPVerifier$. The quantities $\PCPError,\ProofLength,\Alphabet,\PCPRandComplexity,\PCPQuery$ can be functions of the instance size $|\Instance|$.

We also consider PCPs that achieve a \emph{proof of knowledge} property, which is a strengthening of the soundness property.
\begin{itemize}

  \item \textbf{Proof of knowledge.}
$\PCPSystem$ has knowledge error $\PCPKnowledge$ if there exists a polynomial-time extractor algorithm $\PCPExtractor$ such that, for every instance $\Instance$ and proof string $\PCPProof \colon [\ProofLength] \to \Alphabet$, if $\Pr[\PCPVerifier^{\PCPProof}(\Instance)=1]>\PCPKnowledge$ then $\PCPExtractor(\Instance, \PCPProof)$ outputs $\Witness$ such that $(\Instance,\Witness) \in \Relation$.

\end{itemize}

\subsection{Kilian's protocol}
\label{sec:kilian-protocol}

Kilian's protocol \cite{Kilian92} is a public-coin four-message interactive argument $\ARGSystem=(\ARGProver,\ARGVerifier)$ obtained by combining two ingredients:
\begin{itemize}[noitemsep]
  \item a PCP system $\PCPSystem=\PCPTuple$ with alphabet $\Alphabet$, proof length $\ProofLength$, randomness complexity $\PCPRandComplexity$, and query complexity $\PCPQuery$; and
  \item a VC scheme $\VCScheme=\VCTuple$ over alphabet $\Alphabet$.
\end{itemize}
The construction of the interactive argument, which we denote by $(\ARGProver,\ARGVerifier) \eqdef \GetKilian{\PCPSystem}{\VCScheme}$, is specified below. The argument prover $\ARGProver$ and argument verifier $\ARGVerifier$ receive as input a security parameter $\secp$ (in unary) and an instance $\Instance$, while $\ARGProver$ additionally receives as input a witness $\Witness$ for $\Instance$.
\begin{enumerate}[noitemsep]

  \item $\ARGVerifier$ samples a commitment key $\ck \gets \VCGen(\secp,\ProofLength)$ and sends $\ck$ to $\ARGProver$.

  \item $\ARGProver$ computes a PCP string $\PCPProof \gets \PCPProver(\Instance,\Witness)$, computes a commitment to it $(\VCcm,\VCaux) \gets \VCCommit(\ck,\PCPProof)$, and sends $\VCcm$ to $\ARGVerifier$. 

  \item $\ARGVerifier$ samples PCP randomness $\PCPVRandomness \gets \Bits^{\PCPRandComplexity}$ and sends $\PCPVRandomness$ to $\ARGProver$.

  \item $\ARGProver$ runs the PCP verifier $\PCPVerifier^{\PCPProof}(\Instance;\PCPVRandomness)$ to deduce a set $\PCPQuerySet \subseteq [\ProofLength]$ of queries made by $\PCPVerifier$, computes an opening proof $\VCauth \gets \VCOpen(\ck,\VCaux,\PCPQuerySet)$, and sends $(\PCPProof[\PCPQuerySet], \VCauth)$ to $\ARGVerifier$.

  \item $\ARGVerifier$ checks that $\PCPVerifier(\Instance;\PCPVRandomness)$ accepts when answering its PCP queries via $\PCPProof[\PCPQuerySet] \in \Alphabet^{\PCPQuerySet}$ and that $\VCVerify(\ck,\VCcm, \PCPQuerySet, \PCPProof[\PCPQuerySet], \VCauth)=1$. (If $\PCPVerifier$ makes any query outside of $\PCPQuerySet$ then reject.)

\end{enumerate}

We show that $\GetKilian{\PCPSystem}{\VCScheme}$ is a collapsing protocol when $\VCScheme$ is collapsing.
\begin{claim}
\label{claim:kilian-collapsing}
    If $\VCScheme$ is a collapsing vector commitment then for all $\PCPSystem$, $\GetKilian{\PCPSystem}{\VCScheme}$ is a collapsing protocol.
\end{claim}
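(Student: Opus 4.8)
The plan is to reduce the collapsing property of $\GetKilian{\PCPSystem}{\VCScheme}$ directly to the collapsing property of $\VCScheme$ (Definition~\ref{def:vc-collapsing}). The key observation is that the last prover message in Kilian's protocol is $z = (\PCPProof[\PCPQuerySet],\VCauth)$, and the verifier's acceptance predicate $\ARGVerifier(\Transcript',\RegResp_{\NumRounds})$ factors as a conjunction of (i) the PCP check that $\PCPVerifier(\Instance;\PCPVRandomness)$ accepts given the claimed symbols $\PCPProof[\PCPQuerySet]$, and (ii) the vector-commitment check $\VCVerify(\ck,\VCcm,\PCPQuerySet,\PCPProof[\PCPQuerySet],\VCauth) = 1$. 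Crucially, the query set $\PCPQuerySet$ is a deterministic function of $(\Instance,\PCPVRandomness)$ — which are fixed in the partial transcript $\Transcript'$ — so from the point of view of the last round the quantity $\PCPQuerySet$ is a fixed classical string, exactly matching the structure of the VC collapsing experiment where $\Adversary$ announces $(\VCcm,\VCQuerySet)$ classically before submitting a superposition over $(\VCval,\VCauth)$.

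First I would fix an arbitrary polynomial-size interactive quantum adversary $\Malicious{\ARGProver}$ and polynomial-size distinguisher $\Adversary$ for the protocol collapsing experiment $\ProtocolCollapseExp(b,\Malicious{\ARGProver},\Adversary)$, and construct from them a pair $(\Adversary',\Adversary'')$ attacking $\VCCollapsingExp{b}{\secp}{\Alphabet}{\ProofLength}{\cdot}$. The reduction $\Adversary'$ receives $\ck$ from the VC challenger, then internally simulates $\Interact{\Malicious{\ARGProver}}{\ARGVerifier}$ through the first three messages: it plays the role of $\ARGVerifier$ by sending $\ck$ as the first message, receiving $\VCcm$, sampling $\PCPVRandomness \gets \Bits^{\PCPRandComplexity}$ and sending it, and then letting $\Malicious{\ARGProver}$ apply its final-round unitary $\UPrvRound{\NumRounds}$ to produce the response register $\RegResp_{\NumRounds}$, which we parse into subregisters $(\RegV,\RegO)$ holding $\PCPProof[\PCPQuerySet]$ and $\VCauth$ respectively. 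The reduction then computes $\PCPQuerySet$ from $(\Instance,\PCPVRandomness)$, coherently evaluates the PCP predicate $\PCPVerifier(\Instance;\PCPVRandomness)$ on the $\RegV$ register into a fresh ancilla, measures that ancilla, uncomputes; if it returns $0$ the reduction aborts (which it can do exactly as in the protocol experiment — both experiments abort when the joint verifier predicate is $0$, and the PCP check is one conjunct). Otherwise $\Adversary'$ forwards $(\VCcm,\PCPQuerySet)$ and the state on $(\RegV,\RegO)$ to the VC challenger. The VC challenger performs the $\VCVerify$ check (aborting the whole experiment if it fails — again matching the other conjunct of $\ARGVerifier$), optionally measures $(\RegV,\RegO)$ depending on the bit $b$, and returns the registers. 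Finally $\Adversary''$ reassembles $\RegResp_{\NumRounds}$ and all other registers and runs the protocol distinguisher $\Adversary$, outputting its bit.

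The remaining step is a routine check that this reduction is faithful: one verifies that conditioned on neither experiment aborting, the joint state handed to $\Adversary$ in $\ProtocolCollapseExp(b,\Malicious{\ARGProver},\Adversary)$ is \emph{identical} to the state handed to $\Adversary$ by the reduction inside $\VCCollapsingExp{b}{\secp}{\Alphabet}{\ProofLength}{(\Adversary',\Adversary'')}$, for both $b=0$ and $b=1$. The one subtlety worth spelling out is the decomposition of the abort condition: $\ARGVerifier$ accepts iff the PCP predicate \emph{and} the VC predicate both hold, so we must apply the PCP measurement (inside $\Adversary'$) and the VC measurement (inside the challenger) as two successive binary measurements whose conjunction reproduces the single binary measurement of $\ARGVerifier(\Transcript',\RegResp_{\NumRounds})$ performed in the protocol experiment; since both underlying projectors are diagonal in the computational basis on $\RegResp_{\NumRounds}$ they commute, so the order and grouping are immaterial, and the post-measurement (non-aborting) state is the same. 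Another minor point is that the singleton-vs-general query set and the ``compressed'' opening proof $\VCauth$ (where overlapping authentication paths are merged) require that $\VCVerify$, $\VCOpen$ are applied exactly as the VC scheme specifies — which they are, since Kilian's protocol invokes the VC algorithms verbatim. Matching the bit $b$: when $b=0$ neither experiment disturbs $\RegResp_{\NumRounds}$; when $b=1$ the protocol experiment measures $\RegResp_{\NumRounds} = (\RegV,\RegO)$ in the computational basis and so does the VC challenger, so the states again coincide. Hence $|\Pr[\ProtocolCollapseExp(0,\cdot) = 1] - \Pr[\ProtocolCollapseExp(1,\cdot)=1]| = |\Pr[\VCCollapsingExp{0}{\cdot}=1] - \Pr[\VCCollapsingExp{1}{\cdot}=1]| \leq \negl(\secp)$ by collapsing of $\VCScheme$.

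I do not expect a serious obstacle here; the proof is essentially bookkeeping. The only place one must be careful is the argument that the two-stage abort (PCP check inside the reduction, VC check inside the challenger) exactly reproduces the one-stage abort of $\ARGVerifier$ and does not leak which stage failed — this is fine because in the failing case \emph{both} experiments simply output $\bot$ and the distinguisher never runs, so no leakage is observable. I would state the claim's proof at roughly this level of detail, since (as the paper notes for the analogous Merkle-tree statement) it is ``a standard application of the collapsing security property.''
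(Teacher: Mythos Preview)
Your proposal is correct and matches the paper's approach: build a VC collapsing adversary that simulates the first three messages, forwards $(\VCcm,\PCPQuerySet)$ together with the response register to the VC challenger, and hands the returned state to the protocol distinguisher. The paper's proof is three lines and does not spell out your decomposition of the verifier's acceptance check into the commuting PCP and $\VCVerify$ conjuncts; your explicit treatment of that point is a detail the paper glosses over.
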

\begin{proof}
    Consider an adversary $\Adversary$ for $\ProtocolCollapseExp$ for $\Kilian$. We construct an $\Adversary'$ for $\mathsf{VCCollapseExp}$ with the same advantage as follows:
    \begin{enumerate}[nolistsep]
        \item Obtain $\ck$ from the challenger and send it to $\Adversary$. Measure the response $\VCcm$.
        \item Choose $r \gets \Bits^{\PCPRandComplexity}$ and send it to $\Adversary$. Send $(\VCcm,Q)$ and the (unmeasured) state on $\RegResp_2$ to the challenger, where $Q$ is the query set corresponding to $r$.
        \item Receive a state on $\RegResp_2$ and pass it to $\Adversary$. Return the output of $\Adversary$. \qedhere
    \end{enumerate}
\end{proof}

\subsection{Proof of \cref{theorem:kilian}}
\label{sec:kilian-proof}

Since Kilian's protocol instantiated with a collapsing vector commitment $\VCScheme$ is collapsing (\cref{claim:kilian-collapsing}), there exists an algorithm $\Extract^{\Malicious{\ARGProver}}$ making black-box queries to any malicious prover $\Malicious{\ARGProver}$ for Kilian's protocol that satisfies the guarantees of  \cref{theorem:quantum-forking}. We use $\Extract^{\Malicious{\ARGProver}}$ to implement an extractor $\ARGExtractor^{\Malicious{\ARGProver}}$ that makes black-box queries to $\Malicious{\ARGProver}$ and outputs a PCP string $\PCPProof \in \Alphabet^{\ProofLength}$.
\begin{enumerate}[noitemsep]
\item[] $\ARGExtractor^{\Malicious{\ARGProver}(x;\ket{\psi})}(1^{\secp},x,1^{1/\Advantage})$:
\item Sample a commitment key $\ck \gets \VCGen(\secp,\ProofLength)$ and query $\Malicious{\ARGProver}$ on $\ck$ to obtain a commitment $\VCcm$. Let $\tau \eqdef (\ck,\VCcm)$, and let $\DMatrix$ denote the intermediate state of $\Malicious{\ARGProver}$.
\item \label[step]{step:fork} Set $n \eqdef 60 \ProofLength \cdot \log(2|\Alphabet|)/\Advantage$, sample $\vec{r} = (r_1,\ldots,r_n)$ uniformly at random from $(\{ 0,1 \}^{\PCPRandComplexity})^n$, and run $(\tau,(r_1,\LastMsg_1),\ldots,(r_k,\LastMsg_k)) \gets \Extract^{\Malicious{\ARGProver}}(1^{\secp},1^{3/\Advantage},\Transcript,\vec{r},\DMatrix)$. Abort if $k < 6\ProofLength \cdot \log(2|\Alphabet|)$.
\item Parse each $\LastMsg_i$ as  $(\PCPProof[\PCPQuerySet_{r_i}], \VCauth)$, where $\PCPQuerySet_{r_i}$ is defined to be the set of indices that $\PCPVerifier(\Instance)$ queries on random coins $r_i$.
\item\label[step]{step:consistency-check} Check that $\{(\PCPQuerySet_{r_i},\PCPProof[\PCPQuerySet_{r_i}]\}_{i \in [k]}$ are \emph{consistent}, meaning that there does not exist a PCP index $t$ with two different values. If this check fails, abort and output $\bot$.
\item Output a $\PCPProof$ obtained by combining the answers given in $\{(\PCPQuerySet_{r_i},\PCPProof[\PCPQuerySet_{r_i}])\}_{i \in [k]}$ and filling in any unanswered indices arbitrarily.
\end{enumerate}

\begin{claim}
\label{claim:kilian-low-abort-prob}
    $\Pr[\bot \gets \ARGExtractor^{\Malicious{\ARGProver}}] \leq 1- \Omega(\varepsilon) + \negl(\secp)$
\end{claim}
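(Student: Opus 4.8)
The plan is to show that the extractor $\ARGExtractor$ aborts with probability at most $1 - \Omega(\varepsilon) + \negl(\secp)$, where there are exactly two places it can abort: in \cref{step:fork} if $k < 6\ProofLength\log(2|\Alphabet|)$ (too few accepting transcripts recovered), and in \cref{step:consistency-check} if the recovered transcripts are mutually inconsistent on some PCP index. I would bound each of these separately and combine by a union bound.

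First I would control \cref{step:fork}. Let $\SuccProb \geq \Advantage$ be the success probability of $\Malicious{\ARGProver}$ against $\ARGVerifier$; since $\ARGVerifier$ is public-coin and the protocol is four-message, the intermediate state $\DMatrix$ after the prefix $\Transcript = (\ck,\VCcm)$ has some success probability $p_\Transcript$ (the probability that $\Malicious{\ARGProver}(\DMatrix,\cdot)$ produces an accepting last message on a random $r$), and $\Expectation_\Transcript[p_\Transcript] = \SuccProb \geq \Advantage$. By Markov, $p_\Transcript \geq \Advantage/2$ with probability at least $\Advantage/2$ over the first round. Condition on this good event. Now invoke \cref{theorem:quantum-forking} with $n = 60\ProofLength\log(2|\Alphabet|)/\Advantage$ and $\RWLoss = \Advantage/3$ (matching the $1^{3/\Advantage}$ passed to $\Extract$): the expected size of $W$ is at least $n(p_\Transcript - \RWLoss) - n^2/|R| - \negl(\secp) \geq n(\Advantage/2 - \Advantage/3) - n^2/|R| - \negl(\secp) = n\Advantage/6 - n^2/|R| - \negl(\secp)$. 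For any PCP with useful parameters the challenge space $R = \Bits^{\PCPRandComplexity}$ is super-polynomially large compared to $n$, so $n^2/|R| = \negl(\secp)$ and $\Expectation[|W|] \geq n\Advantage/6 - \negl(\secp) = 10\ProofLength\log(2|\Alphabet|) - \negl(\secp)$. Since $|W| \leq n$ always, a reverse-Markov argument (i.e., $\Pr[|W| \geq c] \geq (\Expectation[|W|] - c)/(n - c)$ for $c < \Expectation[|W|]$) with $c = 6\ProofLength\log(2|\Alphabet|)$ gives $\Pr[k \geq 6\ProofLength\log(2|\Alphabet|)] \geq \Omega(\Advantage) - \negl(\secp)$; here I use that $10\ProofLength\log(2|\Alphabet|) - 6\ProofLength\log(2|\Alphabet|) = 4\ProofLength\log(2|\Alphabet|)$ is a constant fraction of $n\Advantage$-scale quantities and $c/n = O(\Advantage)$. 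Combining with the probability $\Advantage/2$ that the first round was good, \cref{step:fork} does \emph{not} abort with probability $\Omega(\Advantage) - \negl(\secp)$.

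Next I would argue that, conditioned on \cref{step:fork} not aborting, \cref{step:consistency-check} aborts only with negligible probability; this is where the collapsing property is essential and is the main obstacle. By \cref{theorem:quantum-forking}, every returned pair $(r_i, \LastMsg_i)$ satisfies $\ARGVerifier(\Transcript, r_i, \LastMsg_i) = 1$, meaning $\LastMsg_i = (\PCPProof[\PCPQuerySet_{r_i}], \VCauth_i)$ is a valid opening of the fixed commitment $\VCcm$ (under the fixed key $\ck$) at locations $\PCPQuerySet_{r_i}$ to the values written in $\LastMsg_i$, \emph{and} these values are accepted by $\PCPVerifier(\Instance; r_i)$. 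An inconsistency in \cref{step:consistency-check} would mean two indices $i, i'$ and a location $t \in \PCPQuerySet_{r_i} \cap \PCPQuerySet_{r_{i'}}$ opened to distinct values — i.e., a position-binding violation for the vector commitment $\VCScheme$ relative to $\ck$. Since $\VCScheme$ is a collapsing vector commitment and Merkle trees are position binding (\cref{claim:merkle-trees-are-position-binding}, via collision resistance implied by collapsing), such a break occurs only with negligible probability — but this requires care: the transcripts are produced by $\Extract$, which internally performs the measurement of $\RegResp_m$ in \cref{step:ext-measure-resp}, and one must argue this does not inflate the inconsistency probability. The key point is that $\Extract$ makes only black-box use of $\Malicious{\ARGProver}$ and runs in expected polynomial time, so the whole composite procedure ($\VCGen$, run $\Malicious{\ARGProver}$ on the prefix, run $\Extract$, then read off two opened values at a common location) is an expected-polynomial-time quantum adversary against position binding of $\VCScheme$; by a standard Markov truncation of the running time this yields a strict polynomial-time adversary with the same (up to negligible) success probability, contradicting position binding unless the inconsistency probability is negligible. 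I do \emph{not} need collapsing directly for this step — position binding suffices once the transcripts are fixed classical strings — but collapsing is what guarantees (via \cref{theorem:quantum-forking}) that $\Extract$ produces enough transcripts in the first place and that the internal response measurements are undetectable.

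Finally I would combine the two bounds: $\Pr[\bot \gets \ARGExtractor] \leq \Pr[\text{abort in \cref{step:fork}}] + \Pr[\text{abort in \cref{step:consistency-check}} \mid \text{no abort in \cref{step:fork}}] \leq (1 - \Omega(\Advantage)) + \negl(\secp)$, which is exactly the claimed $1 - \Omega(\varepsilon) + \negl(\secp)$. The main obstacle, as noted, is the second step: correctly reducing consistency to position binding of the vector commitment while accounting for the fact that the transcripts come out of the rewinding procedure $\Extract$ (an expected-polynomial-time, black-box, internally-measuring algorithm) rather than from a single clean run of the adversary — this is a routine but delicate simulation argument that must be spelled out carefully.
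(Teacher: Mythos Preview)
Your overall decomposition matches the paper's: bound the abort in \cref{step:fork} via \cref{theorem:quantum-forking} plus a Markov-type argument, and bound the abort in \cref{step:consistency-check} via position binding of $\VCScheme$. The second part is fine and indeed more careful than the paper's one-line appeal to position binding.

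However, your \cref{step:fork} analysis has a quantitative gap. You first apply Markov over the prefix to condition on $p_\Transcript \geq \Advantage/2$ (probability $\geq \Advantage/2$), and then, \emph{within that event}, apply reverse Markov to get $\Pr[k \geq 6\ProofLength\log(2|\Alphabet|)] \geq \Omega(\Advantage)$. Multiplying, the unconditional non-abort probability is only $\Omega(\Advantage^2)$, not the $\Omega(\Advantage)$ you claim when ``combining.'' This does not establish the stated bound $1-\Omega(\Advantage)+\negl(\secp)$.

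The fix is simply to drop the first Markov step: \cref{theorem:quantum-forking} already averages over the prefix $(\Transcript,\DMatrix)$, so with $\SuccProb \geq \Advantage$ and $\RWLoss = \Advantage/3$ you get directly $\Expectation[k] \geq n(\Advantage - \Advantage/3) - n^2/|R| - \negl(\secp) \geq 40\ProofLength\log(2|\Alphabet|) - \negl(\secp)$ over all randomness. A single reverse-Markov with $c = 6\ProofLength\log(2|\Alphabet|)$ and upper bound $n$ then gives $\Pr[k \geq c] \geq \Omega(\Advantage)$, which is exactly what the paper does.
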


\begin{proof}
    We first bound the probability that $\ARGExtractor$ aborts in \cref{step:fork}. Define $\SuccProb_{\ck}$ to be the probability that $\Malicious{\ARGProver}$ wins when $\ck$ is sampled in the first round; note that $\Expectation_{\ck}[\SuccProb_{ck}] \geq \Advantage$. By \cref{theorem:quantum-forking}, $\Expectation[k|\ck] \geq \SuccProb_{\ck} - \gamma \cdot \Advantage$ for some $\gamma < 1$. Hence by Markov's inequality,
    \begin{equation*}
        \Pr[k < 6\ProofLength \cdot \log(2|\Alphabet|)] = 1-\Omega(\Advantage) \enspace.
    \end{equation*}
    By the position-binding property of $\VCScheme$, the probability that $\ARGExtractor^{\Malicious{\ARGProver}}$ aborts in \cref{step:consistency-check} is $\negl(\secp)$.
    
    It follows that $\Pr[\PCPProof \gets \ARGExtractor^{\Malicious{\ARGProver}}] \geq \Omega(\varepsilon) - \negl(\secp)$. 
\end{proof}

For a PCP $\PCPProof$, let 
$\WIN{\PCPVerifier}{\Instance}{\PCPProof} \coloneqq \Pr[\PCPVerifier^{\PCPProof}(\Instance) =1].$ We prove that conditioned on the event that $\PCPProof \gets \ARGExtractor^{\Malicious{\ARGProver}}$, we have $\WIN{\PCPVerifier}{\Instance}{\PCPProof} \geq k/(2n)$ with overwhelming probability.

\begin{claim}
\label{claim:kilian-pcp-good}
$\Pr[(\PCPProof \gets \ARGExtractor^{\Malicious{\ARGProver}}) \wedge (\PCPProof \neq \bot) \wedge  (\WIN{\PCPVerifier}{\Instance}{\PCPProof} < k/(2n))] \leq \negl(\secp)$.
\end{claim}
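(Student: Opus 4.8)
$\Pr[(\PCPProof \gets \ARGExtractor^{\Malicious{\ARGProver}}) \wedge (\PCPProof \neq \bot) \wedge (\WIN{\PCPVerifier}{\Instance}{\PCPProof} < k/(2n))] \leq \negl(\secp)$.

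The plan is to prove the claim by a union bound over all candidate PCP strings $\PCPProof^\ast \in \Alphabet^{\ProofLength}$, exploiting the fact that the only randomness the extraction actually needs is the classical vector $\vec{r} = (r_1,\dots,r_n)$ sampled in \cref{step:fork}, which is drawn uniformly and \emph{independently} of any fixed target string. Concretely, if $\ARGExtractor^{\Malicious{\ARGProver}}$ outputs a particular string $\sigma$ (and does not output $\bot$), then $\PCPVerifier^{\sigma}$ must accept on every one of the $k$ distinct challenges for which $\Extract$ recorded an accepting transcript; so $\sigma$ can be ``bad'' only if an unlikely number of the $r_j$'s happen to be accepting for the fixed $\sigma$, and this probability (for a fixed $\sigma$) is small enough to survive a union bound over all $|\Alphabet|^{\ProofLength}$ strings precisely because the recording threshold $k \geq 6\ProofLength\log(2|\Alphabet|)$ was chosen to dominate $\ProofLength\log|\Alphabet|$.

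First I would establish the structural consequence of \cref{theorem:quantum-forking} together with the definition of $\ARGVerifier$ for $\GetKilian{\PCPSystem}{\VCScheme}$. Suppose a run of $\ARGExtractor^{\Malicious{\ARGProver}}$ does not abort in \cref{step:fork} and does not output $\bot$ in \cref{step:consistency-check}; write $\PCPProof = \sigma$ for the string it outputs and $(s_1,\LastMsg_1),\dots,(s_k,\LastMsg_k)$ for the pairs returned by $\Extract$. By \cref{theorem:quantum-forking} the $s_i$ are distinct, each $s_i = r_j$ for some $j \in [n]$, $k \geq 6\ProofLength\log(2|\Alphabet|)$, and $\ARGVerifier(\Transcript,s_i,\LastMsg_i) = 1$ for all $i$. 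Unfolding what $\ARGVerifier$ accepting means (the PCP verifier $\PCPVerifier(\Instance;s_i)$ accepts when its queries on $\PCPQuerySet_{s_i}$ are answered by the values in $\LastMsg_i$) and using that the consistency check passed (so that the assembled $\sigma$ agrees with every $\LastMsg_i$ on $\PCPQuerySet_{s_i}$), this gives $\PCPVerifier^{\sigma}(\Instance;s_i) = 1$ for all $i \in [k]$. Hence, writing $\mathsf{cnt}_\sigma \eqdef |\{\, j \in [n] : \PCPVerifier^{\sigma}(\Instance;r_j) = 1 \,\}|$, we have $\mathsf{cnt}_\sigma \geq k \geq 6\ProofLength\log(2|\Alphabet|)$; and on the bad event of the claim we additionally have $\WIN{\PCPVerifier}{\Instance}{\sigma} < k/(2n) \leq \mathsf{cnt}_\sigma/(2n)$.

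Next I would apply a concentration bound separately for each fixed $\sigma$. Since the $r_j$ are i.i.d.\ uniform over $\{0,1\}^{\PCPRandComplexity}$ and $\sigma$ is fixed, $\mathsf{cnt}_\sigma \sim \Bin(n, w_\sigma)$ with $w_\sigma \eqdef \WIN{\PCPVerifier}{\Instance}{\sigma}$. The bad event for $\sigma$ is therefore contained in $\{\, \mathsf{cnt}_\sigma \geq \max(6\ProofLength\log(2|\Alphabet|),\, 2 n w_\sigma) \,\}$, and by the multiplicative Chernoff bound (\cref{prop:chernoff-2}, with population $\{0,1\}^{\PCPRandComplexity}$; taking $(1+\delta)\mu$ equal to the max above gives $\delta \geq 1$, so $\delta^2 \geq \delta$ and $\delta\mu \geq \tfrac12\max(\cdots)$) this has probability at most $e^{-\ProofLength\log(2|\Alphabet|)} = (2|\Alphabet|)^{-\ProofLength}$; the degenerate case $w_\sigma = 0$ is trivial since then $\mathsf{cnt}_\sigma = 0$. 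A union bound over the $|\Alphabet|^{\ProofLength}$ strings $\sigma \in \Alphabet^{\ProofLength}$ then bounds the probability in the claim by $|\Alphabet|^{\ProofLength} \cdot (2|\Alphabet|)^{-\ProofLength} = 2^{-\ProofLength} = \negl(\secp)$.

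I do not expect a genuine obstacle: this is essentially the classical Kilian extraction analysis, and it carries over unchanged because the randomness it uses ($\vec{r}$) is classical and independent of the fixed target string, so the quantum nature of $\Extract$ enters only through the (possibly worse-than-classical) \emph{number} of recorded transcripts, which \cref{theorem:quantum-forking} already controls. The only points needing a little care are (a) that the threshold $k/(2n)$ in the bad event is itself a random variable, handled by the monotone replacement $k \mapsto \mathsf{cnt}_\sigma$ above before invoking Chernoff, and (b) that $2^{-\ProofLength} = \negl(\secp)$ uses that the PCP proof length is super-logarithmic in $\secp$, which holds for any PCP of interest (and can be arranged by padding).
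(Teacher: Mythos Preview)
Your proposal is correct and follows essentially the same approach as the paper: fix a candidate $\sigma$, observe that outputting $\sigma$ on the bad event forces at least $k \geq 6\ProofLength\log(2|\Alphabet|)$ of the i.i.d.\ uniform $r_j$'s to be accepting for $\sigma$ while $w_\sigma < k/(2n)$, bound this via multiplicative Chernoff by $(2|\Alphabet|)^{-\ProofLength}$, and union bound over $|\Alphabet|^{\ProofLength}$ strings. If anything, your treatment of the fact that $k$ is itself random (your point (a), replacing $k$ by $\mathsf{cnt}_\sigma$ before invoking Chernoff) is more careful than the paper's, which leaves this step implicit.
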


\begin{proof}
    We first argue that for any fixed string $\PCPProof^* \in \Alphabet^{\ProofLength}$ where $\WIN{\PCPVerifier}{\Instance}{\PCPProof^*} < k/(2n)$, we have:
    \[\Pr[\PCPProof^* = \PCPProof \wedge \PCPProof \gets \ARGExtractor^{\Malicious{\ARGProver}}] \leq (2|\Alphabet|)^{-\ProofLength}.\]
    The probability  $\ARGExtractor^{\Malicious{\ARGProver}}$ outputs such a $\PCPProof^*$ is upper bounded by the probability that for randomly sampled $(r_1,\dots,r_n)$, there \emph{exist} $k$ distinct $r_i$ such that $\PCPVerifier^{\PCPProof^*}(x;r_i) = 1$. For each $r_i$, the probability $\Pr[\PCPVerifier^{\PCPProof^*}(x;r_i) = 1] < k/(2n)$, so by a multiplicative Chernoff bound (\cref{prop:chernoff-2}) we have
    \begin{equation*}
	\Pr_{r_1,\dots,r_n}[\text{exists }k\text{ distinct }i\in[n]\text{ such that } \PCPVerifier^{\PCPProof^*}(x;r_i)=1] \leq e^{-k/6} = (2|\Alphabet|)^{-\ProofLength} \enspace.
    \end{equation*}
    
    A union bound over all $\PCPProof^*$ completes the proof: 
    \begin{align*}
        &\Pr[(\PCPProof \gets \ARGExtractor^{\Malicious{\ARGProver}}) \wedge (\PCPProof \neq \bot) \wedge  (\WIN{\PCPVerifier}{\Instance}{\PCPProof} < k/(2n))] \\
        &= \sum_{\PCPProof^*,\WIN{\PCPVerifier}{\Instance}{\PCPProof^*} < k/(2n) }\Pr[\PCPProof^* = \PCPProof \wedge \PCPProof \gets \ARGExtractor^{\Malicious{\ARGProver}}]\\
        &\leq |\Alphabet|^{\ProofLength}/(2|\Alphabet|)^{\ProofLength} = \negl(\secp). \qedhere
    \end{align*}
\end{proof}

By combining \cref{claim:kilian-low-abort-prob,claim:kilian-pcp-good} with the fact that $k/(2n) \geq \varepsilon/20$, we obtain 
\[\Pr[(\PCPProof \gets \ARGExtractor^{\Malicious{\ARGProver}}) \wedge (\WIN{\PCPVerifier}{\Instance}{\PCPProof} \geq \varepsilon/20)] \geq \Omega(\varepsilon) - \negl(\secp).\]

If $\PCPSystem$ has negligible soundness error, then this implies that $\varepsilon = \negl(\secp)$.

If $\PCPSystem$ is a proof of knowledge with negligible knowledge error $\PCPKnowledge = \negl(\secp)$ with witness extractor $\PCPExtractor$, then the following extractor achieves knowledge error $\KnowledgeError = \negl(\secp)$: run $\PCPProof \gets \ARGExtractor^{\Malicious{\ARGProver}}$ and output $\Witness \gets \PCPExtractor(\Instance,\PCPProof)$.

\doclearpage
\section*{Acknowledgements}

Part of this work was done while FM was visiting UC Berkeley and the Simons Institute for the Theory of Computing from Fall 2019 to Spring 2020.
AC is supported by the Ethereum Foundation.
FM thanks Justin Holmgren for helpful discussions.
NS is supported by DARPA under Agreement No.\ HR00112020023. NS thanks Dominique Unruh for helpful discussions.
\bibliographystyle{alpha}
\bibliography{references}

\end{document}
